\documentclass[a4paper,11pt]{article}
\usepackage[margin=2cm]{geometry}

\usepackage[utf8]{inputenc}
\usepackage{amsmath}
\usepackage{amssymb}
\usepackage{amsthm}
\usepackage{graphicx}
\usepackage{float}
\usepackage{xcolor}
\usepackage{cite}
\usepackage{microtype}
\usepackage[shortlabels]{enumitem}
\usepackage{hyperref}

\newtheorem{lemma}{Lemma}
\newtheorem{corollary}{Corollary}

\usepackage{thmtools, thm-restate}
\declaretheorem{theorem}

\newcommand{\softO}{\widetilde{O}}
\newcommand{\floor}[1]{\left\lfloor #1 \right\rfloor}

\newcommand{\dle}[4]{d_{#1}^{\leqslant{#2}}({#3},{#4})}

\let\epsilon\varepsilon

\usepackage[ruled,vlined,linesnumbered]{algorithm2e}
\title{Sparse Temporal Spanners with Low Stretch}

\author{
  Davide~Bilò\\
  Department of Information Engineering, Computer Science and Mathematics,\\ University of L'Aquila, Italy\\
  \texttt{davide.bilo@univaq.it}
  \and
  Gianlorenzo~D'Angelo\\
  Gran Sasso Science Institute, L'Aquila, Italy\\
  \texttt{gianlorenzo.dangelo@gssi.it}
  \and
  Luciano~Gualà\\
  Department of Enterprise Engineering, University of Rome ``Tor Vergata'', Italy\\
  \texttt{guala@mat.uniroma2.it}
  \and
  Stefano~Leucci\\
  Department of Information Engineering, Computer Science and Mathematics,\\ University of L'Aquila, Italy\\
  \texttt{stefano.leucci@univaq.it}
  \and
  Mirko~Rossi\\
  Gran Sasso Science Institute, L'Aquila, Italy\\
  \texttt{mirko.rossi@gssi.it}
}
\date{}

\begin{document}
\maketitle

\begin{abstract}
A \emph{temporal graph} is an undirected graph $G=(V,E)$ along with a function $\lambda : E \to \mathbb{N}^+$ that assigns a time-label to each edge in $E$.
A path in $G$ such that the traversed time-labels are non-decreasing is called \emph{temporal path}. Accordingly, the distance from $u$ to $v$ is the minimum length (i.e., the number of edges) of a temporal path from $u$ to $v$.
A temporal $\alpha$-spanner of $G$ is a (temporal) subgraph $H$ that preserves the distances between any pair of vertices in $V$, up to a \emph{multiplicative} stretch factor of $\alpha$. The \emph{size} of $H$ is measured as the number of its edges.

In this work, we study the size-stretch trade-offs of temporal spanners.
In particular we show that temporal cliques always admit a temporal $(2k-1)-$spanner with $\softO(kn^{1+\frac{1}{k}})$ edges, where $k>1$ is an integer parameter of choice.
Choosing $k=\lfloor \log n \rfloor$, we obtain a temporal $O(\log n)$-spanner with $\softO(n)$ edges that has almost the same size (up to logarithmic factors) as the temporal spanner given in [Casteigts et al., JCSS 2021] which only preserves temporal connectivity.

We then turn our attention to \emph{general} temporal graphs.  Since $\Omega(n^2)$ edges might be needed by any connectivity-preserving temporal subgraph [Axiotis et al., ICALP'16], we focus on approximating distances from a \emph{single source}. We show that $\softO(n/\log(1+\varepsilon))$ edges suffice to obtain a stretch of $(1+\varepsilon)$, for any small $\varepsilon > 0$. This result is essentially tight in the following sense: there are temporal graphs $G$ for which any temporal subgraph preserving exact distances from a single-source must use $\Omega(n^2)$ edges.
Interestingly enough, our analysis can be extended to the case of \emph{additive} stretch for which we prove an upper bound of $\softO(n^2 / \beta)$ on the size of any temporal $\beta$-additive spanner, which we show to be tight up to polylogarithmic factors.

Finally, we investigate how the \emph{lifetime} of $G$, i.e., the number of its distinct time-labels, affects the trade-off between the size and the stretch of a temporal spanner.
\end{abstract}

\section{Introduction}

A \emph{temporal graph} is a graph $G=(V,E)$ in which each edge can be used only in certain time instants. This recurrent idea of time-evolving graphs has been formalized in multiple ways, and a simple widely-adopted model is the one of Kempe, Kleinberg, and Kumar~\cite{KempeKK02}, in which each edge $e \in E$ has an assigned time-label $\lambda(e)$ representing the instant in which $e$ can be used. 
A path from a vertex to another in $G$ is said to be a \emph{temporal path} if the time-labels of the traversed edges are non-decreasing.
Accordingly, a graph is  \emph{temporally connected} if there exists a temporal path from $u$ to $v$, for every two vertices $u,v \in V$.

Notice that, unlike paths in \emph{static} graphs, the existence of temporal paths is neither symmetric nor transitive.\footnote{Indeed, a temporal path from $u$ to $v$ is not necessarily a temporal path from $v$ to $u$, even when $G$ is undirected. Moreover, the existence of a temporal path from $u$ to $v$, and of a  temporal path from $v$ to $w$, does not imply the existence of a temporal path from $u$ to $w$.}
For this reason, temporal graphs exhibit a different combinatorial structure compared to static graphs, and even problems that admit easy solutions on static graphs become more challenging in their temporal counterpart.
Indeed, one of the main problems introduced in the seminal paper of Kempe, Kleinberg, and Kumar~\cite{KempeKK02} is that of finding a sparse temporally connected subgraph $H$ of an input temporal graph $G$. Such a subgraph $H$ is sometimes referred to as a \emph{temporal spanner} of $G$.
While any spanning-tree is trivially a connectivity-preserving subgraph of a \emph{static} graph, not all temporal graphs $G$ admit a temporal spanner having $O(n)$ edges \cite{KempeKK02}. In particular, \cite{KempeKK02} exhibits a class of temporal graphs that contain $\Theta(n \log n)$ edges and cannot be further sparsified. Later, \cite{AxiotisF16} provided a stronger negative result showing that there are temporal graphs $G$ such that any temporal spanner of $G$ must use $\Theta(n^2)$ edges.
These strong lower bounds on general graphs motivated \cite{CasteigtsPS21} to focus on \emph{temporal cliques} instead. Here the situation improves significantly, as only $O(n \log n)$ edges are sufficient to guarantee temporal connectivity. This gives rise to the following natural question, which is exactly the focus of our paper: \emph{can one design a temporal spanner that also guarantees short temporal paths between any pair of vertices?}

To address this question, we measure the \emph{length} of a temporal path as the number of its edges,\footnote{Notice that, alternative definitions for the length of a temporal path are also natural, e.g., the \emph{arrival time}, \emph{departure time}, \emph{duration}, or \emph{travel time}. We discuss the corresponding distance measures in the conclusions.}
and we introduce
%
the notion of temporal $\alpha$-spanner of a temporal graph $G$, i.e., a subgraph $H$ of $G$ such that $d_H(u,v) \le \alpha \cdot d_G(u,v)$ for every pair of vertices $u,v \in V$, where $d_H(u,v)$ (resp. $d_G(u,v)$) denotes the length of a shortest temporal path from $u$ to $v$ in $H$ (resp. $G$).
Our main question then becomes that of understanding which trade-offs can be achieved between the \emph{size}, i.e., the number of edges, of $H$ and the value of its \emph{stretch-factor} $\alpha$.
This same question received considerable attention on static graphs and gave rise to a significant amount of work (see, e.g., \cite{AhmedBSHJKS20}), hence we deem investigating its temporal counterpart as a very interesting research direction.

To the best of our knowledge, the only temporal $\alpha$-spanner currently known
is actually the connectivity-preserving subgraph of \cite{CasteigtsPS21} having size $O(n \log n)$. However, a closer inspection of its construction shows that the resulting $\alpha$-spanner can have stretch $\alpha = \Theta(n)$. 
In particular, even the problem of achieving stretch  $o(n)$ using $o(n^2)$ edges remains open.

In this paper we investigate which size-stretch trade-offs can be attained by selecting subgraphs of temporal graphs, as detailed in the following.

\subsection{Our results}

\subparagraph*{Temporal cliques.}

Following~\cite{CasteigtsPS21}, we start by considering temporal cliques (see Section~\ref{sec:all_pair}). 
Our main result is the following: given a temporal clique $G$ and an integer $k \ge 2$, we can construct, in polynomial time, a temporal $(2k-1)$-spanner of $G$ having size $O(k n^{1+1/k} \log^{1-1/k} n)$.
Interestingly, the special case $k= \lfloor \log n \rfloor$ shows that 
$O(n \log^2 n)$ edges suffice to ensure that a temporal path of length $O(\log n)$ exists between any pair of vertices.  For this choice of $k$, the size of our spanner is  only a logarithmic factor away from the size the temporal spanner of \cite{CasteigtsPS21} that uses $O(n \log n)$ edges and only preserves connectivity.

We also show that there are temporal cliques for which any temporal spanner with stretch smaller than $3$ must have $\Omega(n^2)$ edges.

\subparagraph*{Single-source temporal spanners on general graphs.}

Next, in Section~\ref{sec:single_source}, we move our attention from temporal cliques to general temporal graphs. As already pointed out, there are temporal graphs that do not admit any connectivity-preserving subgraph with $o(n^2)$ edges \cite{AxiotisF16}. Hence, we consider the special case in which we have a single source $s$.
One can observe that any temporal graph $G$ admits a temporal subgraph containing $O(n)$ edges and preserving the connectivity from $s$ (see also \cite{KempeKK02}). However, to the best of our knowledge, no non-trivial result is known on the size of subgraphs preserving approximate distances from $s$. 

We formalize this problem by introducing the notion of  \emph{single-source temporal $\alpha$-spanner} of $G=(V,E)$ w.r.t.\ a source $s \in V$, which we define as a subgraph $H$ of $G$ such that $d_H(s,v) \le \alpha \cdot d_G(s,v)$ for every $v \in V$. 
Our main contribution for the single-source case is the following: given any temporal graph $G$, we can compute in polynomial time a single-source temporal $(1+\epsilon)$-spanner having size $O(\frac{n \log^4 n}{ \log(1+\epsilon)} )$, where $\varepsilon > 0$ is a parameter of choice.

Furthermore, we show that any single-source temporal $1$-spanner (i.e., a subgraph preserving \emph{exact} distances from $s$) must have $\Omega(n^2)$ edges in general. 
Our construction can be generalized to provide a lower bound of $\Omega(\frac{n^2}{\beta})$ on the size of any single-source temporal $\beta$-additive spanner, namely a subgraph $H$ that preserves single-source distances up to an \emph{additive} term of at most $\beta \ge 1$ (i.e., we require $d_H(s,v) \le d_G(s,v) + \beta$ for all $v \in V$).

Interestingly, the same techniques used to obtain our single-source temporal $(1+\varepsilon)$-spanner can be also applied to build a single-source temporal $\beta$-additive spanner of size $O(\frac{n^2 \log^4 n}{\beta} )$, which essentially matches our aforementioned lower bound.

\subparagraph*{The role of lifetime.}

An important parameter that measures how time-dependent is a temporal graph $G=(V,E)$ is its \emph{lifetime}, i.e., the number
$L$ of distinct time-labels associated with the edges of $G$.
Indeed, a temporal graph with lifetime $L=1$ is just a \emph{static} graph, while any temporal graph trivially satisfies $L=O(n^2)$. 
It is not surprising that the lifetime plays a crucial role in determining the number of edges required by temporal spanners.
For example, the lower bound of $\Omega(n^2)$ on the size of any connectivity-preserving temporal subgraph requires $L = \Omega(n)$ \cite{AxiotisF16}. 
In this paper, we also present a collection of results with the goal of shedding some light on the lifetime-size trade-off of temporal spanners. In particular, our results provide the following lifetime-dependant upper bounds on the size of temporal $\alpha$-spanners
\begin{itemize}
    \item As far as temporal cliques are concerned, we show how to build, in polynomial time, a temporal $3$-spanner with $O(2^L n \log n)$ edges. This implies that, when $L=O(1)$, we can achieve stretch $3$ with $\softO(n)$ edges.\footnote{The notation $\softO(f(n))$ is a synonym for $O(f(n) \cdot \operatorname{polylog} f(n))$.}
    \item If $L=2$, we can find (in polynomial time) a temporal $2$-spanner of a temporal clique having size $O(n \log n)$. 
    We deem this result interesting since, as soon as $L > 2$, our lower bound of $\Omega(n^2)$ on the size of any temporal $2$-spanner still applies.
    \item We show that, when $L$ is small, \emph{general} temporal graphs can be sparsified by exploiting known size-stretch trade-offs for spanners of static graphs. In particular, we show that if it is possible to compute, in polynomial time, an $\alpha$-spanner of a static graph having size $f(n)$, then one can also build a temporal $\alpha$-spanner of size $O(L f(n))$.
    This yields, e.g., a temporal $\lfloor \log n \rfloor$-spanner of size $o(n^2)$ on general temporal graphs with $L=o(n)$.

\end{itemize}

\subsection{Related work}
The definitions of temporal graphs and temporal paths given in the literature sometimes differ from the ones we adopt here. We now discuss how our results relate to some of the most common variants.
A first difference concerns the notion of temporal paths: some authors consider \emph{strict temporal paths} \cite{KempeKK02,CasteigtsPS21,AkridaGMS17}, i.e., temporal paths in which edge labels must be strictly increasing (rather than non-decreasing).
As observed by \cite{KempeKK02}, if we adopt strict temporal paths then there are dense graphs that cannot be sparsified, indeed no edge can be removed from a temporal clique in which all edges have the same time-label. 
As observed in~\cite{CasteigtsPS21}, one can get rid of these problematic instances by assuming that time-labels are \emph{locally distinct}, namely that all the time-labels of the edges incident to any single vertex are distinct.
In this case all temporal paths are also strict temporal paths and hence they focus on temporal paths as defined in our paper.
A second difference concerns whether edges are allowed to have multiple time-labels, as in \cite{MertziosMS19, AkridaGMS17}. In this case, each edge $e$ is associated to a non-empty set of time instants $\lambda(e) \subseteq \mathbb{N}^+$ in which $e$ is available.
We observe that any algorithm that sparsifies a temporal clique with single time-labels can be directly used on the case of multiple time-labels by selecting an arbitrary time-label for each edge (see also the discussion in~\cite{CasteigtsPS21}).
This is no longer true when we consider general temporal graphs, since removing edge labels might affect distances.
However, all our algorithms work also in the case of multiple labels and, since our lower bounds are given for single labels, they also apply to the case of multiple labels.

Another research line concerns random temporal graphs.
In particular, temporal cliques in which each edge has a single time-label chosen u.a.r.\ from the set $\{1, \dots, \alpha\}$, where $\alpha \ge 4$, admit temporal spanners with $O(n \log n)$ edges w.h.p.\ \cite{AkridaGMS17}.
In \cite{CasteigtsRRZ20}, the authors study connectivity properties of random temporal graphs defined as an Erd\H{o}s-R\'enyi graph $G_{n,p}$ in which each edge $e$ has time-label chosen as the rank of $e$ in a random permutation of the graph's edges.
They show that $p=\frac{\log n}{n}$, $p=\frac{2\log n}{n}$, $p=\frac{3\log n}{n}$, and $p=\frac{4\log n}{n}$ are sharp thresholds to guarantee that the resulting temporal graph $G$ satisfies the following respective conditions asymptotically almost surely: 
a fixed pair of vertices can reach each other via temporal paths in $G$, 
there is some vertex $s$ which can reach all other vertices in $G$ via temporal paths, $G$ is temporally connected, $G$ and admits a temporal spanner with $2n-4$ edges (which is tight when time-labels are locally distinct).

Besides temporal graphs, other models to represent graphs or paths that evolve over time have been considered in the literature,  we refer the interested reader to~\cite{Holme18} for a survey. 

Finally, as we already mentioned, there is a large body of literature concerning spanners on static graphs, see \cite{AhmedBSHJKS20} for a survey on the topic.
A reader that is already familiar with the area might notice that our upper bound of $\softO(n^{1+\frac{1}{k}})$ on the size of a temporal $(2k-1)$-spanner of a \emph{temporal clique}, happens to resemble the classical upper bound of $O(n^{1+\frac{1}{k}})$ on the size of a $(2k-1)$-spanner of a \emph{general static graph} \cite{AlthoferDDJS93}.
Nevertheless, the first result only applies to complete (temporal) graphs and is obtained using different technical tools.

\section{Model and preliminaries}
\label{sec:model}
Let $G=(V,E)$ be an undirected \emph{temporal} graph with $n$ vertices, and a labeling function $\lambda : E \rightarrow \mathbb{N}^+$ that assigns a \emph{time-label} $\lambda(e)$ to each edge $e$.  If $G$ is complete we will say that it is a \emph{temporal clique}.
A temporal path $\pi$ from vertex $u$ to vertex $v$ is a path in $G$ from $u$ to $v$ such that the sequence $e_1, e_2, \dots, e_k$ of edges traversed by $\pi$ satisfies $\lambda(e_i) \le \lambda(e_{i+1})$ for all $i=1, \dots, k-1$.
We denote with $|\pi|$ the length of the $\pi$, i.e., the number of its edges. A shortest temporal path from vertex $u$ to vertex $v$ is a temporal path from $u$ to $v$ with minimum length. We denote with $d_G(u,v)$, the length of a shortest temporal path from $u$ to $v$ in $G$.
Given a generic graph $H$, we denote by $V(H)$ its vertex-set and by $E(H)$ its edge-set. 

For $\alpha \geq 1$ and $\beta \geq 0$, a temporal $(\alpha,\beta)$-spanner of $G$ is a (temporal) subgraph $H$ of $G$ such that $V(H)=V$ and $d_{H}(u,v) \leq \alpha \cdot d_{G}(u,v) + \beta$, for each $u,v \in V$.
We call a temporal $(\alpha,\beta)$-spanner: (i) temporal $\alpha$-spanner if $\beta = 0$, (ii) temporal $\beta$-additive spanner if $\alpha = 1$, (iii) temporal preserver if $\alpha = 1$ and $\beta = 0$.
We say that $H$ is a \emph{single-source} temporal $(\alpha,\beta)$-spanner w.r.t.\ a vertex $s \in V$, if $d_{H}(s,v) \leq \alpha \cdot d_{G}(s,v) + \beta $, for each $v \in V$.
The \emph{size} of a temporal spanner is the number of its edges.

We define the \emph{lifetime} $L$  of $G$ as the number of distinct time-labels of its edges. 
Furthermore, we assume w.l.o.g.\ that each time instant in $\{1, \dots, L\}$ is used by at least one time-label (since otherwise we can replace each time-label with its rank in the set $\{ \lambda(e) \mid e \in E \}$), so that $L = \max_{e \in E} \lambda(e)$. 

We will make use of the following well-known result (whose proof is provided for the sake of completeness). 

\begin{lemma}
\label{lemma:hitting_set}
Given a collection $\mathcal{S}$ of subsets of $\{1, \ldots , n\}$, where each subset has size at least $\ell$ and $|\mathcal{S}|$ is polynomially bounded in $n$, we can find in polynomial time a subset $R \subseteq \{1, \ldots , n\}$ of size $O((n/\ell) \log n)$ that hits all subsets in the collection, i.e., $R \cap S \neq \emptyset$ for all $S \in \mathcal{S}$. 
\end{lemma}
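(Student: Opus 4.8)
The plan is to use the standard probabilistic/greedy argument for hitting sets, which directly gives both the size bound and the polynomial-time guarantee. First I would consider the randomized construction: form $R$ by independently including each element of $\{1,\dots,n\}$ with probability $p = \min\{1, (c \ln n)/\ell\}$ for a suitable constant $c$. For a fixed set $S \in \mathcal{S}$ with $|S| \ge \ell$, the probability that $R$ misses $S$ entirely is $(1-p)^{|S|} \le (1-p)^{\ell} \le e^{-p\ell} \le e^{-c\ln n} = n^{-c}$. Since $|\mathcal{S}|$ is polynomially bounded, say $|\mathcal{S}| \le n^{d}$, a union bound over all sets in $\mathcal{S}$ shows that with probability at least $1 - n^{d-c} > 0$ (choosing $c > d$) the set $R$ hits every $S \in \mathcal{S}$; moreover, by a Chernoff bound, $|R| = O((n/\ell)\log n)$ with high probability. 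Hence such an $R$ exists.

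To get a deterministic polynomial-time algorithm, I would instead run the classical greedy set-cover-style procedure, which is the cleaner route to state. Maintain the collection of still-unhit sets; repeatedly pick the element $i \in \{1,\dots,n\}$ that lies in the largest number of currently-unhit sets, add $i$ to $R$, and remove all sets it hits. A counting argument shows that at every step the best element hits at least a $\ell/n$ fraction of the remaining unhit sets: indeed, if $m$ sets remain, each has at least $\ell$ elements, so summing set-membership over the $n$ ground elements gives total incidence at least $m\ell$, whence some element appears in at least $m\ell/n$ of them. Thus each iteration shrinks the number of unhit sets by a factor $(1 - \ell/n)$, so after $t = \lceil (n/\ell)\ln|\mathcal{S}|\rceil$ iterations fewer than $|\mathcal{S}| \cdot (1-\ell/n)^{t} \le |\mathcal{S}| e^{-\ell t/n} \le 1$ sets remain, i.e., none do. Since $\log|\mathcal{S}| = O(\log n)$, we get $|R| = O((n/\ell)\log n)$, and each iteration is clearly polynomial in $n$ and $|\mathcal{S}|$.

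There is essentially no serious obstacle here; the only point requiring a little care is making sure the counting bound "some element hits a $\ge \ell/n$ fraction of remaining sets" is applied correctly even in late iterations when few sets remain (it holds regardless of $m$, since the averaging argument is scale-free), and confirming that $\log|\mathcal{S}|$ is absorbed into the $O(\log n)$ term using the hypothesis that $|\mathcal{S}|$ is polynomial in $n$. I would present the greedy version as the main proof and optionally remark that the randomized argument gives an alternative.
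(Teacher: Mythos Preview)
Your greedy argument is correct and is essentially identical to the paper's proof: the paper runs exactly the same iterative procedure, uses the same averaging step to conclude that the chosen element hits at least an $\ell/n$ fraction of the remaining sets, and bounds the number of iterations by $\frac{n}{\ell}\ln|\mathcal{S}| = O((n/\ell)\log n)$. The randomized argument you sketch first is a fine alternative for existence but is not in the paper; since you already present the greedy version as the main proof, your write-up matches the paper's approach.
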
 
\begin{proof}
Consider an iterative greedy algorithm that builds a sequence of partial hitting sets $R_i$ while keeping track of number $n_i$ of sets $S \in \mathcal{S}$ that are not hit by $R_i$.
Initially $R_0 = \emptyset$ and $n_0 = |\mathcal{S}|$.
In the generic $i$-th iteration, the algorithm finds an element $j \in \{1, \dots, n\}$ maximizing the number $c_i(j)$ of sets $S \in \mathcal{S}$ such that $j \in S$ and $S \cap R_{i-1} = \emptyset$ and adds it to the next partial hitting set, i.e., it sets $R_i = R_{i-1} \cup \{j\}$ and $n_i = n_{i-1} - c_i(j)$.  The algorithm stops as soon as $n_i =0$ and returns $R_i$.

In the rest of the proof we show that the number of iterations of the algorithm is at most $\frac{n}{\ell} \ln \mathcal{|S|} = O(\frac{n}{\ell} \log n)$, thus simultaneously bounding the running time of the algorithm (notice that each iteration can be performed in polynomial-time) and the size of the returned hitting set.  
At the beginning of the $i$-th iteration, there are $n_{i-1}$ sets in $\mathcal{S}$ that are not hit by $R_{i-1}$  and each occurrence of an element $j'$ in any such set contributes $1$ to $c_i(j')$. Since each set contains at least $\ell$ elements, we have $\sum_{j'=1}^n c_i(j') \ge n_{i-1}\ell$ implying that $c_i(j) = \max_{j'=1, \dots, n} c_i(j') \ge \frac{1}{n} \sum_{j'=1}^n c_i(j') \ge \frac{n_{i-1} \ell}{n}$ and hence $n_i = n_{i-1} -  c_i(j) \le n_{i-1} \cdot (1-\frac{\ell}{n}) \le n_0  \cdot (1-\frac{\ell}{n})^i = \mathcal{|S|} \cdot \big((1-\frac{\ell}{n})^n\big)^{i/n} \le |\mathcal{S}| \cdot e^{-\ell i / n}$.
As a consequence, there must be some
$i \le \frac{n}{\ell} \ln |\mathcal{S}|$ such that $n_i = 0$. Indeed, for $i > \frac{n}{\ell} \ln |\mathcal{S}|$, we have $|\mathcal{S}| \cdot e^{-\ell i / n} < 1$.
\end{proof}

\section{Spanners for temporal cliques}
\label{sec:all_pair}

In this section, we design an algorithm such that, given a temporal clique $G$, returns a temporal $(2k-1)$-spanner $H$ of $G$ with size $\softO(n^{1+\frac{1}{k}})$, for any integer $k>1$. We also provide a temporal clique $G$ for which any temporal $2$-spanner of $G$ has size $\Omega(n^2)$.

Before describing the algorithm for constructing temporal $(2k-1)$-spanners, we show as a warm up how to construct a temporal $3$-spanner and a temporal $5$-spanner of size $\softO(n^{1+\frac{1}{2}})$ and $\softO(n^{1+\frac{1}{3}})$, respectively.
\subsection{Our temporal \texorpdfstring{\boldmath $3$}{3}-spanner}
\begin{figure}
    \centering
    \includegraphics[scale=.9]{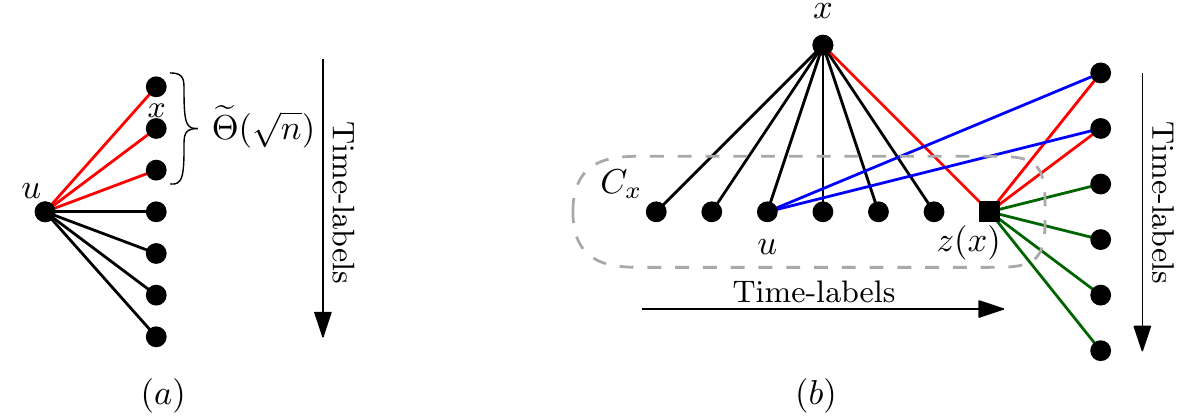}
    \caption{(a) A vertex $u$ and its neighbours in $G$, the edges are sorted top-down by increasing time-label. The red edges are those belonging to $E_u$ while the black edges belong to $E(G) \setminus E_u$. (b) An example of a cluster $C_x$ where $x \in R$. The edges incident to $x$ are sorted from left to right by increasing-time label. The black and red edges are added to $E(H)$ during the initialization phase, in particular the red edges are those belonging to $E_{z(x)}$. The blue edges are added w.r.t.\ $u$ to $E(H)$ during the first augmentation. The green edges belong to $E(G) \setminus E_{z(x)}$ and are added to $E(H)$ during the second augmentation.}
    \label{fig:clique_3spanner}
\end{figure}
\label{sec:3-spanner}
Given a temporal clique $G$, we construct a temporal $3$-spanner $H$ of $G$ via a clustering technique. 
For each $u \in V$, we select a set $E_u$ containing the $\Theta(\sqrt{n \log n})$ edges incident to $u$ having the smallest labels (ties are broken arbitrarily). We define $S_u = \{v \in V \mid (u,v) \in E_u\}$. 
Next, we find a hitting set $R \subseteq V$ of the collection $\{S_u\}_{u \in V}$. 
Thanks to Lemma~\ref{lemma:hitting_set}, we can deterministically compute a hitting set of size $|R|=O(\sqrt{n \log n})$.

We partition the vertices of $V$ into $|R|$ clusters. More precisely, we create a cluster $C_x \subseteq V$ for each vertex $x \in R$. Each vertex $u \in V$ belongs to exactly one arbitrarily chosen cluster $C_x$ that satisfies $x \in S_u$, i.e., $x$ hits $S_{u}$.
We call $x$ the center of cluster $C_x$.\footnote{Here and throughout the paper, the center of a cluster is not required to belong to the cluster itself.} Moreover, we choose the special vertex of cluster $C_x$ as a vertex $z(x)$ in $C_x$ that maximizes the label of the edge $(x,z(x))$.

Notice that, for every $x \in R$ and $u \in C_x$, $u$ can reach $z(x)$ via a temporal path of length at most $2$ in $G$ by using the edges $(u,x)$ and $(x,z(x))$ since, by definition of $z(x)$ and $S_u$, we have $\lambda(u,x) \leq \lambda(x,z(x))$.

We now build our temporal spanner $H$ of $G$. The set of edges $E(H)$ is constructed in three phases (See Figure~\ref{fig:clique_3spanner} for an example of the whole construction):
\begin{description}
\item[Initialization:] For each $u \in V$, we add the edges in $E_u$ to $E(H)$;
\item[First Augmentation:] For every $u \in V$, we add the edges in $E_{u,z(x)} = \{u\} \times S_{z(x)}$ to $E(H)$, where $x$ is the center of the cluster containing $u$;
\item[Second Augmentation:] For each $x \in R$, we add the edges in $E_{z(x),V} = \{z(x)\} \times V$ to $E(H)$.
\end{description}
It is easy to see that $H$ contains $O(n \sqrt{n \log n})$ edges. We now show that for any $u,v \in V$ there is a temporal path from $u$ to $v$ of length at most $3$ in $H$. Indeed, let $x \in R$ be the center of the cluster $C_x$ containing $u$. 
If $v=z(x)$ then, since $u \in C_x$, the initialization phase ensures that $(u, x) \in E(H)$ and $(x, z(x)) \in E(H)$, which form a temporal path as we already discussed above. 
We hence assume that $v\neq z(x)$.
If $(z(x),v) \in E_{z(x)}$ then the first augmentation phase added $(u,v) \in E_{u,z(x)}$ to $E(H)$, which is a temporal path of length one from $u$ to $v$. 
Otherwise $(z(x),v) \in E(G) \setminus E_{z(x)}$ and,  the second augmentation phase added edge $(z(x),v)$ to $E(H)$. Moreover, since $(z(x),v) \not\in E_{z(x)}$, $(z(x),v)$ is not among the $\Theta(\sqrt{n\log n})$ edges incident to $z(x)$ with lowest labels. As a consequence, since $(x,z(x)) \in E_{z(x)}$, we have $\lambda(x,z(x)) \leq \lambda(z(x),v)$. Hence, the edges $(u,x)$, $(x,z(x))$, and $(z(x),v)$ form a temporal path of length $3$ from $u$ to $v$ in $H$.

\subsection{Our temporal \texorpdfstring{\boldmath $5$}{5}-spanner}
\label{sec:5-spanner}
\begin{figure}
    \centering
    \includegraphics[width=\textwidth]{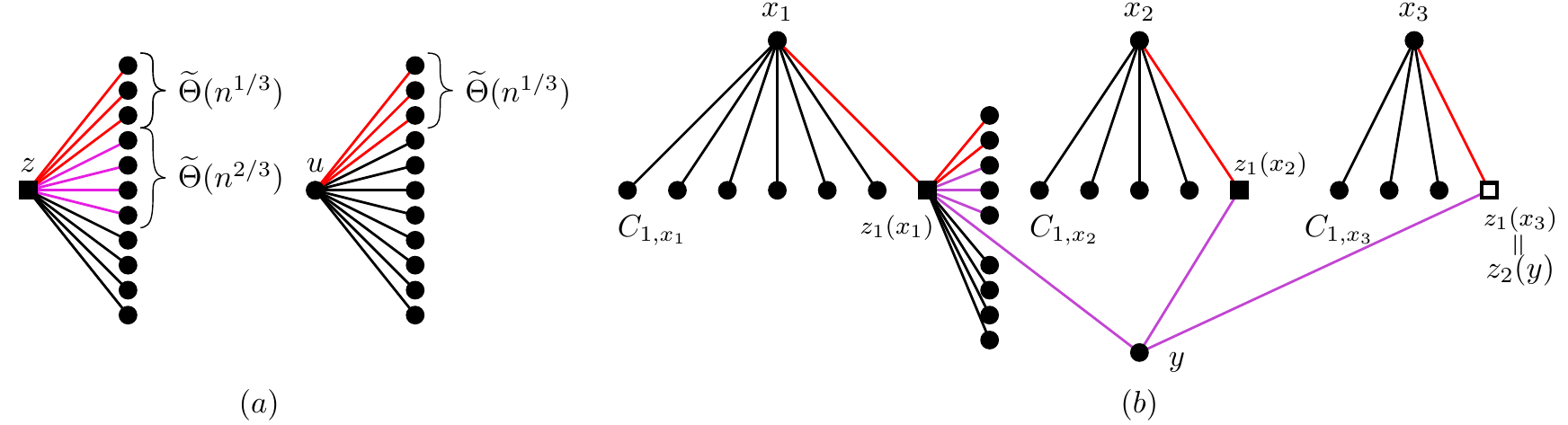}
    \caption{(a) Two vertices $u$ and $z$ of $G$, where $z \in Z_1$ and the red edges belong to $E_{1,u}$ and $E_{1,z}$, respectively. For vertex $z$ the purple edges belong to $E_{2,z}$. (b) A two level clustering. The level one consists in three cluster $C_{1,x_1}$, $C_{1,x_2}$, $C_{2,x_3}$. The level-two cluster $C_{2,y}$, with $y \in R_2$, contains vertices $z_1(x_1)$, $z_1(x_2)$ and $z_1(x_3)$, where $z_2(y) = z_1(x_3)$.}
    \label{fig:2levelClustering}
\end{figure}
 We show how to modify the construction of a temporal $3$-spanner given in previous section in order to obtain a temporal $5$-spanner of size $\softO(n^{4/3})$. The idea is to replace the single-level clustering of Section \ref{sec:3-spanner} with a two-level clustering, where the second-level clustering partitions the special vertices of the first level clustering and the number of selected clusters decreases as we move from the first level to the second one.

The level-one clustering is built similarly to the one used in our temporal $3$-spanner. For each vertex $u\in V$ we define sets $E_{1,u}$ and $S_{1,u}$ where $E_{1,u}$ consists of the $\Theta( n^{1/3} \log^{2/3} n )$ edges with the smallest label among those incident to $u$ (ties are broken arbitrarily) and $S_{1,u}= \{v \in V \mid (u,v) \in E_{1,u}\}$. We compute a hitting set $R_1$ of the collection $\{S_{1,u}\}_{u\in V}$, where $R_1$ has size $O(n^{2/3} \log^{1/3} n)$ thanks to Lemma~\ref{lemma:hitting_set}. We partition the vertices of $V$ into $|R_1|$ clusters $C_{1,x}$, for each $x\in R_1$, as before, and let $z_1(x)$ the vertex in $C_{1,x}$ that maximizes the label of the edge $(x,z_1(x))$.

The level-two clustering is built on top of the vertices $Z_1 = \{z_1(x) \mid x \in R_1\}$.
For each $u \in Z_1$, we define $E_{2,u}$ as a set of $\Theta( n^{2/3} \log^{1/3} n)$ edges with the smallest label among those that are incident to $u$ but do not belong to $E_{1,u}$. We also define a corresponding set $S_{2,u} = \{v \in V \mid (u,v) \in E_{2,u}\}$. 
We once again invoke Lemma~\ref{lemma:hitting_set} to compute a hitting set $R_2$ of size $O(n^{1/3} \log^{2/3} n)$
of the collection $\{S_{2,u}\}_{u \in Z_1}$. Based on $R_2$, we partition the special vertices in $Z_1$ by associating each $u \in Z_1$ to an arbitrary cluster $C_{2,x}$ centered in $x \in R_2$ such that $x \in S_{2,u}$.
Each cluster $C_{2,x}$ has an associated special vertex $z_2(x) \in C_{2,x}$ chosen among the ones that maximize the label of the edge $(x,z_2(x))$, see Figure~\ref{fig:2levelClustering}.

We are now ready to build our temporal $5$-spanner $H$.
As before, the set of edges $E(H)$ is constructed in three phases:
\begin{description}
\item[Initialization:] For each $u \in V$, we add the edges in $E_{1,u}$ to $E(H)$ and, for each $u \in Z_1$, we add the edges in $E_{2,u}$ to $E(H)$;
\item[First Augmentation:] For every $u \in V$, we add the edges in $E_{u,z_1(x)} = \{u\} \times S_{1,z_1(x)}$ to $E(H)$, where $x$ is the center of the cluster containing $u$. Moreover, for each $z \in Z_1$ we add the edges in $\{z\} \times (S_{1,z_2(x)} \cup S_{2,z_2(x)})$ to $E(H)$, where $x$ is the center of the level-two cluster $C_{2,x}$ containing $z$;
\item[Second Augmentation:] For each $x \in R_2$, we add the set $\{z_2(x)\} \times V$ to $E(H)$.
\end{description}

See Figure~\ref{fig:5cliqueConstruction} for an example of the whole construction. We now show that $H$ is a $5$-spanner of size  $O(n^{4/3} \log^{2/3} n)$.

\begin{lemma}
Let $u,v \in V$. There is a temporal path from $u$ to $v$ of length at most $5$ in $H$.
\end{lemma}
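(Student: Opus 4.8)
The plan is to follow the same template as the $3$-spanner argument, now with two clustering levels. Fix $u,v\in V$ and introduce the relevant ``ancestors'' of $u$: let $x\in R_1$ be the center of the level-one cluster containing $u$, let $z=z_1(x)\in Z_1$, let $y\in R_2$ be the center of the level-two cluster containing $z$, and let $w=z_2(y)$. I will first show that the ``backbone'' $u\to x\to z\to y\to w$ is a temporal path of length $4$ in $H$, and then reach $v$ with at most one more edge — or, when $v$ is ``close'' to $z$ or to $w$, via a much shorter detour provided by the augmentation steps. The case split will be on whether $v\in S_{1,z}$, whether $v\in S_{1,w}\cup S_{2,w}$, or neither.

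For the backbone I would record three label inequalities: $\lambda(u,x)\le\lambda(x,z)$ since $u,z\in C_{1,x}$ and $z=z_1(x)$ maximizes the label towards $x$ over $C_{1,x}$; $\lambda(x,z)\le\lambda(z,y)$ since $(x,z)\in E_{1,z}$ (because $z\in C_{1,x}$ forces $x\in S_{1,z}$) while $(z,y)\in E_{2,z}$ (because $z\in C_{2,y}$ forces $y\in S_{2,z}$), and every edge of $E_{1,z}$ has label at most that of any edge of $E_{2,z}$ by construction; and $\lambda(z,y)\le\lambda(y,w)$ since $z,w\in C_{2,y}$ and $w=z_2(y)$ maximizes the label towards $y$ over $C_{2,y}$. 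All four backbone edges belong to $H$: $(u,x)\in E_{1,u}$ and $(x,z)\in E_{1,z}$ are inserted at initialization, and $(z,y)\in E_{2,z}$, $(y,w)\in E_{2,w}$ are inserted at initialization too because $z,w\in Z_1$.

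Then I would branch. If $v\in S_{1,z}$, the first augmentation inserted $(u,v)\in\{u\}\times S_{1,z_1(x)}$ into $H$, a temporal path of length $1$. If $v\notin S_{1,z}$ but $v\in S_{1,w}\cup S_{2,w}=S_{1,z_2(y)}\cup S_{2,z_2(y)}$, the level-two part of the first augmentation inserted $(z,v)$ into $H$; since $v\notin S_{1,z}$ we have $(z,v)\notin E_{1,z}$, hence $\lambda(x,z)\le\lambda(z,v)$ (as $(x,z)\in E_{1,z}$), so $u\to x\to z\to v$ is a temporal path of length $3$. Otherwise $v\notin S_{1,w}\cup S_{2,w}$, so the second augmentation inserted $(w,v)\in\{z_2(y)\}\times V$ into $H$, and since $(w,v)\notin E_{1,w}\cup E_{2,w}$ while $(y,w)\in E_{2,w}$ we get $\lambda(y,w)\le\lambda(w,v)$; appending this edge to the backbone yields a temporal path of length $5$. (Coincidences of $v$ with $x,z,y,w$ only shorten these walks, and in any case a temporal walk can be shortcut to a temporal path of no greater length.)

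The label bookkeeping is routine; the only genuine idea is that appending a single edge to a length-$4$ backbone is not always possible, and one must instead use the first augmentation at two different ``scales'' (from $u$ into $S_{1,z}$ and from $z$ into $S_{1,w}\cup S_{2,w}$) together with the second augmentation from $w$. The step I expect to require the most care is checking, in each branch, that the edge handed back by the corresponding augmentation is both present in $H$ and carries a label large enough to extend the temporal path built so far; the nesting property that $E_{1,\cdot}$-labels are dominated by $E_{2,\cdot}$-labels, which in turn are dominated by all remaining labels at the same vertex, is exactly what makes every branch close.
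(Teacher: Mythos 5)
Your proof is correct and follows essentially the same approach as the paper: build the length-$4$ backbone $u\to x\to z_1(x)\to y\to z_2(y)$ and then branch on whether $v$ lies in $S_{1,z_1(x)}$, in $S_{1,z_2(y)}\cup S_{2,z_2(y)}$, or in neither, extending by a first-augmentation edge, a first-augmentation edge from $z_1(x)$, or a second-augmentation edge respectively. Your write-up is in fact a little cleaner than the paper's, which contains a couple of index typos in the second branch.
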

\begin{proof}
Let $x \in R_1$ be the center of the level-one cluster $C_{1,x}$ containing $u$ and $y \in R_2$ be the center of the level-two cluster $C_{2,y}$ containing $z_1(x)$. 

We first show that in $H$ there exists a temporal path $\pi$ of length $4$ from $u$ to $z_2(y)$ consisting of the sequence of edges $(u,x)$, $(x,z_1(x))$, $(z_1(x),y)$, $(y,z_2(y))$. Notice that, the edges $(u,x)$, $(x,z_1(x))$, $(z_1(x),y)$, $(y,z_2(y))$ belong to $ E_{1,u}$, $E_{1,z_1(x)}$, $E_{2,z_1(x)}$, and $E_{2,z_2(y)}$, respectively. Moreover, the initialization phase ensures that they all belong to $E(H)$. Then, by definition of $z_1(x)$, we have $\lambda(u,x) \leq \lambda(x,z_1(x))$. Moreover, since $(x,z_1(x)) \in E_{1,z_1(x)}$ and $(z_1(x),y) \in E_{2,z_1(x)}$, then $\lambda(x,z_1(x)) \leq \lambda(z_1(x),y)$. Finally, $(y, z_2(y)) \in E_{2,z_2(y)}$ and, by definition of $z_2(y)$, we have $\lambda(z_1(x),y) \leq \lambda(y,z_2(y))$.

If $v = z_2(x)$, then $u$ can reach $v$ via a temporal path of length $4$ in $H$, by using $\pi$. Moreover, if $v=z_1(x)$ then $u$ can reach $v$ via a temporal path of length $2$ by using the subpath $\pi_1$ of $\pi$ consisting of the edges  $(u,x)$ and $(x,z_1(x))$.
Otherwise we are in one of the following three cases:
\begin{itemize}
    \item If $(z_1(x),v) \in E_{1,z_1(x)}$, then $v \in S_{1,z_1(x)}$ and, due to the first augmentation phase, we have that $(u,v) \in E(H)$.
    \item If $(z_2(y),v) \in E_{1,z_2(y)} \cup E_{2,z_1(x)}$, then vertex $v \in S_{1,z_2(y)} \cup S_{2,z_1(x)}$ and the first augmentation phase ensures that $(z_1(x), v) \in E(H)$. Moreover, since$(z_1(x),v) \not\in E_{1,z_1(x)}$, we have that $\lambda(x,z_1(x)) \leq \lambda(z_1(x),v)$. Hence the concatenation of $\pi_1$ with the edge $(z_1(x),v)$ yields a temporal path of length $3$ from $u$ to $v$ in $H$.
    \item If $(z_2(y),v) \not\in (E_{1,z_2(y)} \cup E_{2,z_1(x)})$, the second augmentation phase ensures that $(z_2(y),v) \in E(H)$. Moreover, $\lambda(y,z_2(y)) \leq \lambda(z_2(y),v)$. Therefore the concatenation of $\pi$ with the edge $(z_2(y),v)$ yields a temporal path of length $5$ from $u$ to $v$ in $H$. \qedhere
\end{itemize}
\end{proof}

\begin{figure}
    \centering
    \includegraphics[scale=.9]{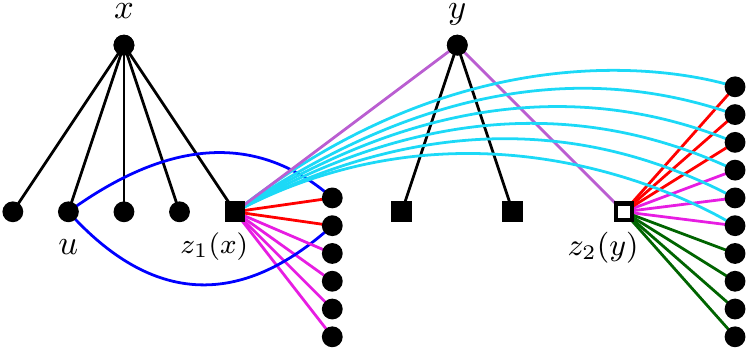}
    \caption{An example of a two-level cluster and of the edges added to $E(H)$ during the spanner construction. The black, red and purple edges are added during initialization phase. In particular, for every $u \in V$, the red edges are those in $E_{1,u}$ and, for every $z \in Z_1$, the purple edges are those in $E_{2,z}$. The dark blue and light blue edges are those added to $E(H)$ during the first augmentation phase. The green edges are the edges added to $E(H)$ during the second augmentation phase.}
    \label{fig:5cliqueConstruction}
\end{figure}

\begin{lemma}
\label{lemma:size_5spanner}
The size of $H$ is $O(n^{4/3} \log^{2/3} n)$.
\end{lemma}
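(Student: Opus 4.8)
The plan is to account separately for the edges contributed by each of the three construction phases, summing the resulting bounds. First I would handle the \textbf{initialization phase}. For every $u \in V$ we add $|E_{1,u}| = \Theta(n^{1/3}\log^{2/3}n)$ edges, for a total of $O(n^{4/3}\log^{2/3}n)$ edges. For every $u \in Z_1$ we add $|E_{2,u}| = \Theta(n^{2/3}\log^{1/3}n)$ edges; since $|Z_1| \le |R_1| = O(n^{2/3}\log^{1/3}n)$, this contributes $O(n^{2/3}\log^{1/3}n \cdot n^{2/3}\log^{1/3}n) = O(n^{4/3}\log^{2/3}n)$ edges as well.

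Next I would bound the \textbf{first augmentation phase}. For every $u \in V$ we add the $|S_{1,z_1(x)}| = |E_{1,z_1(x)}| = \Theta(n^{1/3}\log^{2/3}n)$ edges of $\{u\}\times S_{1,z_1(x)}$, giving $O(n \cdot n^{1/3}\log^{2/3}n) = O(n^{4/3}\log^{2/3}n)$ edges over all $u$. For every $z \in Z_1$ we add the edges of $\{z\}\times(S_{1,z_2(y)} \cup S_{2,z_2(y)})$, where $|S_{1,z_2(y)} \cup S_{2,z_2(y)}| \le |E_{1,z_2(y)}| + |E_{2,z_2(y)}| = \Theta(n^{1/3}\log^{2/3}n) + \Theta(n^{2/3}\log^{1/3}n) = \Theta(n^{2/3}\log^{1/3}n)$; multiplying by $|Z_1| = O(n^{2/3}\log^{1/3}n)$ yields $O(n^{4/3}\log^{2/3}n)$ edges.

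Finally, the \textbf{second augmentation phase} adds, for each $x \in R_2$, the $n$ edges of $\{z_2(x)\}\times V$; since $|R_2| = O(n^{1/3}\log^{2/3}n)$, this contributes $O(n^{1/3}\log^{2/3}n \cdot n) = O(n^{4/3}\log^{2/3}n)$ edges. Summing the contributions of all three phases gives a total of $O(n^{4/3}\log^{2/3}n)$ edges, as claimed. I do not expect any real obstacle here: the proof is a routine bookkeeping exercise, and the only point requiring a little care is making sure the cardinality bounds on the hitting sets $R_1, R_2$ (and hence on $Z_1$) from Lemma~\ref{lemma:hitting_set} are applied consistently — in particular that $|R_2| = O(n^{1/3}\log^{2/3}n)$ rather than something larger — so that every phase balances out to the same $O(n^{4/3}\log^{2/3}n)$ bound.
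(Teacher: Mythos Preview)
Your proposal is correct and follows essentially the same phase-by-phase accounting as the paper's own proof, with the same cardinality bounds on $R_1$, $R_2$, and $Z_1$ from Lemma~\ref{lemma:hitting_set}. If anything, your bound $|S_{1,z_2(y)} \cup S_{2,z_2(y)}| \le |E_{1,z_2(y)}| + |E_{2,z_2(y)}|$ is stated a bit more cleanly than the paper's version, but the argument is otherwise identical.
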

\begin{proof}
During the initialization phase, for each $u \in V$, we add the $O(n^{1/3}\log^{2/3}n)$ edges in $E_{1,v}$  to $E(H)$.
Moreover, for each $u \in Z_1$, we add the $O(n^{2/3}\log^{1/3}n)$ edges in $E_{2,v}$ to $E(H)$ . Since $|Z_1| = O( n^{2/3} \log^{1/3} n )$, the total number of edges added during this phase is $O(n^{4/3} \log^{2/3} n)$.

During the first augmentation phase, for each $u \in V$  we add the edges in $E_{1,z_1(x)}$ to $E(H)$, where $x$ is the center of the cluster $C_{1,x}$ containing $u$.
Moreover, for each $u \in Z_1$ we add the edges in $\{u\} \times (S_{1,z_2(x)} \cup S_{2,z_2(x)})$ to $E(H)$, where $x$ is the center of the cluster $C_{2,x}$ containing $u$. Since $|E_{1,z_1(x)}| = O(n^{1/3}\log^{2/3}n)$, and $|S_{1,z_2(x)} | \le |Z_1| = | S_{2,z_2(x)}| = O(n^{2/3}\log^{1/3}n)$, the total number of edges added during this phase is $O(n^{4/3} \log^{2/3} n)$.

Finally, during the second augmentation phase, for each $ x \in R_2$, we add the edges in $\{z_2(x) \} \times V$ to $E(H)$. Since $|R_2| = O(n^{1/3}\log^{2/3}n)$, the total number of edges added during this phase is $O(n^{4/3} \log^{2/3} n)$.

By summing up the number of edges added during all phases, we obtain that the size of $H$ is $O(n^{4/3} \log^{2/3} n)$.
\end{proof}

\subsection{Our temporal \texorpdfstring{\boldmath $(2k-1)$}{(2k-1)}-spanner}
\label{sec:k-spanner}
\begin{algorithm}[t] \small
   	\caption{\small Computes a temporal $(2k-1)$-spanner.}
\label{alg:hier}
  	
    \SetKwInOut{Input}{Input}
    \SetKwInOut{Output}{Output}

    \Input{A temporal clique $G$;}
    \Output{A temporal $(2k-1)$-spanner of $G$;}
    
  	\BlankLine
    $Z_0 \gets V$\;
    \lForEach{$u \in V$}{$E(u) \gets \{(u,v) \mid v \in V\}$}
    \For{$i = 1,\dots,k-1$}
    {
        \ForEach{$u \in Z_{i-1}$}
        {
            $E_{i,u} \gets$ set of the first min-time label $n^{\frac{i}{k}}\log n$ edges of $E(u)$\;
            $E(u) \gets E(u) \setminus E_{i,u}$\;
            $S_{i,u} \gets \{v \in V : (u,v) \in E_{i,u} \}$\;
        }
        \BlankLine
        $R_i \gets $hitting set of $\{S_{i,u}\}_{u \in Z_{i-1}}$ computed as in Lemma~\ref{lemma:hitting_set}\;
        $C \gets \emptyset$  \tcp*{Set of vertices in $Z_{i-1}$ that are already clustered}
        \ForEach{$x \in R_i$}
        {
            $C_{i,x} = \{u \in Z_i \setminus C : x \in S_{i,u}\}$\;
            $z_i(x) \gets \arg\max_{u \in C_{i,x}}\{\lambda(u,x) \}$\;
            $C \gets C \cup C_{i,x}$\;
        }
        \BlankLine
        $Z_i \gets \{ z_i(x) \in Z_{i-1}: x \in R_i\}$\;
    }

    \BlankLine
    $H \gets (V,\emptyset)$
    \For(\tcp*[f]{Initialization}){$i = 1$ to $k-1$}{
        \lForEach{$u \in Z_{i-1}$}{
            $E(H) \gets E(H) \cup E_{i,u}$%
        }
    }
    
    \BlankLine
    \For(\tcp*[f]{First augmentation}){$i = 1, \dots, k-1$}
    {
        \ForEach{$u \in Z_{i-1}$}
        {
            Let $x \in R_i$ such that $u \in C_{i,x}$\;
            $E(H) \gets E(H)  \cup \{u\} \times \{ \bigcup_{j =1}^i S_{z_i(x),j}\}$\;
        }
    }
    \BlankLine
    
    \lForEach(\tcp*[f]{Second Augmentation}){$z \in Z_{k-1}$}{
        $E(H) \gets E(H)  \cup \{z\} \times V$%
    }
    \BlankLine
    \Return $H$\;
\end{algorithm}

In this section, we describe an algorithm that, given an integer $k \ge 2$ and a temporal clique $G$ of $n$ vertices, returns a temporal $(2k-1)$-spanner of $G$ with size $O(k \cdot n^{1+\frac{1}{k}} \log^{\frac{k-1}{k}}  n)$.

The idea is to define a hierarchical clustering of $G$, where a generic level-$i$ clustering partitions the special vertices of the level-$(i-1)$ clustering and identifies the special vertices of level-$i$.
As we move from one clustering level to the next, the number of clusters decreases by a factor of roughly $n^{\frac{1}{k}}$, thus allowing us to add an increasing number of edges incident to the special vertices into the spanner.

We ensure that each vertex  $u \in V$ can reach some special vertex by moving upwards in the clustering hierarchy. These special vertices work as hubs, i.e., each of them allows to directly reach a subset of vertices of $V$, and some special vertex of higher level (via a temporal path of length at most $2$). Then $u$ can reach any vertex in $v \in V$ by first reaching a suitable special vertex $z$ in the hierarchy, and then following the edge $(z,v)$.

We build our clustering in $k-1$ rounds indexed from $1$ to $k-1$ (a detailed pseudocode is given in Algorithm~\ref{alg:hier}), where the generic $i$-th round defines a set $Z_i$ of level-$i$ special vertices. Initially, $Z_0 = V$, i.e., all vertices are special vertices of level $0$.
During the $i$-th round, the level-$i$ clustering is computed from the set of vertices in $Z_{i-1}$ defined at the previous round as follows.
For each $u \in Z_{i-1}$, we let $E_{i,u}$ be a set of $\delta_i = \Theta(n^{\frac{i}{k}}\log^{\frac{k-i}{k}} n)$ edges with the smallest label among those that are incident to $u$ but do not belong to $\bigcup_{j=1}^{i-1} E_{u,j}$, and we denote by $S_{i,u} = \{ v \in V : (u,v) \in E_{i,u}\}$ the set containing the endvertices of the edges incident to $u$ in $E_{i,u}$. 
We now compute a hitting set  $R_i \subseteq V$ of the collection $\{S_{i,u} \mid u \in Z_{i-1} \}$ having size at most $O(\frac{n}{\delta_i} \log n)$. Lemma \ref{lemma:hitting_set} guarantees that $R_i$ always exists.
Notice that, as $i$ increases, the time labels of the edges in $E_{i,u}$ became larger, $\delta_i$ increases, and the $|R_i|$ decreases.

\begin{figure}[t]
    \centering
    \includegraphics[scale=.8]{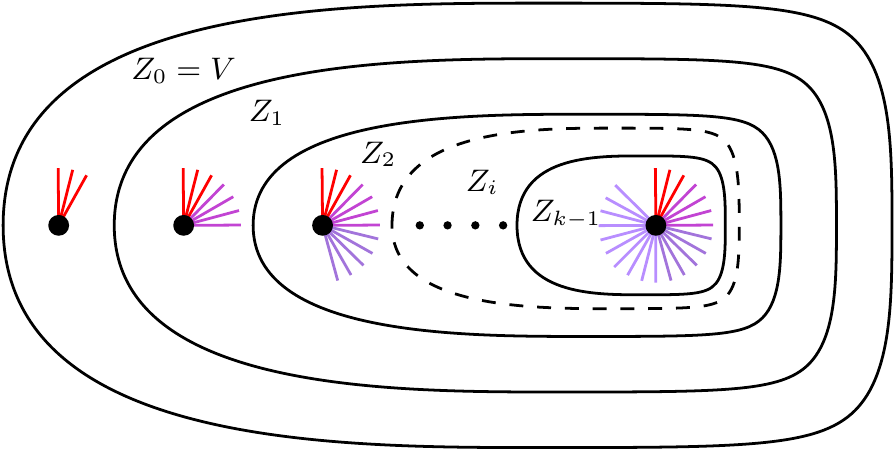}
    \caption{The set of edges selected for each vertex during the initialization phase.}
    \label{fig:hierarchy}
\end{figure}

We now partition the vertices in $Z_{i-1}$ into $|R_i|$ clusters $C_{i,x}$, one for each $x\in R_i$. We do so by adding each vertex $u \in Z_{i-1}$ into an arbitrary cluster $C_{i,x}$ such that $x \in S_{i,u}$. We call $x$ the center of the cluster $C_{i,x}$.
Moreover, for each cluster $C_{i,x}$, we choose a special vertex $z_i(x) \in C_{i,x}$ as a vertex that that maximizes the label of edge $(x,z_i(x))$.

Once the hierarchical clustering is built, our algorithm proceeds to construct a temporal $(2k-1)$-spanner $H$ of $G$. At the beginning $H = (V, \emptyset)$, then edges are added to $H$ in the following three phases:
\begin{description}
\item[Initialization:] For each $u \in V$, we add to $E(H)$ all the edges in the sets $E_{i,u}$ for $i = 1,\dots, j+1$, where $j$ is the largest integer between $0$ and $k-2$ for which $u \in Z_j$, see Figure~\ref{fig:hierarchy}. 
\item[First Augmentation:] For each $i = 1, \dots, k-1$ and each $u \in Z_{i-1}$, we consider the center $x \in R_i$ of the level-$i$ cluster $C_{i,x}$ containing $u$, and we add to $E(H)$ all the edges $(u,v)$ with $v \in \bigcup_{j =1}^i S_{j,z_i(x)}$.
\item[Second Augmentation:] We add to $E(H)$ all edges incident to some vertex in $Z_{k-1}$.
\end{description}

We now show that all vertices are at distance at most $2k-1$ in $H$, and that the size of $H$ is $O(k\cdot n^{1+\frac{1}{k}} \log^{\frac{k-1}{k}} n)$.

\begin{lemma}
\label{lemma:stretch_allpair}
For every $u,v \in V(G)$, $d_{H}(u,v) \leq (2k-1)d_{G}(u,v)$.
\end{lemma}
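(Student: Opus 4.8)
The plan is to show that every vertex $u \in V$ can reach, in $H$, a level-$(k-1)$ special vertex $z \in Z_{k-1}$ via a temporal path of length at most $2(k-1)$, and moreover that this path arrives "early enough" — i.e., the label of its last edge is small enough — that it can be extended by one more edge to reach any target $v$. Combined with the observation that it suffices to prove the bound when $d_G(u,v) = 1$ (since a shortest temporal path from $u$ to $v$ in $G$ of length $d = d_G(u,v)$ decomposes into $d$ single-edge temporal hops with non-decreasing labels, and concatenating the $\le (2k-1)$-length detours for each hop yields a temporal path of length $\le (2k-1)d$ in $H$, because the detour for hop $j+1$ starts at a label $\ge$ the label of the $j$-th original edge $\ge$ the label of the last edge of the $j$-th detour), this gives the claim. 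Actually a cleaner framing: prove directly that for every $u,v$ with an edge $(u,v) \in E(G)$ traversed at some label $\lambda(u,v)$, there is in $H$ a temporal path from $u$ to $v$ of length $\le 2k-1$ whose \emph{last} edge has label $\ge \lambda(u,v)$ — wait, that is not quite what is needed for concatenation; what is needed is that the \emph{whole} detour for a hop $e=(a,b)$ uses labels $\ge \lambda(e)$... hmm, but $E_{i,u}$ contains the \emph{smallest}-label edges at $u$, so the detour's first edge $(u,x)$ has label $\le \lambda(u,v)$, not $\ge$. So I would instead prove the all-pairs statement head-on, mimicking exactly the $5$-spanner lemma but with $k-1$ clustering levels.

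Concretely: fix $u,v$. Climb the hierarchy: let $x_1 \in R_1$ be the center of the level-$1$ cluster containing $u$, set $z_1 = z_1(x_1) \in Z_1$; inductively let $x_{i} \in R_{i}$ be the center of the level-$i$ cluster containing $z_{i-1}$ and $z_i = z_i(x_i) \in Z_i$. First I would establish, by induction on $i$, that the edges $(u,x_1), (x_1,z_1), (z_1,x_2), (x_2,z_2), \dots, (z_{k-2},x_{k-1}), (x_{k-1},z_{k-1})$ all lie in $E(H)$ (from the initialization phase, since $(z_{i-1},x_i) \in E_{i,z_{i-1}}$ and $(x_i,z_i) \in E_{i,z_i}$ by construction, and $z_i \in Z_i$ so all sets $E_{1,z_i},\dots,E_{i+1,z_i}$ are dumped into $H$) and that their labels are non-decreasing, forming a temporal path $\pi$ of length $2(k-1)$ from $u$ to $z_{k-1}$. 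The label monotonicity is the crux of the induction: $\lambda(z_{i-1},x_i) \le \lambda(x_i, z_i)$ because $(x_i,z_i) \in E_{i,z_i}$ while $(z_{i-1},x_i)$... no — I need $\lambda(x_i,z_i) \le \lambda(z_i, x_{i+1})$, which holds because $(x_i,z_i)\in E_{i,z_i}$ and $(z_i,x_{i+1})\in E_{i+1,z_i}$ and $E_{i,z_i}$ holds smaller-label edges than $E_{i+1,z_i}$ (both at $z_i$); and $\lambda(z_{i-1},x_i)\le\lambda(x_i,z_i)$ holds by the defining maximality of $z_i(x_i)$ over its cluster (which contains $z_{i-1}$), exactly as in the $3$- and $5$-spanner arguments.

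Then I would do the "descent to $v$" case analysis, choosing the highest level at which $v$ becomes directly reachable. If $v = z_i(x_{i+1})$-type special vertex or $v=z_i$ for some $i$, truncate $\pi$. Otherwise, let $i$ be the largest index such that $v \in \bigcup_{j=1}^{i} S_{j,z_i}$; if $i \le k-1$, the first-augmentation phase (applied to $z_{i-1} \in Z_{i-1}$ with center $x_i$) added edge $(z_{i-1}, v)$ to $H$, and since $(z_{i-1},v) \notin E_{1,z_{i-1}}\cup\dots$ wait, rather: since $v$ is not reached at level $i-1$, $(z_{i-1},v)\notin \bigcup_{j=1}^{i-1}E_{j,z_{i-1}}$, hence $\lambda$ of the last edge of the truncation $\pi_{i-1}$ (namely $(x_i,z_{i-1})$, which lies in some $E_{j,z_{i-1}}$) is $\le \lambda(z_{i-1},v)$, so prepending $\pi_{i-1}$ gives a temporal path of length $2(i-1)+1 \le 2k-3 \le 2k-1$. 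If $v$ is reached at no level, then $(z_{k-1},v)$ is added in the second augmentation and $\lambda(x_{k-1},z_{k-1}) \le \lambda(z_{k-1},v)$ since $(x_{k-1},z_{k-1})\in E_{k-1,z_{k-1}}$ and $(z_{k-1},v)$ was not placed in any $E_{j,z_{k-1}}$; appending it to $\pi$ yields length $2(k-1)+1 = 2k-1$.

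The main obstacle I anticipate is bookkeeping the \emph{indices} correctly — in particular matching which set among $S_{1,z_i},\dots,S_{i,z_i}$ contains $v$ with which vertex ($z_{i-1}$ vs.\ $z_i$) had the corresponding edge added during the first augmentation, and verifying the label comparison $\lambda(\text{last edge of }\pi_{i-1}) \le \lambda(z_{i-1},v)$ holds in \emph{every} sub-case (it relies precisely on $v$ not being reachable at the previous level, so that $(z_{i-1},v)$ avoided all the low-label sets $E_{j,z_{i-1}}$, $j<i$, the last of which contains the terminal edge of $\pi_{i-1}$). Everything else is a direct $(k-1)$-fold iteration of the structure already verified for the $3$- and $5$-spanner cases.
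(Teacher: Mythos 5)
Your overall architecture is essentially the paper's: climb the hierarchy $z_0=u,\,z_1=z_1(x_1),\dots,z_{k-1}$ via a temporal path $\pi$ of length $2(k-1)$ with non-decreasing labels, then pick a level at which $v$ is ``reached'' and descend. You even streamline the descent slightly: the paper splits on whether $\lambda(z_i,v)\ge\lambda(x_i,z_i)$ (appending $(z_i,v)$ to $\pi_i$, giving length $2i+1$) or not (backing off to $\pi_{i-1}$ and $(z_{i-1},v)$, giving length $2i-1$), while you observe that the back-off branch alone already covers everything; that is a correct and marginally cleaner variant. However, two index errors in your write-up would make it fail as stated.

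First, you take $i$ to be the \emph{largest} index with $v\in\bigcup_{j=1}^i S_{j,z_i}$, but then immediately assert ``$v$ is not reached at level $i-1$'' to conclude $(z_{i-1},v)\notin\bigcup_{j=1}^{i-1}E_{j,z_{i-1}}$. Maximality gives no information about level $i-1$ (only about levels $>i$): it is entirely possible that $v\in\bigcup_{j=1}^{i-1}S_{j,z_{i-1}}$ as well, in which case $\lambda(z_{i-1},v)$ could be strictly smaller than $\lambda(x_{i-1},z_{i-1})$ and your concatenation $\pi_{i-1}\cdot(z_{i-1},v)$ would not be temporal. You need the \emph{minimum} such $i$, as the paper uses; minimality is precisely what guarantees that $(z_{i-1},v)$ lies outside the lowest-label bucket $\bigcup_{j=1}^{i-1}E_{j,z_{i-1}}$ at $z_{i-1}$, whence $\lambda(z_{i-1},v)\ge\lambda(x_{i-1},z_{i-1})$. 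Second, the last edge of $\pi_{i-1}$ is $(x_{i-1},z_{i-1})\in E_{i-1,z_{i-1}}$, not $(x_i,z_{i-1})$ as you wrote; with that corrected the label comparison above is exactly the one needed. One last simplification you can make: since $G$ is a temporal clique, $d_G(u,v)=1$ for every $u\ne v$, so the hop-by-hop reduction you debated at the start is superfluous --- proving $d_H(u,v)\le 2k-1$ for all pairs directly gives the multiplicative bound.
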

\begin{proof}
Let $z_0 = u$ and, for $i = 1,\ldots,k-1 $, let $z_i = z_i(x_{i})$ where $x_i \in R_i$ is the center of the cluster $C_{i,x_i}$ containing $z_{i-1}$. The initialization phase ensures that, for any $i$, there exists a temporal path from $z_0$ to $z_i$ in $H$ of length $2i$ entering $z_i$ with the edge $(x_i,z_i) \in E_{i,z_i}$. Indeed, $\pi_i$ can be chosen as the path that traverses edge $(z_{i-i},x_i) \in E_{i,z_{i-1}}$ and edge $(x_i,z_i) \in E_{i,z_i}$, in this order. Notice that, by definition of $z_i$, $\lambda(z_{i-1},x_i) \leq \lambda(x_i,z_{i})$. See Figure \ref{fig:path_pi}.

If  $v = z_i$ for some $i = 1,\ldots,k-1$ then, from the discussion above, we know that $\pi_i$ is a temporal path from $u$ to $v$ in $H$ of length $2i<2k-1$. Otherwise, we distinguish two cases depending on whether there exists some $i = 1, \ldots, k-1$ such that $(z_i,v) \in \bigcup_{j = 1}^{i}E_{j,z_i}$.

Suppose that the above condition is met, and let $i > 0$ be the minimum index for which $(z_i,v) \in \bigcup_{j = 1}^{i}E_{j,z_i}$.
If $\lambda(z_i,v) \geq \lambda(x_i,z_i)$, then $\pi_i$ followed by edge $(z_i,v)$, is a temporal path from $u$ to $v$ of length $2i+1\le 2k-1$.
If $\lambda(z_i,v) < \lambda(x_i,z_i)$ then, since $(z_i,v) \in \bigcup_{j = 1}^{i}E_{j,z_i}$, we have $v \in \bigcup_{j = 1}^{i}S_{j,z_i}$ and the first augmentation phase adds $(z_{i-1},v)$ to $E(H).$ By hypothesis we have $(z_{i-1},v) \not\in \bigcup_{j = 1}^{i-1}E_{j,z_{i-1}}$ and hence $\lambda(z_{i-1},v) \geq \lambda(x_{i-1},z_{i-1})$. This shows that $\pi_{i-1}$ followed by $(z_{i-1},v)$ is a temporal path from $u$ to $v$ in $H$ of length $2i-1\le 2k-1$.

It only remains to handle the case in which, for every $i$, we have $(z_i,v) \not\in \bigcup_{j = 1}^{i}E_{j,z_i}$.
In this case,  the algorithm adds $(z_{k-1},v)$ to $E(H)$
during the second augmentation phase. Moreover, since $\lambda(z_{k-1},v) \geq \lambda(x_{k-1},z_{k-1})$, the path $\pi_{k-1}$ followed by edge $(z_{k-1},v)$ is a temporal path from $u$ to $v$ in $H$ of length $2k-1$.
\end{proof}

\begin{figure}
    \centering
    \includegraphics[scale=0.80]{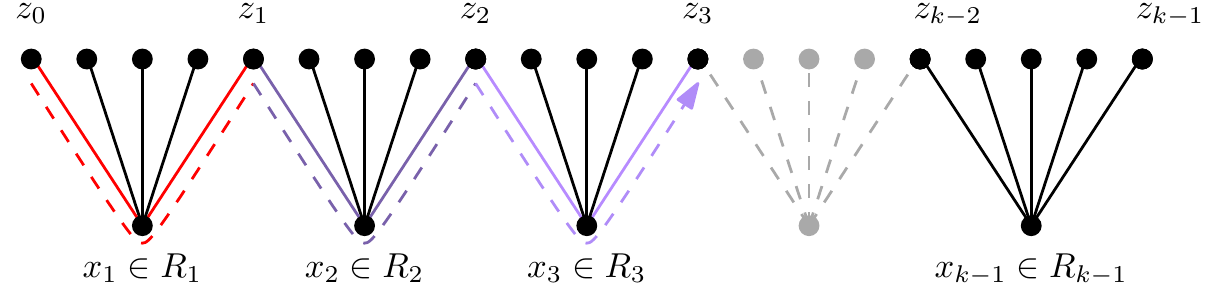}
    \caption{The hierarchy of clusters for vertex $z_0$, where $z_i = z_i(x_i)$ where $x_i \in R_i$ is such that $z_{i-1} \in C_{i,x_i}$. The dashed line is the temporal path $\pi_3$ that goes from $z_0$ to $z_3$.}
    \label{fig:path_pi}
\end{figure}

\begin{theorem}\label{thm:upper_bound_all_to_all}
Given a temporal clique $G$, for any $k \geq 1$, the above algorithm computes a temporal $(2k-1)$-spanner $H$ of size $O(k\cdot n^{1+\frac{1}{k}}\log^{\frac{k-1}{k}} n)$.
\end{theorem}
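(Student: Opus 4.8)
The plan is to split the argument into the already-established stretch bound and a new size computation, exactly paralleling the structure of the $3$-spanner and $5$-spanner analyses. The stretch bound $d_H(u,v)\le (2k-1)\,d_G(u,v)$ is Lemma~\ref{lemma:stretch_allpair}, so the remaining work is entirely a counting argument for $|E(H)|$, phase by phase. I would first record the two quantities that drive every estimate: the size of each edge bundle, $\delta_i = \Theta(n^{i/k}\log^{(k-i)/k} n)$, and the size of each hitting set, which by Lemma~\ref{lemma:hitting_set} is $|R_i| = O\!\big(\tfrac{n}{\delta_i}\log n\big) = O(n^{1-i/k}\log^{i/k} n)$. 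A small but necessary observation is that $|Z_i| \le |R_i|$ for every $i$ (each level-$i$ special vertex is $z_i(x)$ for a distinct $x\in R_i$), so in particular $|Z_{i-1}| \le |R_{i-1}| = O(n^{1-(i-1)/k}\log^{(i-1)/k} n)$, with the convention $|Z_0| = |R_0| := n$.

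Next I would bound each of the three phases. \emph{Initialization:} for each $i$ we add at most $\delta_i$ edges for each of the $|Z_{i-1}|$ vertices, contributing $|Z_{i-1}|\cdot\delta_i = O(n^{1-(i-1)/k}\log^{(i-1)/k} n)\cdot O(n^{i/k}\log^{(k-i)/k} n) = O(n^{1+1/k}\log^{(k-1)/k} n)$; summing over $i=1,\dots,k-1$ gives $O(k\cdot n^{1+1/k}\log^{(k-1)/k} n)$. \emph{First augmentation:} for level $i$ and each $u\in Z_{i-1}$ we add the edges $\{u\}\times\bigcup_{j=1}^{i}S_{j,z_i(x)}$, and $\big|\bigcup_{j=1}^{i}S_{j,z_i(x)}\big| \le \sum_{j=1}^{i}\delta_j = O(\delta_i) = O(n^{i/k}\log^{(k-i)/k} n)$ since the $\delta_j$ grow geometrically in $j$ and are dominated by the last term; hence level $i$ contributes $|Z_{i-1}|\cdot O(\delta_i) = O(n^{1+1/k}\log^{(k-1)/k} n)$ just as before, for a total of $O(k\cdot n^{1+1/k}\log^{(k-1)/k} n)$ over all levels. \emph{Second augmentation:} we add all $n$ edges incident to each of the $|Z_{k-1}| \le |R_{k-1}| = O(n^{1/k}\log^{(k-1)/k} n)$ vertices, for $O(n^{1+1/k}\log^{(k-1)/k} n)$ edges. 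Summing the three phases yields $|E(H)| = O(k\cdot n^{1+1/k}\log^{(k-1)/k} n)$.

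The main obstacle, such as it is, lies in getting the bound on $\big|\bigcup_{j=1}^{i}S_{j,z_i(x)}\big|$ right: one must be careful that the geometric-series telescoping of $\sum_{j=1}^i \delta_j$ is dominated by $\delta_i$ (the ratio $\delta_{j+1}/\delta_j = \Theta(n^{1/k}/\log^{1/k} n) \to \infty$, so the sum is $\Theta(\delta_i)$), rather than accidentally paying a factor of $k$ or of $i$ here; the $k$ factor should appear only once, from summing over the $k-1$ levels. I would also double-check the edge cases $k=1$ (where there are no clustering rounds, the second augmentation adds all of $\{v\}\times V$ for $v\in Z_0=V$, i.e.\ the whole clique, and the bound reads $O(n^2)$, which is consistent) and $k=2$ (recovering the $O(n^{3/2}\log^{1/2} n)$ bound of Section~\ref{sec:3-spanner}). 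A final remark worth including is that all steps — the hitting-set computations via Lemma~\ref{lemma:hitting_set}, the clustering, and the three augmentation phases — run in polynomial time, so the spanner is computed efficiently, as claimed in the theorem statement.
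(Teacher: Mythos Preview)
Your proposal is correct and follows essentially the same approach as the paper: invoke Lemma~\ref{lemma:stretch_allpair} for the stretch, record $|Z_i|\le |R_i|=O(n^{1-i/k}\log^{i/k} n)$, and bound the three phases separately to get $O(n^{1+1/k}\log^{(k-1)/k} n)$ each (times $k-1$ for the first two). Your handling of the first augmentation via the observation $\sum_{j\le i}\delta_j=O(\delta_i)$ is slightly cleaner than the paper's explicit geometric-series computation, but the argument is otherwise identical.
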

\begin{proof}
The upper bound on the stretch follows from Lemma~\ref{lemma:stretch_allpair}, therefore we focus on upper bounding the size of $H$.

Our algorithm constructs $H$ in three phases: initialization, first augmentation and second augmentation. In order to bound the size of $H$, we bound the number of edges added to $H$ in each phase.

For $i= 1$ to $k-1$, by Lemma~\ref{lemma:hitting_set}, we have that $|Z_i| = O\big( \frac{n}{\delta_i}\log n \big)$, since $\delta_i = \Theta( n^{\frac{i}{k}}\log^{\frac{k-i}{k}} n)$, then $|Z_i| = O( n^{1-\frac{i}{k}} \log^{i/k} n )$.

The number of edges added during the initialization phase is: 
\[
    \sum_{i = 1}^{k-1}|Z_{i-1}|\delta_i 
    = O\left( \sum_{i = 1}^{k-1} n^{1-\frac{i-1}{k}} \log^{\frac{i-1}{k}} n \cdot n^{\frac{i}{k}} \log^{\frac{k-i}{k}} n \right)
    = O\left( k \cdot n^{1+\frac{1}{k}}\log^{\frac{k-1}{k}} n \right).
\]

The number of edges added is during the first augmentation phase is:
\begin{align*}
    \sum_{i=1}^{k-1} &\left(|Z_{i-1}| \cdot \sum_{j=1}^i \delta_j \right)
    =  O\left( \sum_{i=1}^{k-1} |Z_{i-1}| \cdot \sum_{j=1}^i \Big(\frac{n}{\log n}\Big)^{\frac{j}{k}} \cdot \log n \right)  \\
    &= O\Bigg( \sum_{i =1}^{k-1} |Z_{i-1}|  \cdot  \frac{\left(\frac{n}{\log n}\right)^{\frac{i+1}{k}} -1}{\left(\frac{n}{\log n}\right)^{\frac{1}{k}} -1} \cdot \log n  \Bigg)
    = O\left( \sum_{i =1}^{k-1} |Z_{i-1}|  \cdot  \left(\frac{n}{\log n}\right)^{\frac{i}{k}}  \cdot \log n  \right) \\
     &= O\left(  \sum_{i =1}^{k-1} \left( n^{1+\frac{i-1}{k}} \log^{\frac{i-1}{k}} n \right) \cdot \left(\frac{n}{\log n}\right)^{\frac{i}{k}} \cdot \log n \right) 
    = O\left(  \sum_{i =1}^{k-1} n^{1+\frac{1}{k}}\log^{\frac{k-1}{k}} n \right) \\
    &= O\left(k \cdot n^{1+\frac{1}{k}}\log^{\frac{k-1}{k}} \right).
\end{align*}

Finally, the number of edges added during the second augmentation phase is:
\[
n \cdot |Z_{k-1}| 
= O\left( n \cdot n^{1-\frac{k-1}{k}}  \log^{\frac{k-1}{k}}n \right)
= O\left( n^{1+\frac{1}{k}} \log^{\frac{k-1}{k}}n \right).
\]

Thus the overall number of edges in $H$ is $O(k\cdot n^{1+\frac{1}{k}}\log^{\frac{k-1}{k}} n)$, as claimed.
\end{proof}

\subsection{Lower-bounds for temporal cliques}
\begin{figure}
    \centering
    \includegraphics[scale=0.80]{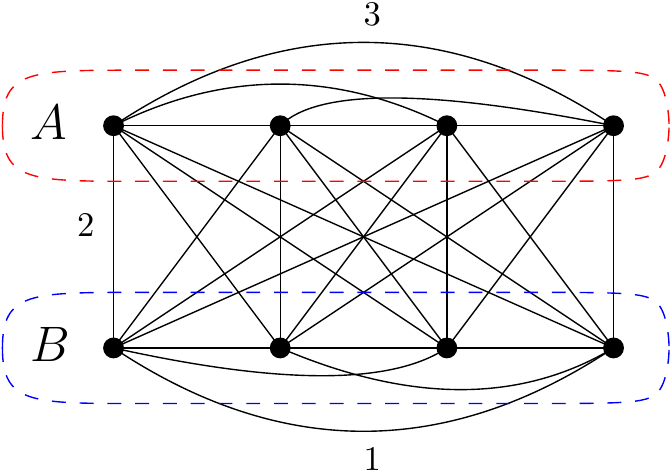}
    \caption{Temporal clique $G$, for which any temporal $2$-spanner has size $\Omega(n^2)$. All edges between vertices in $A$ have time-label $3$. All edges between vertices in $B$ have time-label $1$. All edges from vertices in $A$ to vertices in $B$ have time-label $2$.  }
    \label{fig:lb2}
\end{figure}
In this section, we give a lower bound on the size of temporal $2$-spanners of temporal cliques.

Consider a temporal clique $G$ whose vertices are partitioned into two sets $A$ and $B$ of size $n/2$. For any $(u,v) \in E(G)$, if $u,v \in A $, $\lambda(u,v)=3$, if $u,v \in B$, $\lambda(u,v)=1$ and if $u \in A$ and $v \in B$, $\lambda(u,v) = 2$, see Figure~\ref{fig:lb2}. By construction, all temporal paths that connect a vertex in $A$ to
a vertex in $B$ contain only edges with time-label $2$. Therefore, they all have odd length. Hence, if we remove any edge $(u,v)$ such that $u \in A$ and $v \in B$ then, the distance between $u$ and $v$ in the resulting graph is $3$. Since the number of edges between $A$ and $B$ is $\Omega(n^2)$, we have just shown the following:

\begin{theorem}
There exists a temporal clique $G$  of $n$ vertices such that any temporal $2$-spanner of $G$ has size $\Omega(n^2)$.
\end{theorem}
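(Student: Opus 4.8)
The plan is to exhibit an explicit temporal clique $G$ on $n$ vertices in which $\Omega(n^2)$ edges are simultaneously \emph{irreplaceable}: deleting any one of them forces the distance between its endpoints to jump from $1$ up to at least $3$, which already violates a stretch bound of $2$. Concretely, I would partition $V$ into two halves $A$ and $B$ with $|A|=|B|=n/2$, and assign time-labels so that every edge internal to $B$ has label $1$, every edge internal to $A$ has label $3$, and every edge of the cut $(A,B)$ has label $2$. This leaves $\Theta(n^2)$ cut edges, and the whole argument will revolve around them.

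The key structural step is a \emph{parity claim}: every temporal path with one endpoint in $A$ and the other in $B$ uses \emph{only} cut edges, hence alternates sides and has odd length. To prove this, start at the endpoint in $A$ and follow the path. Its first edge cannot be an internal $A$-edge (label $3$): once a label-$3$ edge is traversed, all subsequent labels are $\ge 3$, so all subsequent edges are internal to $A$ and the path can never leave $A$. So the first edge is a cut edge (label $2$), landing in $B$; from then on all labels are $\ge 2$, which rules out internal $B$-edges (label $1$), and the same ``trapped in $A$'' argument rules out internal $A$-edges. Hence every edge of the path is a cut edge, the path alternates between $A$ and $B$, and since it begins in $A$ and ends in $B$ its length is odd. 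In particular the shortest such temporal path is the single cut edge joining the two endpoints, so $d_G(u,v)=1$ for all $u\in A$, $v\in B$.

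Finally I would conclude as follows. Let $H$ be any temporal $2$-spanner of $G$ and let $(u,v)$ be a cut edge with $u\in A$, $v\in B$. Since $d_G(u,v)=1$ we need $d_H(u,v)\le 2$. But $H\subseteq G$, so by the parity claim every temporal path from $u$ to $v$ in $H$ has odd length; a path of length exactly $2$ is impossible, and the next-shortest option has length $\ge 3 > 2\cdot d_G(u,v)$. Therefore $(u,v)$ must belong to $E(H)$. Since this holds for each of the $(n/2)^2=\Omega(n^2)$ cut edges, $H$ has $\Omega(n^2)$ edges.

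The only subtle point is the parity claim: it rests entirely on the non-decreasing-label condition forbidding a temporal cross-path from ever touching an internal edge. Once that is made precise the rest is immediate, so I do not expect a real obstacle here; the ``hard part'', such as it is, is simply choosing the three label values so that internal $A$-edges are heavier than cut edges and internal $B$-edges are lighter, which is exactly what the assignment $1,2,3$ achieves.
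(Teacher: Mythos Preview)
Your proposal is correct and takes essentially the same approach as the paper: the same bipartition with labels $1$, $2$, $3$, the same parity claim that any temporal $A$--$B$ path uses only cut edges and hence has odd length, and the same conclusion that each of the $(n/2)^2$ cut edges is forced. You have supplied a bit more detail for the parity claim than the paper does, but the construction and the logic are identical.
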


 Our construction can be slightly modified to show a lower bound of $\Omega(n^2)$ on the size of any temporal spanner for directed temporal cliques. Indeed, it suffices to consider each edge of the graph in Figure~\ref{fig:lb2} as bidirected, and replace the time-label of all the edges from $B$ to $A$ with $1$ (while the time-label of the edges from $A$ to $B$ remains $2$).
 
\section{Single-source spanners for general temporal graphs}
\label{sec:single_source}

In the first part of this section we design an algorithm that, for every $0 <\epsilon < n$, builds a single-source temporal  $(1+\epsilon)$-spanner of $G$ w.r.t.\ $s$ of size $O\big(\frac{n\log^4 n}{\log(1+\epsilon)}\big)$. We observe that, for constant values of $\epsilon$, the size of the computed spanner is almost linear, i.e., linear up to polylogarithmic factors.
The algorithm can be extended so as, for every $1 \leq \beta < n$, it builds a single-source temporal $\beta$-additive spanner of $G$ w.r.t. $s$ of size $O\big(\frac{n^2\log^4 n}{\beta}\big)$, see Section~\ref{section:additive} for details.

Our upper bounds leave open the problem of deciding whether a temporal graph $G$ admits a single-source temporal preserver w.r.t.\ $s$ of size $\softO(n)$. We answer to this question negatively in the second part of this section. More precisely, we show a temporal graph $G$ of size $\Theta(n^2)$ and a source vertex $s$ for which no edge can be removed if we want to keep a shortest temporal path from $s$ to every other vertex $u$. The construction can be extended to show a lower bound of $\Omega(n^2/\beta)$ on the size of single-source temporal $\beta$-additive spanners, for every $\beta \geq 1$. This implies that our upper bound on the size of single-source temporal additive spanners is asymptotically optimal, up to polylogarithmic~factors.

\subsection{Our upper bound}
\label{sec:single_source_ub}

In this section we present an algorithm that, for every $0<\epsilon<n$, computes a single-source temporal $(1+\epsilon)$-spanner of $G$ w.r.t. $s$ of size $O\big(\frac{n\log^4 n}{\log(1+\epsilon)}\big)$.\footnote{Our algorithm also works in the case of directed temporal graphs and/or multiple time-labels.}

In the following we say that a temporal path is {\em $\tau$-restricted} if it uses edges of time-label of at most $\tau$. Our algorithm computes a spanner that, for every $\tau = 1,\ldots,L$, contains $(1+\epsilon)$-approximate $\tau$-restricted temporal paths from $s$ to any vertex $v$ (recall that $L$ is the lifetime of $G$). More formally, for two vertices $u$ and $v$ of $G$, we denote by $\dle{G}{\tau}{u}{v}$ the length of a shortest $\tau$-restricted temporal path from $u$ to $v$ in $G$. We assume $\dle{G}{\tau}{u}{v}=+\infty$ when $G$ does not contain a $\tau$-restricted temporal path from $u$ to $v$. The single-source temporal  $(1+\epsilon)$-spanner $H$ of $G$ w.r.t.\ $s$ computed by our algorithm is such that, for every $v \in V$, and for every $\tau = 1,\ldots,L$, $\dle{H}{\tau}{s}{v} \leq (1+\epsilon)\dle{G}{\tau}{s}{v}$.

For technical convenience, in the following we design an algorithm that, for any $0 < \delta < n$ and any positive integer $k$, builds a single-source temporal $(1+\delta)^{k}$-spanner of $G$ w.r.t.\ $s$ of size $O\Big(\frac{kn^{1+\frac{1}{k}}\log^{2-\frac{1}{k}}n}{\log (1+\delta)}\Big)$. The desired bound of $O\big(\frac{n\log^4 n}{\log(1+\epsilon)}\big)$ on the size of the single-source temporal $(1+\epsilon)$-spanner is obtained by choosing $k=\lfloor \log n \rfloor$ and $\delta=(1+\epsilon)^{\frac{1}{\log n}}-1$. 

Our algorithm uses a subroutine that, for a given vertex $v$ of $G$, computes a set $\Pi_{v}$ of $O\Big(\frac{\log n}{\log(1+\delta)}\Big)$ temporal paths from $s$ to $v$ of $G$ such that, for every $\tau = 1,\ldots,L$, $\Pi_{v}$ contains a $\tau$-restricted temporal path $\pi$ satisfying $|\pi|\leq (1+\delta)\dle{G}{\tau}{s}{v}$. 
\setcounter{algocf}{1}
\begin{algorithm}[t] \small
   	\caption{\small Computes a set $\Pi_v$ of temporal paths from $s$ to $v$ in $G$ that provides a good approximation of any shortest $\tau$-restricted temporal path from $s$ to $v$ in $G$.}
\label{alg:single-pair}
  	
    \SetKwInOut{Input}{Input}
    \SetKwInOut{Output}{Output}

    \Input{A temporal graph $G$ (with lifetime $L$), a source vertex $s \in V$, and a vertex $v \in V$.}
    \Output{A set $\Pi_v$ of temporal paths from $s$ to $v$ in $G$ such that, for every $\tau = 1,\ldots,L$, there exists a $\tau$-restricted temporal path $\pi \in \Pi_v$ such that $|\pi|\leq (1+\delta)\dle{G}{\tau}{s}{v}$.}
    
  	\BlankLine

    {$\Pi_{v} \gets \emptyset$; $t \gets +\infty$}\;
    
    \For{$\tau = 1$ to $L$}
    {
        \If{$\dle{G}{\tau}{s}{v}\neq +\infty$ and $\dle{G}{\tau}{s}{v} < \frac{t}{1 +\delta}$}
        {
            Let $\pi$ be a shortest $\tau$-restricted temporal path from $s$ to $v$ in $G$\;
    
            $\Pi_v \gets \Pi_v \cup \{ \pi \}$\;
    
            $t \gets |\pi|$\;
        }
    }
    \Return $\Pi_v$\; 	
\end{algorithm}

The subroutine (see Algorithm~\ref{alg:single-pair} for the pseudocode) builds $\Pi_{v}$ iteratively by adding a subset of shortest $\tau$-restricted temporal paths from $s$ to $v$ in $G$, where $\tau = 1,\ldots,L$. We do so by scanning shortest $\tau$-restricted temporal paths from $s$ to $v$ in increasing order of values of $\tau$. The scanned path $\pi$ is added to $\Pi_{v}$ if no other path already contained in $\Pi_{v}$ has a length of at most $(1+\delta)|\pi|$. The next lemma shows the correctness of our subroutine and bounds the number of paths contained in $\Pi_{v}$. 

\begin{lemma}
\label{lemma:single-pair}
For every $\tau = 1,\ldots,L$, there is a $\tau$-restricted temporal path $\pi$ in $\Pi_{v}$ such that $|\pi|\leq (1+\delta)\dle{G}{\tau}{s}{v}$. Moreover, $|\Pi_{v}|=O\big(\frac{\log n}{\log (1+\delta)}\big)$.
\end{lemma}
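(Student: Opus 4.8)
The plan is to verify the two claims separately, both by exploiting the single invariant maintained by Algorithm~\ref{alg:single-pair}: after processing time-label $\tau$, the variable $t$ equals the length of the last path added to $\Pi_v$ (or $+\infty$ if none has been added), and this length is exactly $\min_{\tau' \le \tau : \dle{G}{\tau'}{s}{v} \neq +\infty} \dle{G}{\tau'}{s}{v}$ rounded \emph{up} to the coarse scale enforced by the test $\dle{G}{\tau}{s}{v} < \frac{t}{1+\delta}$. More precisely, I would first observe that the sequence of values taken by $t$ is strictly decreasing (each time we update, the new $|\pi|$ satisfies $|\pi| = \dle{G}{\tau}{s}{v} < \frac{t}{1+\delta} < t$), and that $\dle{G}{\tau}{s}{v}$ is non-increasing in $\tau$ (a $\tau$-restricted path is also $(\tau+1)$-restricted), which is what makes the greedy ``skip if already well-approximated'' rule correct.

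For the approximation guarantee, fix $\tau \in \{1,\dots,L\}$ and let $d = \dle{G}{\tau}{s}{v}$. If $d = +\infty$ there is nothing to prove. Otherwise, consider the iteration of the for-loop with this value of $\tau$. Either the path $\pi$ computed in that iteration was added to $\Pi_v$, in which case $\pi$ is itself a $\tau$-restricted path of length exactly $d \le (1+\delta)d$ and we are done; or the test failed, meaning $d \ge \frac{t}{1+\delta}$ where $t$ is the length of some path $\pi' \in \Pi_v$ that was added at an earlier iteration $\tau' < \tau$. That $\pi'$ is $\tau'$-restricted, hence also $\tau$-restricted, and $|\pi'| = t \le (1+\delta) d$, so $\pi'$ is the desired witness. (The edge case $d \ne +\infty$ but the ``$\dle{G}{\tau}{s}{v} \ne +\infty$'' branch not yet triggered for any $\tau' \le \tau$ cannot occur, since the branch fires the first time $\dle{G}{\cdot}{s}{v}$ becomes finite.)

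For the size bound, note that every path added to $\Pi_v$ has length at least $1$ (it is a path from $s$ to $v$ with $v \ne s$, or length $0$ only if $v = s$, which is a trivial separate case) and at most $n-1$, since a shortest temporal path is simple. Whenever a new path is added, the value of $t$ strictly decreases by a factor of more than $1+\delta$: if $\pi_j$ is the $j$-th path added and $t_j = |\pi_j|$, then the addition of $\pi_{j+1}$ required $|\pi_{j+1}| < \frac{t_j}{1+\delta}$, i.e.\ $t_{j+1} < \frac{t_j}{1+\delta}$. Starting from $t_1 \le n-1$ and ending at $t_{|\Pi_v|} \ge 1$, we get $1 \le t_{|\Pi_v|} < \frac{n-1}{(1+\delta)^{|\Pi_v|-1}}$, hence $(1+\delta)^{|\Pi_v|-1} < n-1$ and therefore $|\Pi_v| - 1 < \frac{\ln(n-1)}{\ln(1+\delta)}$, giving $|\Pi_v| = O\big(\frac{\log n}{\log(1+\delta)}\big)$ as claimed.

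The routine parts are entirely mechanical; the one place to be careful is the approximation argument in the case where the test fails, where one must make sure to invoke monotonicity of $\dle{G}{\cdot}{s}{v}$ in $\tau$ to conclude that the earlier path $\pi'$ — which is $\tau'$-restricted for some $\tau' < \tau$ — both (a) is $\tau$-restricted and (b) has length comparable to $\dle{G}{\tau}{s}{v}$ rather than merely to $\dle{G}{\tau'}{s}{v}$. The link is that $t = |\pi'| = \dle{G}{\tau'}{s}{v} \ge \dle{G}{\tau}{s}{v} = d$, so the chain $d \ge \frac{t}{1+\delta} = \frac{|\pi'|}{1+\delta}$ coming from the failed test immediately yields $|\pi'| \le (1+\delta)d$; no further monotonicity beyond this single inequality is needed, so this obstacle is minor.
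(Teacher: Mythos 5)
Your proposal is correct and follows essentially the same approach as the paper's proof: for the approximation guarantee you argue that either the $\tau$-iteration adds the exact shortest $\tau$-restricted path, or the test fails and the last previously added path — being $\tau'$-restricted for some $\tau' < \tau$ and hence $\tau$-restricted — is already within a $(1+\delta)$ factor; for the size bound you use the same geometric-decrease argument on the lengths of the added paths. The extra care you take with edge cases ($v = s$, the iteration where $t$ first becomes finite) and the slightly tighter bound $n-1$ in place of $n$ are harmless elaborations that do not change the argument.
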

\begin{proof}
Let $\pi^*$ be a shortest $\tau$-restricted temporal path from $s$ to $v$ in $G$ (we only need to consider the case in which $\pi^*$ exist). By construction, either $\Pi_{v}$ contains $\pi^*$ or there is $\tau' \le \tau$ such that $\Pi_{v}$ contains a shortest $\tau'$-restricted temporal path $\pi'$ from $s$ to $v$ in $G$ such that $|\pi'| \leq (1+\delta) |\pi^*|$. In either case, $\Pi_{v}$ contains a $\tau$-restricted path $\pi$ such that $|\pi|\leq (1+\delta)\dle{G}{\tau}{s}{v}$.

Let $\pi_1,\ldots \pi_h$ be the temporal paths in $\Pi_v$, in the order in which they are added to $\Pi_{v}$ by the algorithm and consider $h \ge 2$. By construction, for every $i=1,\dots, h-1$, we have $|\pi_{i}|>(1+\delta) |\pi_{i+1}|$, from which we derive $(1+\delta)^i|\pi_{i+1}| < |\pi_{1}|$. As $1 \leq |\pi_{i}|\leq n$ for every $i$, we have $(1+\delta)^{h-1}\leq (1+\delta)^{h-1}|\pi_{h}| < |\pi_{1}|\leq n$, from which we derive $h =O\Big(\frac{\log n}{\log (1+\delta)}\Big)$.
\end{proof}

In the rest of this section, for any given temporal path $\pi$, we denote by $\pi(\ell)$ the subpath of $\pi$ containing the last $\min\{\ell,|\pi|\}$ edges  of $\pi$. We observe that $\pi(\ell)=\pi$ when $|\pi|\leq \ell$. Moreover, for two vertices $u$ and $v$ of a temporal path $\pi$ that visits $u$ before $v$, we denote by $\pi[u,v]$ the temporal subpath of $\pi$ from $u$ to $v$.  

Before diving into the technical details, we describe the main idea of our algorithm and show how we can use it to build a single-source temporal $(1+\delta)^2$-spanner of $G$ w.r.t. $s$ of size $O\big(\frac{n^{3/2} \log^{3/2}n}{\log(1+\delta)}\big)$.

For technical convenience, let $R_0=V$ and $\mathcal{P}_0=\bigcup_{v \in R_0}\Pi_v$. In principle, we could build our single-source temporal $(1+\delta)$-spanner of $G$ w.r.t.\ $s$ by simply setting its edge set to $\bigcup_{\pi \in \mathcal{P}_0}E(\pi)$. Unfortunately, using only the result proved in Lemma~\ref{lemma:single-pair}, the  upper bound on the size of this spanner would already be quadratic in $n$. Therefore, to obtain a spanner of truly subquadratic size, we compute a single-source temporal $(1+\delta)^2$-spanner $H$ of $G$ w.r.t.\ $s$ instead. 

We build $H$ by adding all the {\em short} temporal paths in $\mathcal{P}_0$, i.e., all paths with at most $\ell_0$ edges for a suitable choice of $\ell_0$, and by replacing each {\em long} temporal path $\pi \in \mathcal{P}_0$ from $s$ to some vertex $v$ with the shortest temporal path from $s$ to $x$ in $\Pi_{x}$, for some vertex $x$ that {\em hits} $\pi(\ell_0)$, combined with $\pi[x,v]$. 

In more details, we define $\ell_0=\sqrt{n\log n}$ and we introduce a new parameter $\ell_1=n$. We say that a temporal path $\pi \in \mathcal{P}_0$ is {\em short} if $|\pi|\leq \ell_0$; it is {\em long} otherwise. Let $\mathcal{P}_0^{\textit{long}}=\{\pi \in \mathcal{P}_0 \mid |\pi|>\ell_0\}$ be the subset of long temporal paths in $\mathcal{P}_0$. We compute a set $R_1$ that hits $\{\pi(\ell_0) \mid \pi \in \mathcal{P}^{\textit{long}}_0\}$ using Lemma~\ref{lemma:hitting_set}, and we then use this set to define a new collection of temporal paths  $\mathcal{P}_1=\bigcup_{v \in R_1}\Pi_v$.\footnote{With a little abuse of notation, $R_1$ hits a temporal path $\pi$ if $R_1$ hits $V(\pi)$.} The edge set of $H$ is defined as $E(H)=\bigcup_{i\in\{0,1\}}\bigcup_{\pi \in \mathcal{P}_i}E(\pi(\ell_i))$. The next lemma shows that this simple algorithm already computes a single-source temporal $(1+\delta)^2$-spanner of $G$ w.r.t.\ $s$ of truly subquadratic size.

\begin{lemma}
\label{lemma:single-source_correctness_for_2}
For every $\tau = 1,\ldots,L$ and every $v \in V$, $\dle{H}{\tau}{s}{v} \leq (1+\delta)^2  \dle{G}{\tau}{s}{v}$. Moreover, the size of $H$ is $O\Big(\frac{n^{3/2}\log^{3/2} n}{\log(1+\delta)}\Big)$.
\end{lemma}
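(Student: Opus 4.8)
The plan is to prove the two parts of the statement separately: first the stretch guarantee, then the size bound.

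\emph{Stretch.} Fix $\tau \in \{1,\dots,L\}$ and $v \in V$, and let $P^*$ be a shortest $\tau$-restricted temporal path from $s$ to $v$ in $G$, so $|P^*| = \dle{G}{\tau}{s}{v}$ (we only need the case where this is finite). By Lemma~\ref{lemma:single-pair} applied to $v \in R_0$, the set $\Pi_v \subseteq \mathcal{P}_0$ contains a $\tau$-restricted temporal path $\pi$ with $|\pi| \le (1+\delta)|P^*|$. If $\pi$ is short, i.e., $|\pi| \le \ell_0$, then $\pi(\ell_0) = \pi$ and $E(\pi) \subseteq E(H)$, so $\dle{H}{\tau}{s}{v} \le |\pi| \le (1+\delta)|P^*| \le (1+\delta)^2 |P^*|$ and we are done. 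Otherwise $\pi$ is long, so $\pi \in \mathcal{P}_0^{\textit{long}}$, and $R_1$ hits $\pi(\ell_0)$: let $x \in R_1 \cap V(\pi(\ell_0))$. The key observation is that the suffix $\pi[x,v]$ is itself a temporal path, it is $\tau$-restricted (being a subpath of a $\tau$-restricted path), it has length at most $\ell_0$ (since $x$ lies among the last $\ell_0$ edges of $\pi$), hence $E(\pi[x,v]) \subseteq E(\pi(\ell_0)) \subseteq E(H)$; moreover the largest label on $\pi[x,v]$ is at least the label of the edge entering $x$ along $\pi$. Now apply Lemma~\ref{lemma:single-pair} to $x \in R_1$ with the restriction value $\tau' \le \tau$ equal to the largest label appearing on the prefix $\pi[s,x]$: it yields a $\tau'$-restricted path $\sigma \in \Pi_x \subseteq \mathcal{P}_1$ from $s$ to $x$ with $|\sigma| \le (1+\delta)\dle{G}{\tau'}{s}{x} \le (1+\delta)|\pi[s,x]|$. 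This $\sigma$ need not be short, but $\sigma(\ell_1) = \sigma$ since $\ell_1 = n \ge |\sigma|$, so $E(\sigma) \subseteq E(H)$. Concatenating $\sigma$ with $\pi[x,v]$ gives a temporal path in $H$ from $s$ to $v$: it is a valid temporal path because the largest label on $\sigma$ is at most $\tau' \le$ (label entering $x$ on $\pi$) $\le$ smallest label on $\pi[x,v]$, and it is $\tau$-restricted because all its labels are $\le \tau$. Its length is $|\sigma| + |\pi[x,v]| \le (1+\delta)|\pi[s,x]| + |\pi[x,v]| \le (1+\delta)(|\pi[s,x]| + |\pi[x,v]|) = (1+\delta)|\pi| \le (1+\delta)^2 |P^*|$, as required.

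\emph{Size.} We bound the contribution of each index $i \in \{0,1\}$ to $E(H)$. For $i=0$: $\mathcal{P}_0 = \bigcup_{v \in V}\Pi_v$ contains $O(n) \cdot O\!\big(\tfrac{\log n}{\log(1+\delta)}\big)$ paths by Lemma~\ref{lemma:single-pair}, and we only keep the last $\ell_0 = \sqrt{n\log n}$ edges of each, contributing $O\!\big(\tfrac{n\sqrt{n\log n}\,\log n}{\log(1+\delta)}\big) = O\!\big(\tfrac{n^{3/2}\log^{3/2} n}{\log(1+\delta)}\big)$ edges. For $i=1$: the collection $\{\pi(\ell_0) \mid \pi \in \mathcal{P}_0^{\textit{long}}\}$ consists of polynomially many vertex-sets, each of size $\ell_0 + 1 > \ell_0 = \sqrt{n\log n}$ (a path on $\ell_0$ edges has $\ell_0+1$ vertices), so Lemma~\ref{lemma:hitting_set} gives $|R_1| = O\!\big(\tfrac{n}{\sqrt{n\log n}}\log n\big) = O(\sqrt{n\log n})$. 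Then $\mathcal{P}_1 = \bigcup_{v \in R_1}\Pi_v$ has $O(\sqrt{n\log n}) \cdot O\!\big(\tfrac{\log n}{\log(1+\delta)}\big)$ paths, each of which has at most $n$ edges and we keep all of them (since $\ell_1 = n$), contributing $O\!\big(\tfrac{\sqrt{n\log n}\cdot \log n \cdot n}{\log(1+\delta)}\big) = O\!\big(\tfrac{n^{3/2}\log^{3/2} n}{\log(1+\delta)}\big)$ edges. Summing the two contributions gives $|E(H)| = O\!\big(\tfrac{n^{3/2}\log^{3/2} n}{\log(1+\delta)}\big)$.

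\emph{Main obstacle.} The delicate point is the concatenation argument in the stretch proof: one must verify that gluing $\sigma$ (a path from $s$ to the hitting vertex $x$, obtained from $\Pi_x$ at the \emph{right} restriction level) to the suffix $\pi[x,v]$ actually yields a temporal path — i.e., that labels do not decrease across the junction — and simultaneously that the result stays $\tau$-restricted and that \emph{every} edge used lies in $E(H)$ (the suffix because it survives the $\ell_0$-truncation, the prefix $\sigma$ because it survives the $\ell_1 = n$-truncation, which is vacuous). Choosing $\tau'$ as the maximum label on $\pi[s,x]$ rather than $\tau$ itself is what makes the label-monotonicity at the junction work, and is the crux of the whole construction. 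Everything else is bookkeeping with Lemmas~\ref{lemma:hitting_set} and~\ref{lemma:single-pair}.
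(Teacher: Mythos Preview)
Your proof is correct and follows essentially the same approach as the paper: pick an approximate $\tau$-restricted path $\pi$ from $\Pi_v$, keep its suffix $\pi[x,v]$ (which survives the $\ell_0$-truncation), and replace its prefix by an approximate path $\sigma \in \Pi_x$ (which survives the $\ell_1=n$-truncation), then bound the size via Lemmas~\ref{lemma:hitting_set} and~\ref{lemma:single-pair}. The only cosmetic difference is your choice of $\tau'$ as the maximum label on the prefix $\pi[s,x]$, whereas the paper takes $\tau'$ to be the label of the first edge of the suffix $\pi[x,v]$; both choices make the junction monotone and keep $\dle{G}{\tau'}{s}{x}\le |\pi[s,x]|$, so the argument goes through identically.
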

\begin{proof}
We start by proving the first part of the statement. Consider a vertex $v$ such that $\dle{G}{\tau}{s}{v}$ is finite and 
let $\pi \in \Pi_{v}\subseteq \mathcal{P}_0$ be the shortest $\tau$-restricted temporal path from $s$ to $v$ among those in $\mathcal{P}_0$. By Lemma~\ref{lemma:single-pair}, $|\pi| \leq (1+\delta)\dle{G}{\tau}{s}{v}$. If $\pi$ is short, then $\pi$ is entirely contained in $H$ and therefore $\dle{H}{\tau}{s}{v} = \dle{G}{\tau}{s}{v}\leq (1+\delta)^2 \dle{G}{\tau}{s}{v}$. Therefore, we henceforth assume that $\pi$ is long. Let $x \in R_1$ be a vertex that hits $\pi(\ell_0)$. By construction, the path $\pi[x,v]$, being a subpath of $\pi(\ell_0)$, is entirely contained in $H$. Let $\tau'$ be the time-label of the edge incident to $x$ in $\pi[x,v]$. Clearly, $\tau' \leq \tau$. Let $\pi' \in \Pi_{x}$ be a shortest $\tau'$-restricted temporal path from $s$ to $x$ among those in $\Pi_{x}$ (such a path always exists because $\pi[s,x]$ is $\tau'$-restricted). By construction, $\pi'$ is contained in $H$. Therefore, the concatenation of $\pi'$ with $\pi[x,v]$ is a $\tau$-restricted temporal path from $s$ to $v$ that is entirely contained in $H$. Moreover, using Lemma~\ref{lemma:single-pair}, we have $|\pi'| \leq (1+\delta)\dle{G}{\tau}{s}{x}\leq (1+\delta)|\pi[s,x]|$. As a consequence, $\dle{H}{\tau}{s}{v}\leq |\pi'|+|\pi[x,v]| \leq (1+\delta)|\pi[s,x]| + |\pi[x,v]| \leq (1+\delta) |\pi| \leq (1+\delta)^2\cdot\dle{G}{\tau}{s}{v}$.

To bound the size of $H$, we first observe that, for each $v \in V$, $|\Pi_{v}|=O\big(\frac{\log n}{\log(1+\delta)}\big)$ by Lemma~\ref{lemma:single-pair}. As a consequence, $\sum_{\pi \in \mathcal{P}_0}|\pi(\ell_0)|=\sum_{v \in R_0}\sum_{\pi \in \Pi_{v}}|\pi(\ell_0)|= O\big(n \ell_0 \frac{\log n}{\log(1+\delta)}\big)=O\big(\frac{n^{3/2}\log^{3/2} n}{\log(1+\delta)}\big)$. Moreover, by Lemma~\ref{lemma:hitting_set}, $|R_1|=O\big(\frac{n\log n}{\ell_0}\big)=O(\sqrt{n\log n})$. Therefore,  $\sum_{\pi \in \mathcal{P}_1}|\pi(\ell_1)|=\sum_{v \in R_1}\sum_{\pi \in \Pi_{v}}|\pi|=O\big(|R_1|n \frac{\log n}{\log(1+\delta)}\big)=O\big(\frac{n^{3/2}\log^{3/2} n}{\log(1+\delta)}\big)$. Hence, $|E(H)|=O\big(\frac{n^{3/2}\log^{3/2} n}{\log(1+\delta)}\big)$.
\end{proof}

\begin{algorithm}[t] \small
   	\caption{\small Single-source temporal spanner of a temporal graph $G$.}
\label{alg:single-source}
  	
    \SetKwInOut{Input}{Input}
    \SetKwInOut{Output}{Output}

    \Input{A temporal graph $G=(V,E)$ of $n$ vertices and a source vertex $s \in V$.}
    \Output{A single-source temporal spanner $H$ of $G$ w.r.t. $s$.}
    
  	\BlankLine
    \lFor{$i=0 \dots, k-1$}{$\ell_i \gets n^{\frac{i+1}{k}}\log^{1-\frac{i+1}{k}}n$}
    \lForEach{$v \in V$}{use Algorithm~\ref{alg:single-pair} to compute $\Pi_v$}

    \BlankLine

    $R_0 \gets V$; $\mathcal{P}_0 \gets \{\pi \in \Pi_v \mid v \in R_0\}$\;

    \BlankLine

    \For{$i = 1, \dots, k-1$}
    {
        $\mathcal{P}_{i-1}^{\textit{long}} = \{\pi \in \mathcal{P}_{i-1} \mid |\pi|>\ell_{i}\}$\;
        $R_{i} \gets$ hitting set of $\{\pi(\ell_{i-1}) \mid \pi \in \mathcal{P}_{i-1}^{\textit{long}}\}$ computed as in Lemma~\ref{lemma:hitting_set}\;
        $\mathcal{P}_{i} \gets \bigcup_{v \in R_{i}}\Pi_{v}$\;

    }

    \Return $H = \Big(V, \bigcup_{i=0}^{k-1} \bigcup_{\pi \in \mathcal{P}_i} E(\pi(\ell_i))\Big)$\; 	
\end{algorithm}

The technique we used to replace each of the temporal paths in $\mathcal{P}_0^{\text{long}}$ with a temporal path that is longer by a factor of at most $(1+\delta)$ can be applied recursively on the set $\mathcal{P}_1$, for a suitable choice of $\ell_1$, to obtain an even sparser spanner. As we show now,  $k-1$ levels of recursion allow us to compute a single-source temporal $(1+\delta)^{k}$-spanner $H$ of $G$ w.r.t.\ $s$ of size $O\Big(\frac{kn^{1+\frac{1}{k}}\log^{2-\frac{1}{k}}n}{\log(1+\delta)}\Big)$.

In the following we provide the technical details (see  Algorithm~\ref{alg:single-source} for the pseudocode). For every $i=0,\dots,k-1$, let $\ell_i=n^{\frac{i+1}{k}}\log^{1-\frac{i+1}{k}}n$. As before, let $R_0=V$ and $\mathcal{P}_0=\bigcup_{v \in R_0}\Pi_v$. During the $i$-th iteration, the algorithm computes a set $R_{i}$ that hits $\{\pi(\ell_{i-1}) \mid \pi \in \mathcal{P}_{i-1}^{\textit{long}}\}$, where $\mathcal{P}_{i-1}^{\textit{long}}=\{\pi \in \mathcal{P}_{i-1} \mid |\pi|>\ell_{i-1}\}$ is the set of {\em long} temporal paths of $\mathcal{P}_{i-1}$. The $i$-th iteration ends by computing the set $\mathcal{P}_{i}=\bigcup_{v \in R_{i}}\Pi_v$ that is used in the next iteration. 
The edge set of the graph $H$ that is returned by the algorithm is $E(H):=\bigcup_{i=0}^{k-1} \bigcup_{\pi \in \mathcal{P}_i} E(\pi(\ell_i))$.

\begin{theorem}\label{thm:single-source-multiplicative}
For every $\tau = 1,\ldots,L$, for every $i=1,\ldots,k$, and for every $v \in R_{k-i}$, we have that $\dle{H}{\tau}{s}{v} \leq (1+\delta)^{i} \dle{G}{\tau}{s}{v}$. Moreover, the size of $H$ is $O\Big(\frac{kn^{1+\frac{1}{k}}\log^{2-\frac{1}{k}}n}{\log (1+\delta)}\Big)$.
\end{theorem}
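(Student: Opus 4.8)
The plan is to establish the stretch bound by induction on $i$ — this generalizes Lemma~\ref{lemma:single-source_correctness_for_2}, which is exactly the case $k=2$, $i=2$ — and then to read off the size bound from Lemmas~\ref{lemma:hitting_set} and~\ref{lemma:single-pair} by a direct computation. Concretely, I would fix $\tau$ and prove, for $i=1,\dots,k$ and every $v \in R_{k-i}$ with $\dle{G}{\tau}{s}{v}$ finite, that $\dle{H}{\tau}{s}{v} \le (1+\delta)^i \dle{G}{\tau}{s}{v}$; the theorem's statement is then exactly this claim, and taking $i=k$ (where $R_0=V$) and $\tau=L$ recovers the $(1+\delta)^k$-spanner property for ordinary distances.

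For the base case $i=1$, a vertex $v \in R_{k-1}$ has $\Pi_v \subseteq \mathcal{P}_{k-1}$, and by Lemma~\ref{lemma:single-pair} some $\tau$-restricted $\pi \in \Pi_v$ satisfies $|\pi|\le(1+\delta)\dle{G}{\tau}{s}{v}$; since $\ell_{k-1}=n$ exceeds the number of edges of any temporal path, $\pi(\ell_{k-1})=\pi$ lies entirely in $H$, so $\dle{H}{\tau}{s}{v}\le|\pi|$. For the inductive step ($i\ge2$), pick $v \in R_{k-i}$ and, again by Lemma~\ref{lemma:single-pair}, a $\tau$-restricted $\pi \in \Pi_v \subseteq \mathcal{P}_{k-i}$ with $|\pi|\le(1+\delta)\dle{G}{\tau}{s}{v}$. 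If $|\pi|\le\ell_{k-i}$ then $\pi(\ell_{k-i})=\pi\subseteq H$ and we are done; otherwise $\pi \in \mathcal{P}_{k-i}^{\textit{long}}$, so iteration $k-i+1$ of the algorithm produces $R_{k-i+1}$ hitting $\pi(\ell_{k-i})$ — let $x \in R_{k-i+1}$ be a hitting vertex. Split $\pi$ at $x$: since $x$ lies on the last $\ell_{k-i}$ edges of $\pi$, the subpath $\pi[x,v]$ is contained in $\pi(\ell_{k-i})\subseteq H$. Let $\tau'$ be the label of the first edge of $\pi[x,v]$, so $\tau'\le\tau$ and $\pi[s,x]$ is $\tau'$-restricted, whence $\dle{G}{\tau'}{s}{x}\le|\pi[s,x]|$. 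Since $x \in R_{k-i+1}=R_{k-(i-1)}$, the inductive hypothesis at level $i-1$ and time-label $\tau'$ gives a $\tau'$-restricted $s$–$x$ path in $H$ of length at most $(1+\delta)^{i-1}|\pi[s,x]|$; concatenating it with $\pi[x,v]$ (which is legal since the former ends with a label $\le\tau'$ and the latter begins with label $\tau'$, and the result is $\tau$-restricted) yields $\dle{H}{\tau}{s}{v} \le (1+\delta)^{i-1}|\pi[s,x]|+|\pi[x,v]| \le (1+\delta)^{i-1}|\pi| \le (1+\delta)^i\dle{G}{\tau}{s}{v}$.

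For the size, I would bound $|E(H)| \le \sum_{i=0}^{k-1}\sum_{\pi\in\mathcal{P}_i}|E(\pi(\ell_i))| \le \sum_{i=0}^{k-1}|\mathcal{P}_i|\,\ell_i$, use $|\mathcal{P}_i|\le\sum_{v\in R_i}|\Pi_v|=O\big(|R_i|\tfrac{\log n}{\log(1+\delta)}\big)$ from Lemma~\ref{lemma:single-pair}, and bound $|R_i|$ via Lemma~\ref{lemma:hitting_set}: $|R_0|=n$, while for $i\ge1$ the sets $V(\pi(\ell_{i-1}))$ have more than $\ell_{i-1}=n^{i/k}\log^{1-i/k}n$ vertices, so $|R_i|=O\big(\tfrac{n}{\ell_{i-1}}\log n\big)=O(n^{1-i/k}\log^{i/k}n)$. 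The point is that the products $|R_i|\,\ell_i$ all equal $O(n^{1+1/k}\log^{1-1/k}n)$ (the $i=0$ term $n\cdot n^{1/k}\log^{1-1/k}n$ included), so each level contributes $O\big(\tfrac{n^{1+1/k}\log^{2-1/k}n}{\log(1+\delta)}\big)$ edges and the $k$ levels sum to the claimed $O\big(\tfrac{kn^{1+1/k}\log^{2-1/k}n}{\log(1+\delta)}\big)$. One should also note that the collections passed to Lemma~\ref{lemma:hitting_set} have polynomial size, which holds whenever $1/\log(1+\delta)$ is polynomially bounded in $n$ — in particular for the choice $\delta=(1+\epsilon)^{1/\log n}-1$, $k=\lfloor\log n\rfloor$ used to derive the $O\big(\tfrac{n\log^4 n}{\log(1+\epsilon)}\big)$ corollary.

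The main obstacle I anticipate is the index bookkeeping in the inductive step — correctly pairing level $k-i$ of the path hierarchy with iteration $k-i+1$ of the algorithm and the hitting set $R_{k-i+1}=R_{k-(i-1)}$ — together with the time-label argument certifying that the recursively-obtained $s$–$x$ path in $H$ is $\tau'$-restricted with $\tau'\le\tau$, so that the inductive hypothesis applies at exactly the right level and the concatenation remains a valid $\tau$-restricted temporal walk. The size estimate, by contrast, is a routine geometric computation once Lemmas~\ref{lemma:hitting_set} and~\ref{lemma:single-pair} are available.
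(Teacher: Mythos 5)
Your proof is correct and follows the paper's argument essentially verbatim: induction on $i$ with a short/long split at threshold $\ell_{k-i}$, hitting vertex $x\in R_{k-i+1}$, time-label argument for the concatenation, and the geometric size computation via $|R_i|\ell_i = O(n^{1+1/k}\log^{1-1/k}n)$. Two small refinements you make are actually cleaner than the published text: you invoke the inductive hypothesis directly to produce the $\tau'$-restricted $s$--$x$ path in $H$ (the paper's reference to a path $\pi'\in\Pi_x$ being ``contained in $H$'' is only literally true at the last level), and you explicitly note the polynomiality condition needed for Lemma~\ref{lemma:hitting_set}.
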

\begin{proof}
We start proving the first part of the theorem statement. The proof is by induction on $i$. Fix a vertex $v \in R_{k-i}$ such that $\dle{G}{\tau}{s}{v}$ is finite. 

For the base case $i=1$, we observe that $\Pi_v$ is entirely contained in $H$ by construction. Therefore, by Lemma~\ref{lemma:single-pair}, $\dle{H}{\tau}{s}{v} \leq (1+\delta) \dle{G}{\tau}{s}{v}$ and the claim follows.

We now prove the inductive case. We assume that the claim holds for $i-1$ and we prove it for $i$. Let $\pi \in \Pi_v$ be a shortest $\tau$-restricted temporal path from $s$ to $v$ among those in $\Pi_v$. By Lemma~\ref{lemma:single-pair}, $|\pi| \leq (1+\delta)\dle{G}{\tau}{s}{v}$. Moreover, by definition, $\pi \in \mathcal{P}_{k-i}$. If $\pi$ is short, i.e., $|\pi|\leq \ell_{k-i}$, then $\pi$ is entirely contained in $H$ and therefore $\dle{H}{\tau}{s}{v} \leq (1+\delta) \dle{G}{\tau}{s}{v} \leq (1+\delta)^{k-i}  \dle{G}{\tau}{s}{v}$. So, in the following we assume that $\pi$ is long. Let $x \in R_{k-i+1}$ be a vertex that hits $\pi(\ell_{k-i})$. By construction, the path $\pi[x,v]$, being a subpath of $\pi(\ell_{k-i})$, is entirely contained in $H$. Let $\tau'$ be the label of the edge incident to $x$ in $\pi[x,v]$. Clearly, $\tau' \leq \tau$. Moreover, by inductive hypothesis, $\dle{H}{\tau'}{s}{x} \leq (1+\delta)^{i-1}\cdot |\pi[s,x]|$. Let $\pi'$ be a shortest $\tau'$-restricted temporal path from $s$ to $x$ among those in $\Pi_x$ (such a path always exists because $\pi[s,x]$ is $\tau'$-restricted), and notice that $\pi'$ is contained in $H$. Therefore, the concatenation of $\pi'$ with $\pi[x,v]$ is a $\tau$-restricted temporal path from $s$ to $v$ that is entirely contained in $H$. As a consequence, $\dle{H}{\tau}{s}{v}\leq |\pi'|+|\pi[x,v]| \leq (1+\delta)^{i-1} \cdot |\pi[s,x]| + |\pi[x,v]| \leq (1+\delta)^{i-1} \cdot |\pi| \leq (1+\delta)^{i} \dle{G}{\tau}{s}{v}$.

To bound the size of $H$, we first observe that, for each $v \in V$,  $|\Pi_{v}|=O\big(\frac{\log n}{\log(1+\delta)}\big)$ by Lemma~\ref{lemma:single-pair}. Next, using Lemma~\ref{lemma:hitting_set}, we observe that each $R_i$, with $i\geq 1$, has size $|R_{i}|=O\big(\frac{n\log n}{\ell_{i-1}}\big)=O\big(n^{1-\frac{i}{k}}\log^{\frac{i}{k}} n\big)$. Furthermore, also $|R_0|=n=n^{1-\frac{0}{k}}\log^{\frac{0}{k}}n$. Therefore, for every $i=0,\ldots,k-1$, we have $|R_i|\ell_i=O\big(n^{1+\frac{1}{k}}\log^{1-\frac{1}{k}}n\big)$. As a consequence,
$$\sum_{\pi \in \mathcal{P}_i}|\pi(\ell_i)| =\sum_{v \in R_i}\sum_{\pi \in \Pi_v}|\pi(\ell_i)|=O\left(|R_i|\ell_i \frac{\log n}{\log(1+\delta)}\right) = O\left(\frac{n^{1+\frac{1}{k}} \log^{2-\frac{1}{k}}}{\log(1+\delta)} \right).
$$
Hence, $|E(H)|=\sum_{i=0}^{k-1}\sum_{\pi \in \mathcal{P}_i}|\pi(\ell_i)|=O\Big(\frac{kn^{1+\frac{1}{k}} \log^{2+\frac{1}{k}}}{\log(1+\delta)} \Big). 
$
\end{proof}

\noindent The following corollary follows by choosing $\tau = L$ and $i=k$ (so that $R_{k-i} = R_0 = V$):
\begin{corollary}
Let $G$ be a temporal graph with $n$ vertices and let $s$ be a vertex of $G$. The graph $H$ returned by Algorithm~\ref{alg:single-source} is a single-source temporal $(1+\delta)^k$-spanner of $G$ w.r.t. $s$ of size $O\Big(\frac{kn^{1+\frac{1}{k}} \log^{2+\frac{1}{k}}}{\log(1+\delta)} \Big)$.
\end{corollary}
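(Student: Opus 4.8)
The plan is to obtain the corollary as a direct specialization of Theorem~\ref{thm:single-source-multiplicative}, instantiated at the parameters $\tau = L$ and $i = k$. The only substantive point to check is that, for this choice of $\tau$, the $\tau$-restricted distances coincide with the ordinary temporal distances. First I would recall the normalization adopted in Section~\ref{sec:model}: we assume w.l.o.g.\ that $L = \max_{e \in E}\lambda(e)$. Hence every edge of $G$, and a fortiori every edge of the subgraph $H$ returned by Algorithm~\ref{alg:single-source}, has time-label at most $L$, so every temporal path in $G$ or in $H$ is $L$-restricted. This immediately gives $\dle{G}{L}{s}{v} = d_G(s,v)$ and $\dle{H}{L}{s}{v} = d_H(s,v)$ for every $v \in V$.

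Next I would observe that Algorithm~\ref{alg:single-source} sets $R_0 = V$, so taking $i = k$ in Theorem~\ref{thm:single-source-multiplicative} makes the quantifier range over $v \in R_{k-i} = R_0 = V$. Plugging $\tau = L$ and $i = k$ into the first part of that theorem therefore yields, for every $v \in V$,
\[
d_H(s,v) \;=\; \dle{H}{L}{s}{v} \;\leq\; (1+\delta)^{k}\,\dle{G}{L}{s}{v} \;=\; (1+\delta)^{k}\, d_G(s,v),
\]
which is exactly the defining inequality of a single-source temporal $(1+\delta)^k$-spanner of $G$ w.r.t.\ $s$. The bound on $|E(H)|$ is already asserted in the statement of Theorem~\ref{thm:single-source-multiplicative}, so it carries over verbatim.

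I do not expect any real obstacle: the corollary is a pure instantiation, and the only step requiring a sentence of justification is the identification of $L$-restricted distances with unrestricted temporal distances, which is a consequence of the assumption $L = \max_{e \in E}\lambda(e)$. Everything else is inherited from the theorem.
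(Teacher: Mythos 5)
Your proposal is correct and matches the paper's approach exactly: the paper derives the corollary by the same instantiation $\tau = L$, $i = k$ of Theorem~\ref{thm:single-source-multiplicative}, noting $R_{k-i}=R_0=V$. The only thing you add is the explicit observation that $L$-restricted distances equal ordinary temporal distances under the normalization $L=\max_{e}\lambda(e)$, which the paper leaves implicit but is a correct and relevant justification.
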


\subsection{Extension to temporal additive-spanners}
\label{section:additive}
Our construction of Section~\ref{sec:single_source_ub} can be easily adapted to compute single-source temporal $\beta$-additive spanners of $G$ w.r.t. $s$ of size $O\big(\frac{n^2\log^4 n}{\beta}\big)$, for every $1 \leq \beta < n$. For technical convenience, we show how to adapt Algorithm~\ref{alg:single-source} so as, for any (not necessarily integral) value of $0 < \delta < n$ and any positive integer $k$, it builds a single-source temporal  $k\delta$-additive spanner of $G$ w.r.t. $s$ of size $O\Big(\frac{kn^{2+\frac{1}{k}}\log^{2-\frac{1}{k}}n}{\delta}\Big)$. The desired bound on the size of the single source temporal $\beta$-additive spanner is obtained by choosing $k=\lfloor\log n\rfloor$ and $\delta=\beta/\lfloor \log n \rfloor$.

We only need to extend Algorithm~\ref{alg:single-pair} so as it computes a set $\Pi_v$ of $O(n/\delta)$ temporal paths from $s$ to $v$ of $G$ such that, for every $\tau = 1,\ldots,L$, there exists a temporal path $\pi \in \Pi_v$ satisfying $|\pi| \leq \dle{G}{\tau}{s}{v}+\delta$. This can be done by modifying the if-statement so as the scanned path $\pi$ is added to $\Pi_v$ when $\dle{G}{\tau}{s}{v}\neq +\infty$ and $\dle{G}{\tau}{s}{v} < t+\delta$. No modification is required in the pseudocode of Algorithm~\ref{alg:single-source}. A proof similar to the one of Theorem~\ref{thm:single-source-multiplicative} allows us to obtain the following result.

\begin{theorem}
\label{thm:additive_single_source_spanner}
For every $\tau = 1,\ldots,L$, for every $i=1,\ldots,k$, and for every $v \in R_{k-i}$, we have that $\dle{H}{\tau}{s}{v} \leq \dle{G}{\tau}{s}{v} + i\delta$. Moreover, the size of $H$ is $O\Big(\frac{kn^{2+\frac{1}{k}}\log^{2-\frac{1}{k}}n}{\delta}\Big)$.
\end{theorem}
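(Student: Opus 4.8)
The plan is to transcribe the proof of Theorem~\ref{thm:single-source-multiplicative} almost verbatim, systematically replacing every multiplicative slack of a factor $(1+\delta)$ by an additive slack of $\delta$. The only genuine conceptual change is that the errors introduced at the $k$ recursion levels of Algorithm~\ref{alg:single-source} now accumulate by \emph{addition} rather than by multiplication, which is precisely why $i$ levels cost an additive $i\delta$ instead of a factor $(1+\delta)^i$.

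First I would establish the additive analogue of Lemma~\ref{lemma:single-pair}: the set $\Pi_v$ returned by the modified Algorithm~\ref{alg:single-pair} --- which now inserts the scanned path $\pi$ only when it improves on the last inserted path by more than an additive $\delta$ --- satisfies (i) for every $\tau=1,\ldots,L$ there is a $\tau$-restricted temporal path $\pi\in\Pi_v$ with $|\pi|\le\dle{G}{\tau}{s}{v}+\delta$, and (ii) $|\Pi_v|=O(n/\delta)$. Part (ii) is immediate: the if-condition forces the lengths of consecutively inserted paths to strictly decrease by more than $\delta$, and since every such length lies in $\{1,\dots,n\}$, at most $n/\delta$ insertions occur. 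Part (i) is the delicate point, and it is where I would use that $\tau\mapsto\dle{G}{\tau}{s}{v}$ is non-increasing (any $\tau$-restricted path is also $\tau'$-restricted for $\tau'\ge\tau$): when the scan reaches a $\tau$ with $\dle{G}{\tau}{s}{v}<+\infty$, either a shortest $\tau$-restricted $s$--$v$ path has itself been inserted, or the last inserted path $\pi$ --- which belongs to some $\tau'\le\tau$ and is therefore $\tau$-restricted --- was not superseded, so by the if-condition $|\pi|\le\dle{G}{\tau}{s}{v}+\delta$.

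Next I would repeat the induction on $i$ from the proof of Theorem~\ref{thm:single-source-multiplicative} to show that $\dle{H}{\tau}{s}{v}\le\dle{G}{\tau}{s}{v}+i\delta$ for all $\tau$, all $i=1,\dots,k$, and all $v\in R_{k-i}$. In the base case $i=1$ one uses $\ell_{k-1}=n$, so $\pi(\ell_{k-1})=\pi$ for every path, $\Pi_v\subseteq\mathcal{P}_{k-1}$ lies entirely in $H$, and part (i) gives $\dle{H}{\tau}{s}{v}\le\dle{G}{\tau}{s}{v}+\delta$. For the inductive step I take the shortest $\tau$-restricted $\pi\in\Pi_v$ (so $|\pi|\le\dle{G}{\tau}{s}{v}+\delta$ and $\pi\in\mathcal{P}_{k-i}$): if $\pi$ is short ($|\pi|\le\ell_{k-i}$) then $\pi$ lies entirely in $H$ and we are done; otherwise I pick $x\in R_{k-i+1}$ hitting $\pi(\ell_{k-i})$, let $\tau'\le\tau$ be the label of the edge of $\pi[x,v]$ incident to $x$, observe that $\pi[x,v]$ lies in $H$, and glue to it a shortest $\tau'$-restricted $s$--$x$ path of $H$, which by the inductive hypothesis has length at most $\dle{G}{\tau'}{s}{x}+(i-1)\delta\le|\pi[s,x]|+(i-1)\delta$ since $\pi[s,x]$ is $\tau'$-restricted; this gives $\dle{H}{\tau}{s}{v}\le|\pi[s,x]|+(i-1)\delta+|\pi[x,v]|=|\pi|+(i-1)\delta\le\dle{G}{\tau}{s}{v}+i\delta$. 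Instantiating $i=k$, so that $R_{k-i}=R_0=V$, proves the first part of the statement.

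For the size bound I would reuse the accounting of Theorem~\ref{thm:single-source-multiplicative}, with the factor $|\Pi_v|=O(\log n/\log(1+\delta))$ replaced by $|\Pi_v|=O(n/\delta)$ from part (ii). Since $|R_i|\ell_i=O(n^{1+1/k}\log^{1-1/k}n)$ for every $i=0,\dots,k-1$ by Lemma~\ref{lemma:hitting_set}, one obtains $\sum_{\pi\in\mathcal{P}_i}|\pi(\ell_i)|\le|R_i|\cdot\max_v|\Pi_v|\cdot\ell_i=O\big(n^{2+1/k}\log^{1-1/k}n/\delta\big)$, and summing over the $k$ levels gives $|E(H)|=O\big(kn^{2+1/k}\log^{2-1/k}n/\delta\big)$ (Lemma~\ref{lemma:hitting_set} still applies since each $\mathcal{P}_{i-1}$ has size polynomial in $n$ in the relevant range of $\delta$). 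Choosing $k=\lfloor\log n\rfloor$ and $\delta=\beta/\lfloor\log n\rfloor$ then gives $k\delta=\beta$ and size $O(n^2\log^4 n/\beta)$, recovering the stated additive-spanner bound. I expect part (i) of the additive lemma to be the only place requiring genuine care --- the monotonicity argument sketched above --- while the rest is a mechanical transcription in which ``$\times(1+\delta)$ per recursion level'' becomes ``$+\delta$ per recursion level''.
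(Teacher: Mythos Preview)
Your proposal is correct and follows exactly the approach the paper indicates (it explicitly says the proof is ``similar to the one of Theorem~\ref{thm:single-source-multiplicative}''): replace the multiplicative slack $(1+\delta)$ by an additive $\delta$ in the analogue of Lemma~\ref{lemma:single-pair}, rerun the same induction on $i$, and redo the size accounting with $|\Pi_v|=O(n/\delta)$ in place of $O(\log n/\log(1+\delta))$. One small inconsistency: your per-level size bound $O\big(n^{2+1/k}\log^{1-1/k}n/\delta\big)$ summed over $k$ levels gives $O\big(kn^{2+1/k}\log^{1-1/k}n/\delta\big)$, not $\log^{2-1/k}$ as you then write---your bound is actually a $\log n$ factor \emph{tighter} than the theorem's stated bound, so the claim still follows, but the exponent jump in your last line is unjustified as written.
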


By choosing $\tau = L$ and $i=k$ we obtain the following corollary.

\begin{corollary}
Let $G$ be a temporal graph with $n$ vertices and let $s$ be a vertex of $G$. The graph $H$ returned by Algorithm~\ref{alg:single-source} is a single-source temporal  $k\delta$-additive spanner of $G$ w.r.t. $s$ of size $O\Big(\frac{kn^{2+\frac{1}{k}}\log^{2-\frac{1}{k}}n}{\delta}\Big)$. 
\end{corollary}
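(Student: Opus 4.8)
The plan is to mimic the proof of Theorem~\ref{thm:single-source-multiplicative} essentially line by line, turning every multiplicative slack ``$\cdot(1+\delta)$'' into an additive slack ``$+\delta$''. The first ingredient is the additive analogue of Lemma~\ref{lemma:single-pair} for the modified version of Algorithm~\ref{alg:single-pair}: for every $\tau=1,\dots,L$ the set $\Pi_v$ contains a $\tau$-restricted temporal path $\pi$ from $s$ to $v$ with $|\pi|\le \dle{G}{\tau}{s}{v}+\delta$, and $|\Pi_v|=O(n/\delta)$. The approximation guarantee is proved exactly as in Lemma~\ref{lemma:single-pair}: scanning $\tau$ in increasing order and using that $\dle{G}{\tau}{s}{v}$ is non-increasing in $\tau$, at step $\tau$ either a shortest $\tau$-restricted path is inserted into $\Pi_v$, or some already-stored shortest $\tau'$-restricted path (with $\tau'\le\tau$) has length at most $\dle{G}{\tau}{s}{v}+\delta$. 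For the cardinality bound, the relaxed acceptance test keeps a new path only when its length is smaller than the length $t$ of the most recently stored path by more than $\delta$, so the lengths of the stored paths form a strictly decreasing sequence in $\{1,\dots,n\}$ with consecutive gaps exceeding $\delta$, hence $|\Pi_v|\le n/\delta+1=O(n/\delta)$.

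I would then establish the stretch bound by induction on $i$, following the proof of Theorem~\ref{thm:single-source-multiplicative}. For the base case $i=1$ and $v\in R_{k-1}$: since $\ell_{k-1}=n$, every path of $\Pi_v\subseteq\mathcal{P}_{k-1}$ is entirely contained in $H$, so the modified Lemma~\ref{lemma:single-pair} gives $\dle{H}{\tau}{s}{v}\le\dle{G}{\tau}{s}{v}+\delta$. For the inductive step, fix $v\in R_{k-i}$ with $\dle{G}{\tau}{s}{v}$ finite and let $\pi\in\Pi_v\subseteq\mathcal{P}_{k-i}$ be the shortest $\tau$-restricted path among those in $\Pi_v$, so that $|\pi|\le\dle{G}{\tau}{s}{v}+\delta$. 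If $|\pi|\le\ell_{k-i}$ then $\pi$ is entirely contained in $H$ and we are done; otherwise $\pi$ is long, so some $x\in R_{k-i+1}$ hits $\pi(\ell_{k-i})$ and the suffix $\pi[x,v]$ of $\pi(\ell_{k-i})$ lies in $H$. Letting $\tau'\le\tau$ be the label of the edge of $\pi[x,v]$ incident to $x$, and noting that $\pi[s,x]$ is a $\tau'$-restricted path from $s$ to $x$, the inductive hypothesis applied to $x\in R_{k-(i-1)}$ and $\tau'$ provides a $\tau'$-restricted temporal path from $s$ to $x$ in $H$ of length at most $\dle{G}{\tau'}{s}{x}+(i-1)\delta\le|\pi[s,x]|+(i-1)\delta$. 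Concatenating this path with $\pi[x,v]$ gives a $\tau$-restricted temporal path from $s$ to $v$ in $H$ (the first part uses only labels $\le\tau'$ and ends with a label $\le\tau'$, which matches the first label $\tau'$ of $\pi[x,v]$), of length at most $|\pi[s,x]|+(i-1)\delta+|\pi[x,v]|=|\pi|+(i-1)\delta\le\dle{G}{\tau}{s}{v}+i\delta$, completing the induction.

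For the size bound I would reuse the computation in the proof of Theorem~\ref{thm:single-source-multiplicative} with $|\Pi_v|=O(n/\delta)$ in place of $O\bigl(\tfrac{\log n}{\log(1+\delta)}\bigr)$: Lemma~\ref{lemma:hitting_set} gives $|R_0|=n$ and $|R_i|=O\bigl(\tfrac{n\log n}{\ell_{i-1}}\bigr)$ for $i\ge1$, so with $\ell_i=n^{(i+1)/k}\log^{1-(i+1)/k}n$ we get $|R_i|\,\ell_i=O(n^{1+1/k}\log^{1-1/k}n)$ for all $i$, and hence $|E(H)|=\sum_{i=0}^{k-1}\sum_{\pi\in\mathcal{P}_i}|\pi(\ell_i)|=\sum_{i=0}^{k-1}O\bigl(|R_i|\,\ell_i\cdot n/\delta\bigr)=O\bigl(\tfrac{kn^{2+1/k}\log^{2-1/k}n}{\delta}\bigr)$; the corollary follows by taking $\tau=L$ and $i=k$ so that $v\in R_0=V$. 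I expect the only genuinely new point to be the modified single-pair subroutine of the first paragraph --- verifying that the weaker, additive acceptance rule still certifies an additive-$\delta$ approximation for every $\tau$ while shrinking $|\Pi_v|$ from $O\bigl(\tfrac{\log n}{\log(1+\delta)}\bigr)$ down to $O(n/\delta)$; everything downstream is a mechanical re-reading of the proof of Theorem~\ref{thm:single-source-multiplicative} with ``$\cdot(1+\delta)$'' replaced by ``$+\delta$''.
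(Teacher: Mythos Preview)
Your proposal is correct and follows precisely the approach the paper intends: the paper states that Theorem~\ref{thm:additive_single_source_spanner} is obtained by ``a proof similar to the one of Theorem~\ref{thm:single-source-multiplicative}'' with the modified Algorithm~\ref{alg:single-pair}, and then derives the corollary by taking $\tau=L$ and $i=k$, which is exactly what you do. One cosmetic remark: your displayed size computation actually yields $O\bigl(kn^{2+1/k}\log^{1-1/k}n/\delta\bigr)$ (since $|R_i|\ell_i\cdot n/\delta=O(n^{2+1/k}\log^{1-1/k}n/\delta)$), which is a $\log n$ factor \emph{better} than the stated bound; the claimed $\log^{2-1/k}$ upper bound is therefore still valid, but the extra logarithm in your final expression is not supported by the preceding arithmetic.
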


\subsection{Our lower bound}\label{sec:single-source_lb}

\begin{figure}
    \centering
    \includegraphics[width=\textwidth]{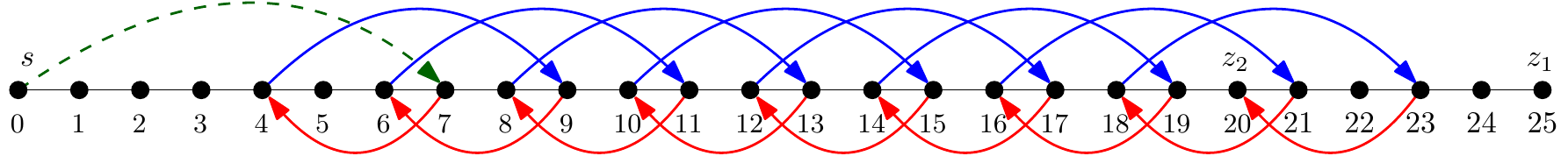}
    \caption{Example of the lower bound with $h=2$ and $\beta = 0$. Path $\pi_1$ consists of the black edges. Path $\pi_2$ is built on top of $\pi_1$, whose vertices have been numbered in order of a traversal from $s$, and consists in: the fist hop from $s$ to $\mu$ (in green), the forward hops (in blue), and the backward hops (in red). The arrows on the edges of $\pi_2$ are directed away from $s$ along $\pi_2$.}
    \label{fig:my_label}
\end{figure}
In this section we show that, for every $\beta \geq 0$, there is a temporal graph $G$ of $n$ vertices for which the size of any single-source temporal $\beta$-additive spanner of $G$ w.r.t.\ $s$ is $\Omega\big(\frac{n^2}{1+\beta}\big)$. This gives a lower bound of $\Omega(n^2)$ for the size of a single-source temporal preserver.

The temporal graph $G$ has $n=(13+\beta)h$ vertices, where $h$ is an integer, and is formed by the union of $h$ pairwise edge-disjoint temporal paths $\pi_1,\ldots,\pi_h$. Each path $\pi_i$ goes from $s$ to a vertex $z_i$ and has length $\Omega(n-i(1+\beta))$. The construction guarantees that the unique temporal path of $G$ from $s$ to $z_i$ of length of at most $d_G(s,z_i)+\beta$ is $\pi_i$.
This implies that the size of $G$ is $\Omega\big(\frac{n^2}{1+\beta}\big)$, as desired.

The temporal path $\pi_1$ is a Hamiltonian path that spans all the $n$ vertices of $G$ and goes from $s$ to $z_1$. All edges of $\pi_1$ have time-label $1$. The remaining temporal paths are defined recursively. More precisely, for each $i=2,\ldots,h$, the temporal path $\pi_i$ is defined on top of the temporal path $\pi_{i-1}$ as follows. Let us number the vertices visited in a traversal of $\pi_{i-1}$ from $s$ to $z_{i-1}$ in order from $0$ to $|\pi_{i-1}|-1$. The temporal path $\pi_i$ is defined as a sequence of hops over the vertices of $\pi_{i-1}$. We call {\em offset} a value $\mu$ that is equal to $\beta + 7$ for even values of $\beta$, and to $\beta + 8$ for odd values of $\beta$. The first hop is the one from $s$ to vertex $\mu$, if it exists. The rest of the path is given by a maximal alternating sequence of {\em backward} and {\em forward} hops that do not visit $z_{i-1}$. 
A generic backward hop goes from vertex $j$, with $j$ odd, to vertex $j-3$, while a generic forward hop goes from vertex $j$, with $j$ even, to vertex $j+5$. All the edges of $\pi_i$ have time-label $i$. A pictorial example of the definition of $\pi_i$ is given in Figure~\ref{fig:my_label}. The choice of odd values for the offset is a necessary condition to have pairwise edge-disjoint paths, while the dependency of the offset on $\beta$ guarantees that $\pi_i$ is the unique temporal path from $s$ to $z_i$ in $G$ such that $|\pi_i|\leq d_G(s,z_i)+\beta$. Finally, the alternating sequence of backward and forward hops guarantees that $|\pi_i|=\Omega(n-i(1+\beta))$.  The above discussion yields the following theorem, and a corollary for the case $\beta=0$.

\begin{theorem}
\label{thm:sslb}
For every positive integer $n$ and every $\beta\geq 0$, there is a temporal graph $G$ of $n$ vertices and a source vertex $s$ of $G$ such that any single-source temporal $\beta$-additive spanner of $G$ w.r.t.\ $s$ has size $\Omega\big(\frac{n^2}{1+\beta}\big)$.
\end{theorem}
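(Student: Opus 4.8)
The plan is to make the informal description in the excerpt fully rigorous by carefully verifying three properties of the family of paths $\pi_1, \ldots, \pi_h$: (i) they are pairwise edge-disjoint, so that $|E(G)| = \sum_{i=1}^h |\pi_i|$; (ii) each $\pi_i$ has length $|\pi_i| = \Omega(n - i(1+\beta))$; and (iii) for every $i$, the path $\pi_i$ is the \emph{unique} temporal path from $s$ to $z_i$ in $G$ with at most $d_G(s,z_i) + \beta$ edges. Given (i)--(iii), any single-source temporal $\beta$-additive spanner must contain every edge of every $\pi_i$ (removing an edge of $\pi_i$ would, by uniqueness, force the $s$-to-$z_i$ distance to exceed $d_G(s,z_i)+\beta$), hence has size $\sum_{i=1}^h |\pi_i| = \Omega\big(\sum_{i=1}^{h}(n - i(1+\beta))\big) = \Omega(h \cdot n) = \Omega(n^2/(1+\beta))$ using $n = \Theta((1+\beta)h)$ and summing the bottom half of the indices.

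First I would fix notation for the recursive construction: number the vertices of $\pi_{i-1}$ along the path from $0$ (which is $s$) to $|\pi_{i-1}|-1$ (which is $z_{i-1}$), and describe $\pi_i$ as the hop sequence $0 \to \mu \to \mu-3 \to \mu+2 \to \mu-1 \to \cdots$, i.e., alternating forward hops ($j \mapsto j+5$ from even $j$) and backward hops ($j \mapsto j-3$ from odd $j$), starting from the offset $\mu \in \{\beta+7, \beta+8\}$ chosen to have the correct parity (odd-indexed landing points alternate with even-indexed ones, and $\mu$ itself must be even for the first backward hop to apply — I would double-check the parities here against Figure~\ref{fig:my_label}). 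Observe that each pair (forward hop then backward hop) advances the "position" by a net $+2$, so the indices visited by $\pi_i$ form an increasing-on-average zig-zag covering roughly all of $\{0, \ldots, |\pi_{i-1}|-1\}$, giving $|\pi_i| = \Omega(|\pi_{i-1}| - \mu) = \Omega(|\pi_{i-1}| - (1+\beta))$; unrolling the recursion from $|\pi_1| = n-1$ yields $|\pi_i| = \Omega(n - i(1+\beta))$. The fact that hops have lengths $3$ and $5$ (coprime, both odd) combined with the shift by the offset at each level is what guarantees no edge of $\pi_{i-1}$ is reused by $\pi_i$, and no edge is shared between two different higher-level paths — this is property (i), which I would verify by checking that an edge of $\pi_i$ connects vertices whose $\pi_{i-1}$-indices differ by $3$ or $5$, whereas edges internal to $\pi_{i-1}$ connect consecutive indices, and edges of $\pi_{i'}$ for $i' > i$ live "one level up" and are again displaced.

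The main obstacle is property (iii), the uniqueness of the short temporal path from $s$ to $z_i$. The key structural fact is that all edges of $\pi_i$ have time-label exactly $i$, and time-labels are $1$ on $\pi_1$, increasing with the level; so any temporal path from $s$ must traverse a (possibly empty) prefix of label-$1$ edges, then label-$2$ edges, and so on, never decreasing. A temporal path reaching $z_i$ thus decomposes into segments, the last of which uses only label-$i$ edges and hence lies entirely in $\pi_i$ (as $z_i$ is only incident to $\pi_i$-edges of label $i$ and lower-level edges of smaller label); inductively the segment structure forces the path to "shadow" $\pi_i$, and any deviation onto lower-level edges to take a shortcut is penalized: because the offset $\mu$ exceeds $\beta$ and because of the $+3/-5$ hop geometry, any alternative route must "pay" more than $\beta$ extra edges compared to $\pi_i$ — this is precisely the role of the dependence of $\mu$ on $\beta$ (and the $+7/+8$ parity correction). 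I would isolate this as a lemma: \emph{any temporal $s$-$z_i$ path other than $\pi_i$ has length $\ge d_G(s,z_i) + \beta + 1$}, proved by induction on $i$, carefully accounting for how much a detour through $\pi_{i-1}$ versus following $\pi_i$ costs. Once this lemma is in hand, the theorem follows by the removal argument and the arithmetic in the first paragraph, and the corollary for $\beta = 0$ is immediate by substitution.
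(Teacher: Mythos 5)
Your proposal follows the same overall strategy as the paper: construct $h = \Theta(n/(1+\beta))$ pairwise edge-disjoint temporal paths $\pi_1,\dots,\pi_h$ via the recursive offset-and-hop construction, and show that each $\pi_i$ is the unique sufficiently short temporal path from $s$ to its endpoint $z_i$. Your three-part decomposition (edge-disjointness, length lower bound, uniqueness) is exactly the decomposition the paper uses (Lemmas~\ref{lm:edge_disjoint_paths}, \ref{lemma:ssPathsLB}, and \ref{lemma:minimality}), and your arithmetic on the length bound and on the final sum is correct.

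However, the proposal has one genuine gap and one error. The parity error: you write that ``$\mu$ itself must be even for the first backward hop to apply,'' but by your own rule a backward hop applies only from an \emph{odd} index $j$ (the hop is $j\mapsto j-3$ for $j$ odd), so $\mu$ must be odd --- which is why the paper picks $\mu=\beta+7$ for even $\beta$ and $\mu=\beta+8$ for odd $\beta$. You flagged the parities as uncertain, but this should be fixed before it propagates through the construction.

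The more substantial gap is in the edge-disjointness argument. You correctly observe that an edge of $\pi_i$ joins two vertices whose $\pi_{i-1}$-indices differ by $3$ or $5$, which shows $\pi_i$ is disjoint from $\pi_{i-1}$. But that observation does not directly address whether $\pi_i$ and $\pi_{i'}$ can share an edge when $i' > i+1$: the indices are re-numbered at each level, so ``differ by $3$ or $5$'' is a statement in the reference frame of one specific level and cannot be compared across two arbitrary levels without a common coordinate system. Your remark that edges of $\pi_{i'}$ ``live one level up and are again displaced'' gestures at this but does not close it. The paper resolves the issue by introducing a \emph{global} labeling $l(\cdot)$ of vertices (anchored to the ordering along $\pi_1$) and defining, for each edge $(u,v)$, the signed displacement $\sigma(u,v)=l(v)-l(u)$; Lemma~\ref{lemma:disjoint} then proves by induction on $i$ that every non-$s$ edge of $\pi_i$ has $\sigma\in\{-(4i-1),\,4i+1\}$. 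Since these two-element sets are pairwise disjoint across levels, edge-disjointness follows for all pairs. To make your argument rigorous you would need to prove such a typing lemma (or an equivalent invariant expressed in a fixed reference frame); the local ``differ by $3$ or $5$'' fact is the base case of that induction, not a substitute for it.
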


\begin{corollary}
For every positive integer $n$, there is a temporal graph $G$ of $n$ vertices such that any single-source temporal preserver of $G$ w.r.t.\ $s$ has size $\Theta(n^2)$.
\end{corollary}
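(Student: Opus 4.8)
The plan is to obtain this corollary as an immediate specialization of Theorem~\ref{thm:sslb}. First I would instantiate that theorem with $\beta = 0$: this yields, for every positive integer $n$, a temporal graph $G$ on $n$ vertices together with a source $s$ such that every single-source temporal $0$-additive spanner of $G$ w.r.t.\ $s$ — which is exactly a single-source temporal preserver by the definitions in Section~\ref{sec:model} — has size $\Omega\big(\frac{n^2}{1+0}\big)=\Omega(n^2)$. A minor bookkeeping point is that the explicit construction uses $n=(13+\beta)h=13h$ vertices; for an arbitrary target $n$ this is handled in the standard way by padding with a few extra vertices (e.g.\ hanging them off $s$ with large time-labels, or leaving them isolated), which affects neither the asymptotic vertex count nor the preserver lower bound. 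Since Theorem~\ref{thm:sslb} is already phrased for every positive integer $n$, I would simply invoke it directly.

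For the matching upper bound I would observe that the bound is trivial: $G$ is a temporal graph on $n$ vertices with single time-labels, so $|E(G)|\le\binom{n}{2}=O(n^2)$, and any single-source temporal preserver of $G$ is by definition a subgraph of $G$, hence it has $O(n^2)$ edges. (In fact the $G$ produced by the construction has $\Theta(n^2)$ edges, so $H=G$ itself already witnesses a preserver of size $O(n^2)$.) Combining the $\Omega(n^2)$ lower bound with this $O(n^2)$ upper bound shows that the minimum size of a single-source temporal preserver of this particular $G$ is $\Theta(n^2)$, which is the claim.

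The main obstacle is not located in this corollary at all: everything hard is inherited from Theorem~\ref{thm:sslb}, whose proof must establish that the recursively defined paths $\pi_1,\dots,\pi_h$ are pairwise edge-disjoint (this is where the parity of the offset $\mu$ and the $+5/-3$ hop pattern matter) and that, for each $i$, the path $\pi_i$ is the \emph{unique} temporal path from $s$ to $z_i$ of length at most $d_G(s,z_i)$ when $\beta=0$, so that no edge of $\pi_i$ can be dropped. Granting that theorem, the corollary is essentially a one-line consequence; the only thing I would take care to spell out is the trivial $O(n^2)$ side, so that the statement can legitimately be phrased with $\Theta$ rather than merely $\Omega$.
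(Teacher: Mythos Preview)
Your proposal is correct and takes essentially the same approach as the paper: the corollary is stated immediately after Theorem~\ref{thm:sslb} as the specialization $\beta=0$, with no separate proof given. You are in fact more explicit than the paper in justifying the $O(n^2)$ side of the $\Theta$ bound, which the paper leaves implicit.
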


The remaining part of this section is devoted to proving Theorem~\ref{thm:sslb}. We start with some technical lemmas.

\begin{lemma}\label{lemma:single-source-lb-basic}
Vertex $j$ of $\pi_{i-1}$, with $j \geq \mu$, is at a distance of $1+j-\mu$ from $s$ in $\pi_i$ for all odd values of $j<|\pi_{i-1}|-1$ and it is at a distance of $5+j-\mu$ from $s$ in $\pi_i$ for all even values of $j\leq|\pi_{i-1}|-4$.
\end{lemma}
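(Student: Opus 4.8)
The plan is to prove the claim by directly tracing the construction of $\pi_i$, exploiting the fact that all edges of $\pi_i$ carry the same time-label $i$: hence the distance from $s$ to any vertex $w$ lying on $\pi_i$ is simply the number of edges preceding $w$ in the traversal of $\pi_i$ from $s$, i.e., the position of $w$ along $\pi_i$. So it suffices to determine, for each position $t$, which vertex of $\pi_{i-1}$ is visited by $\pi_i$ at position $t$, and then to invert that relation.

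First I would recall the shape of $\pi_i$: position $0$ is $s$ (vertex $0$ of $\pi_{i-1}$), the first hop moves to vertex $\mu$ at position $1$, and from then on $\pi_i$ performs a maximal alternating sequence of backward hops (odd vertex $j \mapsto j-3$) and forward hops (even vertex $j \mapsto j+5$), necessarily starting with a backward hop since $\mu$ is odd (recall the offset $\mu \in \{\beta+7,\beta+8\}$ is always odd). I would then prove, by induction on $t \geq 1$, that the vertex of $\pi_{i-1}$ visited at position $t$ is $\mu + t - 1$ when $t$ is odd and $\mu + t - 5$ when $t$ is even. The base case $t=1$ is the first hop. For the inductive step, $\mu$ odd forces $\mu + t - 1$ to be odd for odd $t$ and $\mu + t - 5$ to be even for even $t$; thus from an odd position the next (backward) hop lands on $(\mu + t - 1) - 3 = \mu + (t+1) - 5$, and from an even position the next (forward) hop lands on $(\mu + t - 5) + 5 = \mu + (t+1) - 1$, which is exactly the claimed vertex for position $t+1$. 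Inverting, an odd vertex $j = \mu + t - 1$ (with $t$ odd) sits at position $t = 1 + j - \mu$, and an even vertex $j = \mu + t - 5$ (with $t$ even) sits at position $t = 5 + j - \mu$; since $j \geq \mu$ gives $t \geq 1$, these positions are the asserted distances.

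What remains — and is the only delicate point — is to argue that vertex $j$ is in fact visited by $\pi_i$, i.e., that the maximal alternating sequence does not terminate before reaching position $1 + j - \mu$ (resp.\ $5 + j - \mu$). A hop is legitimate precisely when both its endpoints stay in the index set $\{0,\ldots,|\pi_{i-1}|-1\}$ and avoid $z_{i-1} = |\pi_{i-1}|-1$. I would check that, under the hypotheses $\mu \leq j < |\pi_{i-1}|-1$ for odd $j$ and $\mu \leq j \leq |\pi_{i-1}|-4$ for even $j$, every vertex preceding $j$ along $\pi_i$ lies strictly below $|\pi_{i-1}|-1$ and every hop up to and including the one reaching $j$ remains in range, so the sequence has not yet stopped; this is a short case analysis on which hop the sequence performs near the boundary (a forward hop, a backward hop, or the terminating out-of-range / $z_{i-1}$ hop), and it is exactly where the thresholds $|\pi_{i-1}|-1$ and $|\pi_{i-1}|-4$ enter, with the two cases behaving slightly differently according to the parity of $|\pi_{i-1}|$. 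I expect this boundary bookkeeping, rather than the main induction, to be the part demanding the most care.
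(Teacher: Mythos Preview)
Your proposal is correct and follows essentially the same approach as the paper: both arguments note that $\mu$ is odd, trace the alternating backward/forward hops to see that each backward--forward pair advances the $\pi_{i-1}$-index by $2$, and read off the positions (the paper handles the even case by reducing to the odd vertex $j+3$ and adding one, which is just your inversion phrased differently). Your explicit induction on the position $t$ and your discussion of the boundary thresholds $|\pi_{i-1}|-1$ and $|\pi_{i-1}|-4$ are, if anything, more careful than the paper's rather terse treatment.
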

\begin{proof}
First of all, we observe that $\mu$ is odd. The claim for odd values of $j$ comes from the fact that if we are at a vertex $j$ and perform a backward hop followed by a forward hop, we end up at vertex $j-3+5=j+2$. Therefore, starting from $\mu$, we arrive at vertex $j \geq \mu$, with $j$ odd, after an alternating sequence of $j-\mu$ hops. Since $\mu$ is at a distance 1 from $s$ in $\pi_i$, vertex $j$ is at a distance of $1+j-\mu$ from $s$ in $\pi_i$.

For even values of $j<|\pi_{i-1}|-4$, we have that $j+3<|\pi_{i-1}|-1$ is odd. By the first part of this proof, $j+3$ is at a distance of $1+j+3-\mu=4+j-\mu$ from $s$ in $\pi_i$. Since $j$ is visited right after $j+3$ via a backward hop from $j+3$ to $j$, it follows that the distance from $j$ to $s$ in $\pi_i$ is $5+j-\mu$.
\end{proof}

Thanks to Lemma~\ref{lemma:single-source-lb-basic}, we can prove the following useful lemmas.

\begin{lemma}
\label{lemma:ssPathsLB}
$\sum_{i=1}^h|\pi_i|=\Omega\big(\frac{n^2}{1+\beta}\big)$.
\end{lemma}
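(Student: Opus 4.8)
The plan is to bound $\sum_{i=1}^h |\pi_i|$ from below by showing that each path $\pi_i$ has length $\Omega(n - i(1+\beta))$ and then summing. First I would determine $|\pi_{i-1}|$ in terms of $n$ and $i$, proceeding by (reverse) induction: $\pi_1$ is Hamiltonian so $|\pi_1| = n-1$, and I claim that $|\pi_i| = |\pi_{i-1}| - \Theta(\mu) = |\pi_{i-1}| - \Theta(1+\beta)$, because $\pi_i$ starts at offset $\mu$ on $\pi_{i-1}$, advances by a net $+2$ per backward-forward hop pair via Lemma~\ref{lemma:single-source-lb-basic}, and stops just before reaching vertex $z_{i-1}$ (index $|\pi_{i-1}|-1$). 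More precisely, by Lemma~\ref{lemma:single-source-lb-basic} the last vertex reached on $\pi_{i-1}$ is the largest odd $j < |\pi_{i-1}| - 1$ (reached at distance $1 + j - \mu$) or the largest even $j \le |\pi_{i-1}| - 4$ (reached at distance $5 + j - \mu$); in either case this terminal vertex is $z_i$, the distance is $|\pi_i|$, and $j = |\pi_{i-1}| - O(1)$, so $|\pi_i| = |\pi_{i-1}| - \mu + O(1) = |\pi_{i-1}| - \Theta(1+\beta)$.

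Next I would unroll this recurrence to get $|\pi_i| = n - 1 - \Theta((i-1)(1+\beta)) = \Omega(n - i(1+\beta))$, which is valid (and the hidden constant can be made explicit) precisely because $n = (13+\beta)h$ guarantees that for all $i \le h$ we have $i(1+\beta) \le h(1+\beta) \le \frac{1+\beta}{13+\beta} n \le \frac{n}{13}$, so $\mu \le \beta + 8 \le 13+\beta = n/h$ keeps every $\pi_i$ nonempty and of length at least, say, $n/2$ when $i$ is at most $h/2$ — it is this margin that makes the sum genuinely quadratic rather than collapsing.

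Then I would sum: restricting to $i \in \{1, \dots, \lfloor h/2 \rfloor\}$, each such $|\pi_i| = \Omega(n)$ (since $i(1+\beta) \le \frac{h}{2}(1+\beta) \le \frac{n}{26}$, leaving $|\pi_i| \ge n - \frac{n}{26} - O(1) = \Omega(n)$), hence
\[
\sum_{i=1}^h |\pi_i| \ge \sum_{i=1}^{\lfloor h/2 \rfloor} |\pi_i| = \Omega\!\left(\frac{h}{2} \cdot n\right) = \Omega(hn) = \Omega\!\left(\frac{n^2}{13+\beta}\right) = \Omega\!\left(\frac{n^2}{1+\beta}\right),
\]
using $h = n/(13+\beta)$ in the last steps. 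This completes the proof.

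I expect the main obstacle to be the bookkeeping around the terminal vertex of each $\pi_i$: one must argue carefully, using the parity structure and the maximality of the alternating hop sequence in the construction, that $\pi_i$ indeed stops only a bounded number of indices short of $|\pi_{i-1}|-1$ (so that no constant-fraction length is lost at the end of each path), and that the offset $\mu$ — which depends on the parity of $\beta$ — is handled uniformly. Once the clean recurrence $|\pi_i| = |\pi_{i-1}| - \Theta(1+\beta)$ is established, the summation is routine; I would also double-check that the edge-disjointness of the $\pi_i$'s (already asserted in the construction) is not needed here — it is only needed later to conclude the lower bound on $|E(G)|$ — so this lemma is purely about path lengths.
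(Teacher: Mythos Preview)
Your proposal is correct and follows essentially the same approach as the paper: both establish the recurrence $|\pi_i| \ge |\pi_{i-1}| - \Theta(1+\beta)$ via Lemma~\ref{lemma:single-source-lb-basic} (the paper makes the constant explicit, proving $|\pi_i| \ge |\pi_{i-1}| - \beta - 13$ and hence $|\pi_i| \ge n - (i-1)\beta - 13i$), and then sum. The only cosmetic difference is that the paper sums all $h$ terms with explicit constants, while you restrict to the first $\lfloor h/2 \rfloor$ terms and argue each is $\Omega(n)$; both yield $\Omega(hn) = \Omega(n^2/(1+\beta))$.
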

\begin{proof}
We show that, for every $i$ with $1 \leq i \leq h$, $|\pi_{i}| \geq n-(i-1)\beta-13i$. 
This suffices to prove the claim since, using $h=\frac{n}{13+\beta}$, we have:
\begin{align*}
    \sum_{i=1}^h|\pi_i| &= nh - \frac{(h-1)h\beta}{2} - \frac{13h(h+1)}{2}
= nh - \frac{h^2 (\beta + 13)}{2} + \frac{h(\beta - 13)}{2}\\
&\ge nh - \frac{h^2 (\beta + 13)}{2} - \frac{13h}{2}
=  \frac{n^2}{2(\beta+13)} - \frac{13 n}{2(\beta+13)} 
= \frac{n^2 - 13 n}{2(\beta+13)}
= \Omega\Big(\frac{n^2}{1+\beta}\Big).
\end{align*}

The proof is by induction on $i$. The base case $i=1$ is trivial since $|\pi_1|=n-1$. We now assume that the claim holds for $\pi_{i-1}$ and prove it for $\pi_i$. We observe that the claim is proved once we show that $|\pi_i|\geq |\pi_{i-1}|-\beta-13$. By Lemma~\ref{lemma:single-source-lb-basic}, $\pi_i$ contains $s$ as well as all vertices of $\pi_{i-1}$ that are numbered from $\mu$ up to $|\pi_{i-1}|-5$. As $\mu\leq \beta+8$, we obtain
$|\pi_i|\geq |\pi_{i-1}|-5-\mu \geq |\pi_{i-1}|-\beta-13$.
\end{proof}

\begin{lemma}
\label{lemma:ssPathsUB}
For every $2 \leq i \leq h$  and for every $v \in V(\pi_i)$, with $v\neq s$, $|\pi_{i-1}[s,v]|>|\pi_{i}[s,v]| +\beta$.
\end{lemma}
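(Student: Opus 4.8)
The plan is to express both lengths $|\pi_{i-1}[s,v]|$ and $|\pi_i[s,v]|$ in terms of the numbering of the vertices of $\pi_{i-1}$ used in the construction of $\pi_i$, and then reduce the statement to a two-line inequality about the offset $\mu$. Since every hop of $\pi_i$ moves between vertices of $\pi_{i-1}$ and $s \in V(\pi_{i-1})$, we have $V(\pi_i) \subseteq V(\pi_{i-1})$, so $\pi_{i-1}[s,v]$ is well defined for every $v \in V(\pi_i)$; and if $v$ receives number $j$ in the traversal of $\pi_{i-1}$ from $s$, then $v$ is reached after exactly $j$ edges of $\pi_{i-1}$, i.e.\ $|\pi_{i-1}[s,v]| = j$.

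I would then pin down $|\pi_i[s,v]|$ for each $v \in V(\pi_i)\setminus\{s\}$. Reading off the description of $\pi_i$ (first hop $s \to \mu$, then a maximal alternating sequence of backward hops $j \to j-3$ from odd $j$ and forward hops $j \to j+5$ from even $j$, stopping before $z_{i-1}$), the vertices of $\pi_i$ with number below $\mu$ are exactly $s$ together with the vertices numbered $\mu-3$ and $\mu-1$, whenever these are actually reached; these are the second and fourth vertices of $\pi_i$, so there $|\pi_i[s,v]|$ equals $2$ and $4$ respectively. Every remaining vertex of $\pi_i$ has number $j \ge \mu$ and lies in the scope of Lemma~\ref{lemma:single-source-lb-basic}: the odd-numbered ones are distinct from $z_{i-1}$, and the even-numbered ones are at most $|\pi_{i-1}|-4$ since $\pi_i$ can only terminate at an even vertex whose forward hop would reach $z_{i-1}$ or fall off $\pi_{i-1}$. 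Thus Lemma~\ref{lemma:single-source-lb-basic} gives $|\pi_i[s,v]| = 1+j-\mu$ for odd $j$ and $|\pi_i[s,v]| = 5+j-\mu$ for even $j$.

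Finally, substituting into $|\pi_{i-1}[s,v]| - |\pi_i[s,v]|$, the difference equals $\mu - 1$ when $j \ge \mu$ is odd and equals $\mu - 5$ in the remaining cases (even $j \ge \mu$, and $j \in \{\mu-3,\mu-1\}$). As the offset satisfies $\mu \ge \beta+7$, both $\mu-1$ and $\mu-5$ strictly exceed $\beta$, which is exactly the claim $|\pi_{i-1}[s,v]| > |\pi_i[s,v]| + \beta$; and if the hop $s\to\mu$ does not exist, then $V(\pi_i)=\{s\}$ and the statement is vacuous. The final arithmetic is routine; the step requiring care is the middle one — verifying that every vertex actually traversed by $\pi_i$ other than $s$, $\mu-3$ and $\mu-1$ falls in the range where Lemma~\ref{lemma:single-source-lb-basic} applies — which in particular forces one to trace through the degenerate regime in which $\pi_{i-1}$ is so short that $\pi_i$ consists of only a handful of edges.
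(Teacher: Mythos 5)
Your proof takes essentially the same route as the paper's: write $|\pi_{i-1}[s,v]|$ as the number $j$ of $v$ in $\pi_{i-1}$, split on the value of $j$, apply Lemma~\ref{lemma:single-source-lb-basic} for $j \geq \mu$, and conclude from $\mu \geq \beta+7$. You are in fact slightly more thorough than the paper, which only treats $j=\mu-3$ and $j\geq\mu$, whereas (as you observe) the vertex numbered $\mu-1$ is also visited by $\pi_i$ as its fourth vertex and falls outside both of the paper's branches; your $\mu-5>\beta$ computation covers that case.
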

\begin{proof}
Let $v \in V(\pi_i)$ be the vertex numbered $j$ in $\pi_{i-1}$. For $j=\mu-3$, we have that $|\pi_{i-1}[s,v]|=\mu-3 > 2+\beta=|\pi_{i}[s,v]| +\beta$. For all the other values of $j\geq \mu$, using the inequality $\mu \ge \beta + 7$ and Lemma~\ref{lemma:single-source-lb-basic}, we have
$|\pi_{i-1}[s,v]|=j > 5+j-\mu+\beta\geq|\pi_{i}[s,v]| +\beta$.
\end{proof}

In order to prove that the temporal paths $\pi_1,\ldots,\pi_h$ are edge-disjoint, we assign {\em types} to edges of each path and show that different paths have edges of different types if we restrict to edges that are not incident in $s$.

For each $ 1\leq i \leq h$ and $ 0 \leq j \leq |\pi_i|$, let $v_{i,j}$ be the $j$-th vertex of $\pi_i$, where we count vertices from $s$ to $z_i$. We enumerate each vertex $v$ of $G$ with $l(v)$ as follows: $l(v_{1,0}) = l(s) = 0$, $l(v_{1,j}) = l(v_{1,j-1}) +5$ for odd values of $j$, and $l(v_{1,j}) = l(v_{1,j-1}) - 3$ for even values of $j$. For an ordered pair of vertices $u$ and $v$ we define the value $\sigma(u,v) = l(v) - l(u)$. The type of an edge $(u,v)$ corresponds to the value $|\sigma(u,v)|$.
A pictorial example of the definition of $\sigma(u,v)$ can be found in  Figure~\ref{fig:ssLBTyping}. The following technical lemma is the key for proving that the paths $\pi_1,\dots,\pi_h$ are pairwise edge-disjoint.

\begin{figure}
    \centering
    \includegraphics[width=\textwidth]{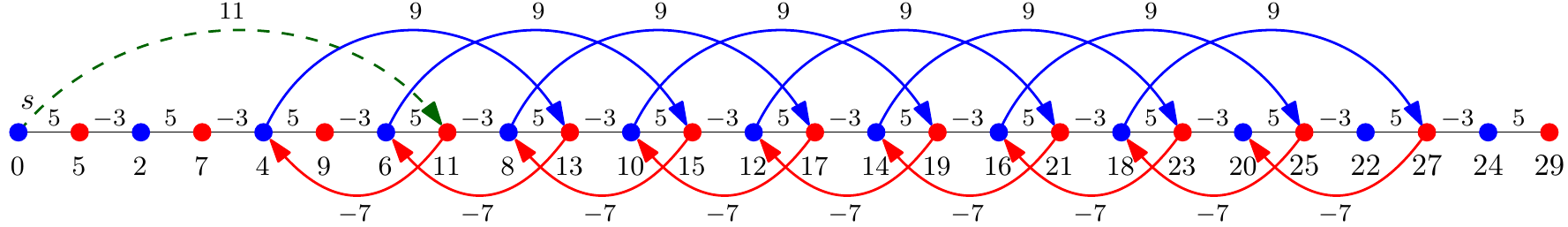}
    \caption{Example of the lower bound with $h=2$ and $\beta = 0$. Each vertex $v$ is labeled with $l(v)$. Path $\pi_1$ starts at $s$ and consists in all the black edges. The generic $j$-th vertex of $\pi_1$ is shown in blue if $j$ is even and in red if $j$ is odd. Path $\pi_2$ begins with an initial hop (the dashed green edge) from $s$ to the vertex $v$ with label $11$ (i.e.m the $7$-th vertex in $\pi_1$). Then $\pi_2$ consists of an alternating sequence of backward hops (shown in red) and forward hops (shown in blue).
    The arrows on the edges of $\pi_2$ are directed away from $s$ along $\pi_2$. For each traversed edge $(u,v)$ of $\pi_1$, $\sigma(u,v)$ is either $5$ or $-3$, while for each traversed edge $(u,v)$ of $\pi_2$, $\sigma(u,v)$ is either $-7$ or $9$.}
    \label{fig:ssLBTyping}
\end{figure}
\begin{lemma}
\label{lemma:disjoint}
For every $ 1\leq i \leq n$, and $2 \leq j < |V(\pi_i)|$, we have that $\sigma(v_{i,j-1},v_{i,j}) = -(4i-1)$ if $j$ is even and $\sigma(v_{i,j-1},v_{i,j}) = 4i+1$ if $j$ is odd.
\end{lemma}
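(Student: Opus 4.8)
The plan is to prove Lemma~\ref{lemma:disjoint} by induction on $i$, tracking how the vertex-labels $l(\cdot)$ of the path $\pi_i$ behave as one walks along the path. The key observation is that every hop of $\pi_i$ (for $i \geq 2$) is a composition of consecutive hops of $\pi_{i-1}$: a backward hop of $\pi_i$ from vertex $j$ (odd) to vertex $j-3$ of $\pi_{i-1}$ traverses three consecutive edges of $\pi_{i-1}$ \emph{in reverse}, while a forward hop from vertex $j$ (even) to vertex $j+5$ traverses five consecutive edges of $\pi_{i-1}$ \emph{forwards}. Since $\sigma$ is additive along any walk (because $\sigma(u,v) = l(v) - l(u)$ telescopes), the value of $\sigma$ across a hop of $\pi_i$ is just the signed sum of the $\sigma$-values of the underlying hops of $\pi_{i-1}$.

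First I would establish the base case $i=1$ directly from the definition of $l(\cdot)$: for odd $j$ we have $l(v_{1,j}) - l(v_{1,j-1}) = 5 = 4\cdot 1 + 1$, and for even $j$ we have $l(v_{1,j}) - l(v_{1,j-1}) = -3 = -(4\cdot 1 - 1)$, which is exactly the claim. For the inductive step, assume the statement holds for $\pi_{i-1}$, i.e.\ consecutive edges of $\pi_{i-1}$ have $\sigma$-values alternating between $4(i-1)+1 = 4i-3$ (on odd-indexed steps) and $-(4(i-1)-1) = -(4i-5)$ (on even-indexed steps). Now consider a forward hop of $\pi_i$ going from vertex $j$ (even) to vertex $j+5$ of $\pi_{i-1}$: it covers the five consecutive edges of $\pi_{i-1}$ from position $j{+}1$ through $j{+}5$. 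Since $j$ is even, the step into position $j{+}1$ is odd-indexed, so the five $\sigma$-values are $(4i-3), -(4i-5), (4i-3), -(4i-5), (4i-3)$, summing to $3(4i-3) - 2(4i-5) = 4i + 1$. Similarly, a backward hop from vertex $j$ (odd) to vertex $j-3$ traverses the three edges from position $j{-}2$ to $j$ in reverse; the forward $\sigma$-values of those three steps (starting at the even position $j{-}2$, so an even-indexed step first) are $-(4i-5), (4i-3), -(4i-5)$, and traversing them in reverse negates and reverses the sum, giving $-\big(2\cdot(-(4i-5)) + (4i-3)\big) = -(4i-3-2(4i-5)) = -(-(4i-7)) = 4i-7$\,---\,wait, this needs care: I must double-check the parity bookkeeping, since which of the two $\sigma$-values sits at an odd-indexed step of $\pi_{i-1}$ depends on the parity of the starting position, and the reversal flips indices. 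The clean way is to recompute: reversing three consecutive steps whose forward values are $a,b,c$ yields $-(a+b+c)$, so I only need the \emph{multiset} $\{a,b,c\}$, and among three consecutive steps two share one value and one has the other; the parity of $j$ determines which. I would treat the forward and backward cases separately, using the parity of the source position of the hop and Lemma~\ref{lemma:single-source-lb-basic} to know which index $j$ of $\pi_i$ corresponds to an odd/even step.

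The main obstacle I anticipate is precisely this parity bookkeeping: I must consistently reconcile three indexings\,---\,the position along $\pi_i$, the position along $\pi_{i-1}$, and the ``odd/even step'' parity that the inductive hypothesis is phrased in terms of\,---\,and verify that a step of $\pi_i$ indexed $j$ (even) really does unpack into $\pi_{i-1}$-steps whose first one is odd-indexed (hence value $4i-3$), and that the signs after reversal come out so that the alternation $+(4i+1)/-(4i-1)$ holds with the stated parity. Once the arithmetic of the two sums ($3(4i-3)-2(4i-5) = 4i+1$ and the analogous computation for the backward hop giving $-(4i-1)$) is checked against the correct parities, the lemma follows. A useful sanity check is the example in Figure~\ref{fig:ssLBTyping}: for $i=2$ the forward hops should have $\sigma = 9 = 4\cdot 2 + 1$ and the backward hops $\sigma = -7 = -(4\cdot 2 - 1)$, matching the figure's caption. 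With the lemma in hand, edge-disjointness of $\pi_1,\dots,\pi_h$ (restricted to edges not incident to $s$) is immediate: an edge of $\pi_i$ has type $|{\sigma}| \in \{4i-1, 4i+1\}$, and these intervals are disjoint for distinct $i$, so no edge is shared.
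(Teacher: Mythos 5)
Your approach is the same as the paper's: induct on $i$, decompose each hop of $\pi_i$ into the consecutive edges of $\pi_{i-1}$ that it covers, and sum the signed $\sigma$-values supplied by the inductive hypothesis, using the telescoping $\sigma(u,v)=l(v)-l(u)$. Your forward-hop computation is correct: a forward hop from an even $\pi_{i-1}$-position $j'$ to $j'+5$ covers the steps into positions $j'+1,\ldots,j'+5$, with parities odd, even, odd, even, odd, hence values $(4i-3),\,-(4i-5),\,(4i-3),\,-(4i-5),\,(4i-3)$, summing to $3(4i-3)-2(4i-5)=4i+1$.

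However, the backward-hop case contains exactly the parity error that you flag and then leave unresolved, so the proof is incomplete as written. A backward hop departs from an odd $\pi_{i-1}$-position $j'$ (because $\mu$ is odd), and the three underlying forward steps are those into positions $j'-2,\,j'-1,\,j'$. Since $j'$ is odd, $j'-2$ is also odd, so the parity sequence is odd, even, odd --- not ``even-indexed step first'' as you wrote. The correct forward values are $(4i-3),\,-(4i-5),\,(4i-3)$, summing to $4i-1$; traversing in reverse negates this to $-(4i-1)$, as the lemma claims. Your expression $-\big(2\cdot(-(4i-5))+(4i-3)\big)=4i-7$ follows from the wrong parity assignment and is simply incorrect. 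Once you fix the parity of $j'-2$ (odd, not even), the argument closes and matches the paper's proof, which makes exactly this computation explicit by listing the $\sigma$-values at indices $j'-2,j'-1,j'$ (backward case) and $j'+1,\ldots,j'+5$ (forward case) from the inductive hypothesis and adding them.
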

\begin{proof}
The proof is by induction on $i$. The proof for the base case $i = 1$ directly follows from the definition (see also Figure~\ref{fig:ssLBTyping}).

For the inductive case, we assume that the claim holds for $i-1$ and prove it for $i$. Let $j$ be any integer such that $2\leq j < |V(\pi_i)|$ and consider the edge $(v_{i,j-1},v_{i,j})$ of $\pi_i$. This edge is  either a backward hop (even values of $j$) or a forward hop (odd values of $j$) over $\pi_{i-1}$. We now split the proof into two cases, according to value of $j$.

We first consider the case in which $j$ is even. The edge $(v_{i,j-1},v_{i,j})$ is a backward hop over $\pi_{i-1}$. So, for some $j'\geq 2$, $v_{i,j-1}=v_{i-1,j'}$ and $v_{i,j}=v_{i-1,j'-3}$.
As the value of the offset is always odd, backward hops over $\pi_{i-1}$ are always from vertices that are numbered with odd values in $\pi_{i-1}$. This implies that $j'$ is an odd value. Therefore, by inductive hypothesis, we have that:
\begin{itemize}
    \item $\sigma(v_{i-1,j'-1},v_{i-1,j'}) = 4(i-1)+1$;
    \item $\sigma(v_{i-1,j'-2},v_{i-1,j'-1}) = -(4(i-1)-1)$;
    \item $\sigma(v_{i-1,j'-3},v_{i-1,j'-2}) = 4(i-1)+1$.
\end{itemize}
As a consequence, by a repeated use of the definition of $\sigma$, we obtain $l(v_{i-1,j'}) = l(v_{i-1,j'-3}) +\sigma(v_{i-1,j'-3},v_{i-1,j'-2}) + \sigma(v_{i-1,j'-2},v_{i-1,j'-1}) + \sigma(v_{i-1,j'-1},v_{i-1,j'})= l(v_{i-1,j'-3})+ 4(i-1)+1 - (4(i-1)-1)+ 4(i-1)+1= l(v_{i-1,j'-3}) + 4i-1$. Hence, $
 \sigma(v_{i,j-1},v_{i,j}) = \sigma(v_{i-1,j'},v_{i-1,j'-3})= l(v_{i-1,j'-3})-l(v_{i-1,j'})=-(4i-1)$.
 
We move to the case in which $j$ is odd. The edge $(v_{i,j-1},v_{i,j})$ is a forward hop over $\pi_{i-1}$. So, for some $j'\geq 2$, $v_{i,j-1}=v_{i-1,j'}$ and $v_{i,j}=v_{i-1,j'+5}$. As the value of the offset is always odd, forward hops over $\pi_{i-1}$ are always from vertices that are numbered with even values in $\pi_{i-1}$. This implies that $j'$ is an even value. Therefore, by inductive hypothesis, we have that:
\begin{itemize}
    \item $\sigma(v_{i-1,j'},v_{i-1,j'+1}) = 4(i-1)+1$;
    \item $\sigma(v_{i-1,j'+1},v_{i-1,j'+2}) = -(4(i-1)-1)$;
    \item $\sigma(v_{i-1,j'+2},v_{i-1,j'+3}) = 4(i-1)+1$;
    \item $\sigma(v_{i-1,j'+3},v_{i-1,j'+4}) = -(4(i-1)-1)$;
    \item $\sigma(v_{i-1,j'+4},v_{i-1,j'+5}) = 4(i-1)+1$.
\end{itemize}
 By a repeated use of the definition of $\sigma$, we obtain $l(v_{i-1,j'+5}) = l(v_{i-1,j'}) +\sigma(v_{i-1,j'},v_{i-1,j'+1}) + \sigma(v_{i-1,j'+1},v_{i-1,j'+2}) + \sigma(v_{i-1,j'+2},v_{i-1,j'+3})+\sigma(v_{i-1,j'+3},v_{i-1,j'+4})+\sigma(v_{i-1,j'+4},v_{i-1,j'+5}) = l(v_{i-1,j'})+ 4(i-1)+1 - (4(i-1)-1)+ 4(i-1)+1- (4(i-1)-1)+ 4(i-1)+1= l(v_{i-1,j'}) + 4i+1$. Hence, $
 \sigma(v_{i,j-1},v_{i,j}) = \sigma(v_{i-1,j'},v_{i-1,j'+5})= l(v_{i-1,j'+5})-l(v_{i-1,j'})=4i+1$.
\end{proof}

\begin{lemma}\label{lm:edge_disjoint_paths}
The paths $\pi_1,\dots,\pi_h$ are pairwise edge-disjoint.
\end{lemma}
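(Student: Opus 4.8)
The plan is to prove Lemma~\ref{lm:edge_disjoint_paths} by using the edge-typing machinery developed in Lemma~\ref{lemma:disjoint}. The key observation is that for a fixed path $\pi_i$, every edge $(u,v)$ of $\pi_i$ that is \emph{not incident to $s$} satisfies $|\sigma(u,v)| \in \{4i-1, 4i+1\}$ by Lemma~\ref{lemma:disjoint}. Since the sets $\{4i-1, 4i+1\}$ are pairwise disjoint for distinct positive integers $i$ (the values $3,5$ for $i=1$; $7,9$ for $i=2$; $11,13$ for $i=3$; etc., form strictly increasing pairs with no overlap), no edge not incident to $s$ can belong to two different paths $\pi_i$ and $\pi_{i'}$ with $i \neq i'$. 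This handles all edges except those incident to $s$.

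First I would formalize the above: fix $i < i'$ and suppose for contradiction that $\pi_i$ and $\pi_{i'}$ share an edge $e = (u,v)$. If $e$ is not incident to $s$, then $|\sigma(u,v)| = 4i \pm 1$ (viewing $e$ as an edge of $\pi_i$) and also $|\sigma(u,v)| = 4i' \pm 1$ (viewing $e$ as an edge of $\pi_{i'}$), which is impossible since $4i+1 < 4i'-1$ whenever $i' \ge i+1$. So I must separately argue that no edge incident to $s$ is shared. The edges of $\pi_1$ incident to $s$: there is exactly one, namely the first edge $(s, v_{1,1})$ of the Hamiltonian path. For $i \ge 2$, the only edge of $\pi_i$ incident to $s$ is the first hop, going from $s$ to the vertex numbered $\mu$ in $\pi_{i-1}$. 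I would then observe that these edges connect $s$ to pairwise distinct vertices across the different paths, so they cannot coincide. Concretely, the first hop of $\pi_i$ lands on the vertex numbered $\mu$ in $\pi_{i-1}$, and by Lemma~\ref{lemma:single-source-lb-basic} together with the recursive structure, these target vertices occupy distinct positions (distinct $l$-values) for distinct $i$; alternatively, one can note that the vertex $v_{i,1}$ (the second vertex of $\pi_i$) has distance $1$ from $s$ in $\pi_i$ but its position along the underlying Hamiltonian path $\pi_1$ grows with $i$ (roughly it is vertex number $\mu \cdot (i-1)$-ish after unwinding the recursion), hence is different for each $i$.

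The main obstacle I anticipate is precisely the bookkeeping for the edges incident to $s$, since the typing argument via $\sigma$ deliberately excludes them (the statement of Lemma~\ref{lemma:disjoint} requires $j \ge 2$). I expect the cleanest route is to show by induction that the first hop of $\pi_i$ (for $i \ge 2$) goes to a vertex $w_i$ with $l(w_i)$ strictly increasing in $i$ — each level's offset $\mu$ shifts the "entry point" further along $\pi_1$ — and that $l(w_i) \neq l(v_{1,1})$, which rules out collision with the unique $s$-incident edge of $\pi_1$. Since $l$ is injective on $V(G)$ by construction (the greedy $+5/-3$ assignment along the Hamiltonian path never repeats a value, a fact that should be noted or follows from the structure), distinct $l$-values mean distinct vertices, hence distinct edges. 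Putting the two cases together — edges not incident to $s$ separated by edge-type, edges incident to $s$ separated by their non-$s$ endpoint — yields that $\pi_1, \dots, \pi_h$ are pairwise edge-disjoint.

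\begin{proof}[Proof sketch]
Fix $i < i'$ in $\{1,\dots,h\}$ and suppose, for contradiction, that $\pi_i$ and $\pi_{i'}$ share a common edge $e = (u,v)$.

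Suppose first that $e$ is not incident to $s$. Then, regarding $e$ as an edge of $\pi_i$, Lemma~\ref{lemma:disjoint} gives $|\sigma(u,v)| \in \{4i-1,\, 4i+1\}$; regarding $e$ as an edge of $\pi_{i'}$, it gives $|\sigma(u,v)| \in \{4i'-1,\, 4i'+1\}$. But $i' \ge i+1$ implies $4i+1 \le 4(i+1)-3 < 4i'-1$, so the two sets are disjoint, a contradiction.

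It remains to consider the case in which $e$ is incident to $s$. The path $\pi_1$ has a single edge incident to $s$, namely $(s, v_{1,1})$. For each $i \ge 2$, the path $\pi_i$ also has a single edge incident to $s$: its first hop, which goes from $s$ to the vertex numbered $\mu$ in $\pi_{i-1}$; call this vertex $w_i$, and set $w_1 = v_{1,1}$. To conclude, it suffices to show that $w_1, w_2, \dots, w_h$ are pairwise distinct, since then the $s$-incident edges of the various $\pi_i$'s are pairwise distinct. By construction the labelling $l(\cdot)$ is injective on $V(G)$, so it is enough to show that $l(w_1), \dots, l(w_h)$ are pairwise distinct. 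Using Lemma~\ref{lemma:single-source-lb-basic} and unwinding the recursive definition of the paths, one checks by induction on $i$ that $w_i$ is the vertex of $\pi_1$ numbered $(\mu-1)(i-1)+1$ (the offset $\mu$ shifts the entry point by $\mu-1$ even-to-even steps at each level, and each such step increases the $\pi_1$-position by $2$); in particular the $\pi_1$-position of $w_i$ is strictly increasing in $i$, so the $w_i$ are distinct. This contradicts $e$ being shared, completing the proof.
\end{proof}
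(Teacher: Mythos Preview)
Your proof is correct and shares the paper's core strategy: use Lemma~\ref{lemma:disjoint} to separate, via edge-type, all edges not incident to $s$, and then handle the $s$-incident edges separately. The type argument and the inequality $4i+1 < 4i'-1$ for $i'>i$ are exactly as in the paper.

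The only substantive difference is in how you dispatch the $s$-incident edges. You compute, by unwinding Lemma~\ref{lemma:single-source-lb-basic}, that the non-$s$ endpoint $w_i$ of the first edge of $\pi_i$ is the vertex of $\pi_1$ numbered $(\mu-1)(i-1)+1$, so these positions are strictly increasing in $i$ and the $w_i$ are distinct (your formula is correct; your parenthetical justification about ``even-to-even steps'' is garbled, but the induction goes through directly from the odd-$j$ case of Lemma~\ref{lemma:single-source-lb-basic}). The paper instead uses a one-line structural observation: by construction $\pi_{k+1}$ visits only vertices of $\pi_k$ numbered at least $\mu-3\ge 4$, so $v_{k,1}\notin V(\pi_{k+1})$; iterating, $v_{i,1}\notin V(\pi_j)$ for all $j>i$, and hence the edges $(s,v_{i,1})$ and $(s,v_{j,1})$ are distinct. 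The paper's route avoids both the explicit position formula and the appeal to injectivity of $l$, but your more computational argument is equally valid.
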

\begin{proof}
Consider any two distinct paths $\pi_i$ and $\pi_j$. W.l.o.g., we assume that $i < j$. By construction, each path $\pi_{k+1}$ spans a subset of vertices of $\pi_{k}$ but not $v_{k,1}$ (i.e., the vertex of $\pi_{k}$ adjacent to $s$). Therefore, the edge incident to $s$ of $\pi_i$ is different from the one incident to $s$ of $\pi_j$. For the remaining edges we use Lemma~\ref{lemma:disjoint}, which states that each edge of $\pi_i$ not incident to $s$ has a type in $\{4i-1,4i+1\}$, while each edge of $\pi_j$ that is not incident to $s$ has type in $\{4j-1,4j+1\}$. Since $4i+1 < 4(i+1)-1 \leq 4j-1$, we have $\{4i-1,4i+1\} \cap \{4j-1,4j+1\}=\emptyset$. Hence, $\pi_i$ and $\pi_j$ are edge-disjoint.
\end{proof}

The following lemma shows that any single-source temporal spanner of $G$ w.r.t.\ $s$ must contain all paths $\pi_1,\dots,\pi_h$. 

\begin{lemma}
\label{lemma:minimality}
For every $i\in\{1,\dots,h\}$, $\pi_i$ is the unique temporal path from $s$ to $z_i$ in $G$ such that $|\pi_i| \leq d_G(s,z_i)+\beta$. 
\end{lemma}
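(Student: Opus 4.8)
The plan is to argue by induction on $i$, showing simultaneously that $\pi_i$ is a temporal path from $s$ to $z_i$ (which is clear by construction) and that it is the \emph{unique} one whose length does not exceed $d_G(s,z_i)+\beta$. The base case $i=1$ is easy: every edge of $\pi_1$ has time-label $1$, and any other path in $G$ from $s$ to $z_1$ must use at least one edge of some $\pi_j$ with $j\ge 2$; since $z_1$ has degree $1$ in $G$ outside $\pi_1$ (indeed $z_1 = v_{1,|\pi_1|-1}$ is not spanned by any later path by Lemma~\ref{lemma:single-source-lb-basic}), the only temporal path reaching $z_1$ at all is $\pi_1$ itself, so uniqueness is immediate and $d_G(s,z_1)=|\pi_1|$.

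For the inductive step, assume the claim for $i-1$. First I would pin down $d_G(s,z_i)$: I claim $d_G(s,z_i)=|\pi_i|$. Since all edges of $\pi_i$ have time-label $i$ and appear nowhere else (Lemma~\ref{lm:edge_disjoint_paths}), a temporal path $\rho$ from $s$ to $z_i$ either eventually stays on $\pi_i$-edges, or it reaches $z_i$ via an edge of label $\le i$ that belongs to some earlier $\pi_j$. But $z_i$ is incident in $G$ only to edges of $\pi_i$ and possibly $\pi_{i-1}$ (it is a vertex of $\pi_{i-1}$, being chosen among those it spans); so the last edge of $\rho$ has label $i$ or $i-1$. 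If it has label $i$, then since label-$i$ edges form exactly the path $\pi_i$ and $\rho$ is temporal, the suffix of $\rho$ consisting of label-$i$ edges is a subpath of $\pi_i$ ending at $z_i$; to have started this suffix, $\rho$ must have reached some vertex $v$ of $\pi_i$ using labels $<i$, i.e.\ using a temporal path from $s$ to $v$ in $G$ restricted to labels $\le i-1$. By the inductive hypothesis applied along $\pi_{i-1}$ (uniqueness of short paths forces $\rho[s,v]$ to agree with $\pi_{i-1}[s,v]$ up to a $+\beta$ slack), together with Lemma~\ref{lemma:ssPathsUB} which says $|\pi_{i-1}[s,v]| > |\pi_i[s,v]|+\beta$, any such detour through $\pi_{i-1}$ costs strictly more than $|\pi_i[s,v]|+\beta$, hence the resulting $s$–$z_i$ path has length $>|\pi_i|+\beta$ unless it \emph{is} $\pi_i$. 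The remaining possibility, that $\rho$ reaches $z_i$ via the label-$(i-1)$ edge of $\pi_{i-1}$ incident to $z_i$, forces (again by the inductive hypothesis, since $\rho[s,z_i]$ would then be $(i-1)$-restricted and hence, up to $+\beta$, equal to $\pi_{i-1}[s,z_i]$) a length of at least $|\pi_{i-1}[s,z_i]|-\beta$; and since $z_i$ is deep inside $\pi_{i-1}$, Lemma~\ref{lemma:single-source-lb-basic} gives $|\pi_{i-1}[s,z_i]|$ much larger than $|\pi_i|+\beta$. This establishes both $d_G(s,z_i)=|\pi_i|$ and the uniqueness claim in one stroke.

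The main obstacle I anticipate is making the case analysis for $\rho$ airtight: one has to be careful that a temporal path may oscillate between label-$i$ edges and lower-label edges several times, not just once, and that the offset $\mu$ was chosen (depending on the parity of $\beta$) precisely so that the parity/arithmetic in Lemmas~\ref{lemma:single-source-lb-basic} and~\ref{lemma:ssPathsUB} leaves no room for a path that is only $\beta$-longer than $\pi_i$. The cleanest way to handle the oscillation is to observe that, since the label-$i$ edges form the \emph{simple path} $\pi_i$, once $\rho$ uses a label-$i$ edge it can never return to a strictly-lower label (temporality) and can never revisit a vertex (simplicity of $\rho$), so in fact $\rho$ splits as an $(i-1)$-restricted prefix to some $v\in V(\pi_i)$ followed by the sub-path $\pi_i[v,z_i]$ — reducing everything to the single-detour case above and to Lemma~\ref{lemma:ssPathsUB}. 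I would then close the induction and note that, together with Lemma~\ref{lemma:ssPathsLB} and Lemma~\ref{lm:edge_disjoint_paths}, this yields the $\Omega(n^2/(1+\beta))$ lower bound of Theorem~\ref{thm:sslb}.
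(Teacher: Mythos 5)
Your plan goes by induction on $i$, whereas the paper's proof of Lemma~\ref{lemma:minimality} is a short \emph{direct} argument that works for all $i$ simultaneously, and the inductive route as you have sketched it has genuine gaps that I don't think close.

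The central problem is the sentence ``\emph{by the inductive hypothesis applied along $\pi_{i-1}$ (uniqueness of short paths forces $\rho[s,v]$ to agree with $\pi_{i-1}[s,v]$ up to a $+\beta$ slack)}.'' The inductive hypothesis only says that $\pi_{i-1}$ is the unique near-shortest temporal path from $s$ to $z_{i-1}$; it says nothing about paths from $s$ to an \emph{intermediate} vertex $v\in V(\pi_i)\subseteq V(\pi_{i-1})$, so it cannot force $\rho[s,v]$ to resemble $\pi_{i-1}[s,v]$. Moreover, an $(i-1)$-restricted prefix $\rho[s,v]$ need not lie on $\pi_{i-1}$ at all: it can begin with some label $j<i-1$, follow $\pi_j$ for a while, hop to $\pi_{j'}$ for several $j'$'s, and only then reach $v$, so comparing $\rho[s,v]$ directly to $\pi_{i-1}[s,v]$ does not cover all cases. (Relatedly, the assertion that $z_i$ is incident in $G$ only to edges of $\pi_i$ and $\pi_{i-1}$ is false: $z_i\in V(\pi_j)$ for every $j\le i$, so it carries edges of all those paths with labels $1,\dots,i$.) Finally, even granting $|\rho[s,v]|\ge|\pi_{i-1}[s,v]|-\beta$, combining with Lemma~\ref{lemma:ssPathsUB} only gives $|\rho[s,v]|>|\pi_i[s,v]|$, not the needed $|\rho[s,v]|>|\pi_i[s,v]|+\beta$, so the arithmetic does not close.

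The paper avoids all of this with a cleaner decomposition. Take any temporal path $\rho\neq\pi_i$ from $s$ to $z_i$. Since every edge incident to $z_i$ has label at most $i$, all labels on $\rho$ are at most $i$, and since $\rho\neq\pi_i$ the label $j$ of its \emph{first} edge satisfies $j<i$. Let $v'$ be the \emph{last} vertex of $\rho$ incident to a label-$j$ edge of $\rho$; because labels on $\rho$ are non-decreasing and label-$j$ edges are exactly $E(\pi_j)$, the prefix $\rho[s,v']$ is literally $\pi_j[s,v']$, and the next edge (if any) has label $>j$, forcing $v'\in V(\pi_{j+1})$ (also when $v'=z_i$, since $z_i\in V(\pi_i)\subseteq V(\pi_{j+1})$). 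Concatenating $\pi_{j+1}[s,v']$ (all labels $j{+}1$) with $\rho[v',z_i]$ (all labels $>j$) yields a valid temporal $s$--$z_i$ path, so $|\pi_{j+1}[s,v']|+|\rho[v',z_i]|\ge d_G(s,z_i)$. One application of Lemma~\ref{lemma:ssPathsUB} at level $j$ then gives $|\rho|=|\pi_j[s,v']|+|\rho[v',z_i]|>|\pi_{j+1}[s,v']|+\beta+|\rho[v',z_i]|\ge d_G(s,z_i)+\beta$, with no induction and no need to control paths to intermediate vertices. I would rewrite your inductive step along these lines (you may as well drop the induction entirely).
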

\begin{proof}
By construction, all edges that are incident to $z_i$ have a time-label less than or equal to $i$ (because $z_{i}$ is not spanned by $\pi_{i+1}$). Consider any temporal path $\pi$ from $s$ to $z_i$ in $G$ such that $\pi \neq \pi_i$. We prove the claim by showing that $|\pi|>d_G(s,z_i)+\beta$. Let $j$ be the time-label of the edge of $\pi$ that is incident to $s$ and let $v$ be the last vertex of $\pi$ that is incident to an edge of $\pi$ of time-label $j$. From $\pi \neq \pi_i$ it follows that $j < i$. Moreover, by construction of the paths $\pi_1,\dots,\pi_h$, $v$ is a vertex of $\pi_{j+1}$. As a consequence, the concatenation of the temporal path $\pi_{j+1}[s,v]$ with the temporal path $\pi[v,z_i]$ is a temporal path from $s$ to $z_i$. Using Lemma~\ref{lemma:ssPathsUB} we obtain $
    |\pi|=|\pi_j[s,v]|+|\pi[v,z_i]| > |\pi_{j+1}[s,v]|+\beta+|\pi[v,z_i]|\geq d_G(s,z_i)+\beta$.
\end{proof}

Since the paths $\pi_1,\dots,\pi_h$ are pairwise edge-disjoint by Lemma~\ref{lm:edge_disjoint_paths}, we have that Lemma~\ref{lemma:ssPathsUB} immediately implies $|E(G)|=\Theta\big(\frac{n^2}{1+\beta}\big)$. Furthermore, by Lemma~\ref{lemma:minimality}, we have that the unique single-source temporal $\beta$-additive spanner of $G$ w.r.t.\ $s$ is $G$ itself.
This proves~Theorem~\ref{thm:sslb}.
\section{Spanners for temporal graphs of bounded lifetime}
\label{sec:lifetime}
In this section we study how the lifetime has an impact on the size of both temporal single-source and temporal all-pairs spanners. We recall that the lifetime $L$ of $G$ is the number of distinct time-labels used for the edges of $G$ and that we assumed (w.l.o.g.) that the used time-labels are those in $\{1, \dots, L\}$. 

\subsection{Single-source preservers} 
For temporal single-source preservers we have asymptotically matching upper and lower bounds of $\Theta(Ln)$, for all values of $L \leq n$. The lower bound of $\Omega(Ln)$ is obtained by tweaking the construction given in Section~\ref{sec:single-source_lb} as follows. We set $\beta=0$ and the input temporal graph $G$ contains the paths $\pi_1,\dots,\pi_h$, with $h=\min\big\{L, \frac{n}{13}\big\}$. The upper bound of $O(Ln)$ follows from the fact that shortest $\tau$-restricted temporal paths satisfy the following suboptimality property: If $\pi$ is a shortest $\tau$-restricted temporal path from $s$ to $v$, $u\neq s$ is a vertex of $\pi$, and $\tau'$ is the time-label of the edge incident to $u$ in $\pi[s,u]$, then $\pi[s,u]$ is a shortest $\tau'$-restricted temporal path from $s$ to $u$. Indeed, thanks to this suboptimality property, for each vertex $v$ and each value $\tau = 1,\ldots,L$, it is enough to keep in $H$ a single edge $(u,v)$ of time-label $\tau$ for which there exists a shortest $\tau$-restricted temporal path from $s$ to $v$ in $G$ using $(u,v)$ (ties among edges with the same time-label can be broken arbitrarily). To summarize, we have the following result.

\begin{theorem}
For a given temporal graph $G$ of $n$ vertices and lifetime $L$ and a source vertex $s$ of $G$, we can compute a single-source preserver $H$ of $G$ w.r.t.\ $s$ of size $O(Ln)$. Moreover, for every $n$ and every $L \leq n$, there is a temporal graph $G$ of $n$ vertices and lifetime $L$ and a source vertex $s$ of $G$ such that any single-source temporal preserver of $G$ w.r.t.\ $s$ has size $\Omega(Ln)$.
\end{theorem}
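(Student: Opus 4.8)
The statement splits into an $O(Ln)$-size construction and a matching $\Omega(Ln)$ lower bound, which I would prove independently.

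\textbf{Upper bound.} The key tool is the suboptimality property of shortest $\tau$-restricted temporal paths: if $\pi$ is a shortest $\tau$-restricted temporal path from $s$ to $v$, $u\neq s$ lies on $\pi$, and $\tau'$ is the label of the last edge of $\pi[s,u]$, then $\pi[s,u]$ is a shortest $\tau'$-restricted temporal path from $s$ to $u$. This follows from an exchange argument: $\pi[s,u]$ is $\tau'$-restricted since labels are non-decreasing along $\pi$, and a strictly shorter $\tau'$-restricted path from $s$ to $u$ concatenated with $\pi[u,v]$ (whose first edge has label $\geq\tau'$) would give a strictly shorter $\tau$-restricted temporal walk from $s$ to $v$, which can be shortcut to a temporal path of no greater length, contradicting optimality. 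I would then let $H$ keep, for every $v\in V$ and every $\tau$ for which a $\tau$-restricted temporal path from $s$ to $v$ exists, one edge $(u,v)$ of label $\tau$ that is the last edge of some shortest $\tau$-restricted temporal path from $s$ to $v$ (ties broken arbitrarily); this uses at most $Ln$ edges and is computable in polynomial time. Correctness, namely $\dle{H}{\tau}{s}{v}\leq\dle{G}{\tau}{s}{v}$ for all $v,\tau$ (so $d_H(s,v)=d_G(s,v)$ by taking $\tau=L$), follows by induction on $\dle{G}{\tau}{s}{v}$: an optimal $\tau$-restricted path to $v$ ending with an edge of label $\tau_0\leq\tau$ is in fact $\tau_0$-restricted, so $\dle{G}{\tau_0}{s}{v}=\dle{G}{\tau}{s}{v}$, and the edge $(u',v)$ of label $\tau_0$ kept by the algorithm satisfies $\dle{G}{\tau_0}{s}{u'}=\dle{G}{\tau_0}{s}{v}-1<\dle{G}{\tau}{s}{v}$; the inductive hypothesis provides a $\tau_0$-restricted path from $s$ to $u'$ in $H$ of length $\leq\dle{G}{\tau_0}{s}{u'}$, and appending $(u',v)$ yields a $\tau$-restricted path from $s$ to $v$ in $H$ of length $\dle{G}{\tau_0}{s}{u'}+1=\dle{G}{\tau}{s}{v}$.

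\textbf{Lower bound.} I would reuse the construction of Section~\ref{sec:single-source_lb} with $\beta=0$, keeping only its first $h=\min\{L,\lfloor n/13\rfloor\}$ pairwise edge-disjoint paths $\pi_1,\dots,\pi_h$ on the whole vertex set (so $G=\bigcup_{i=1}^h\pi_i$). Since $\beta=0$, Lemma~\ref{lemma:minimality} gives that each $\pi_i$ is the unique shortest temporal path from $s$ to $z_i$, and since the $\pi_i$ are edge-disjoint (Lemma~\ref{lm:edge_disjoint_paths}), deleting any edge of $G$ breaks the only shortest path to some $z_i$; hence the unique single-source temporal preserver of $G$ w.r.t.\ $s$ is $G$ itself, whose size is $\sum_{i=1}^h|\pi_i|$. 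Bounding this sum exactly as in the proof of Lemma~\ref{lemma:ssPathsLB}, but only over the first $h$ terms (with $\beta=0$ one has $|\pi_i|\geq n-13i$), gives $\sum_{i=1}^h|\pi_i|=\Omega(hn)$, which equals $\Omega(Ln)$ when $h=L\leq\lfloor n/13\rfloor$ and equals $\Omega(n^2)=\Omega(Ln)$ when $h=\lfloor n/13\rfloor<L\leq n$ (since then $Ln=\Theta(n^2)$).

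\textbf{The main obstacle} is guaranteeing that this instance has lifetime exactly $L$ for the whole range $L\leq n$. For $L\leq\lfloor n/13\rfloor$ this is automatic, as the retained paths $\pi_1,\dots,\pi_L$ use precisely the labels $1,\dots,L$. For $L>\lfloor n/13\rfloor$ the raw construction only attains lifetime $\lfloor n/13\rfloor$, so I would relabel: give the Hamiltonian path $\pi_1$ non-decreasing labels that cover $\{1,\dots,L-h+1\}$ (e.g., its $t$-th edge gets label $\min\{t,L-h+1\}$) and give each later $\pi_i$ the single fresh label $L-h+i$. This uses exactly $L$ distinct labels, and --- since it leaves the hop structure of every $\pi_i$ intact and keeps all labels of $\pi_i$ strictly below those of $\pi_{i+1}$ --- Lemmas~\ref{lemma:single-source-lb-basic}--\ref{lemma:minimality} and Lemma~\ref{lm:edge_disjoint_paths} still go through (reading ``generation $i$'' wherever the proofs say ``time-label $i$''), so $G$ remains its own unique single-source temporal preserver. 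Combined with the upper bound, this shows the $\Theta(Ln)$ bound is tight.
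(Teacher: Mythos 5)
Both halves of your argument follow the paper's proof. The upper bound is exactly the paper's: invoke the suboptimality property of shortest $\tau$-restricted temporal paths, and for each vertex $v$ and each used time-label $\tau$ keep one last-edge of a shortest $\tau$-restricted temporal path to $v$, giving $\leq Ln$ edges. Your explicit induction on $\dle{G}{\tau}{s}{v}$ fills in a correctness argument that the paper leaves at the level of a one-line remark, and it is sound; the only hidden step (that a temporal walk can be shortcut to a temporal path of no greater length) is valid because deleting the segment between two visits of a repeated vertex preserves non-decreasing labels. The lower bound is also the paper's: set $\beta=0$ in the construction of Section~\ref{sec:single-source_lb} and keep the first $h=\min\{L,\lfloor n/13\rfloor\}$ paths, invoking Lemma~\ref{lm:edge_disjoint_paths} and Lemma~\ref{lemma:minimality} to conclude $G$ is its own unique preserver.

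Where you go beyond the paper is in flagging that when $L>\lfloor n/13\rfloor$ the raw construction has only $h<L$ distinct time-labels, so the instance does not literally have lifetime $L$ as the theorem statement requires. Your relabeling --- give the Hamiltonian path $\pi_1$ a non-decreasing label sequence covering $\{1,\dots,L-h+1\}$, and give $\pi_i$ for $i\ge 2$ the single label $L-h+i$ --- achieves exactly $L$ labels, and your justification that the proofs survive is correct: Lemma~\ref{lemma:disjoint} and Lemma~\ref{lm:edge_disjoint_paths} use only the hop structure and never the labels, while the uniqueness argument of Lemma~\ref{lemma:minimality} only needs that each $\pi_i$'s labels lie strictly below those of $\pi_{i+1}$, that each $\pi_i$ is itself temporal, and that each $\pi_i$ has a distinct $s$-incident edge; all three persist under the relabeling, so one simply replaces ``the first edge has time-label $j$'' by ``the first edge belongs to $\pi_j$'' throughout. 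This is a minor but genuine imprecision in the paper's argument (which implicitly only gives lifetime $\min\{L,\lfloor n/13\rfloor\}$), and your fix closes it cleanly; the paper's proof would also go through with any other padding trick that injects the missing labels without disturbing the $\pi_i$'s.
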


\subsection{All-pairs spanners} 

We now turn our attention to all-pairs temporal spanners and present our results, some of which hold for any temporal clique only, while some others hold for any temporal graph. 

\subsubsection{General temporal graphs} 

In \cite{AxiotisF16}, the authors show a class of dense temporal graphs with $n$ vertices and lifetime $L=\Omega(n)$ for which any temporal spanner needs $\Omega(n^2)$ edges even if we only need to preserve the temporal reachability among all pairs of vertices. This leaves open the problem of understanding whether a temporal graph $G$ with lifetime $L=o(n)$ admits a temporal spanner of size $o(n^2)$. We answer this question affirmatively, by providing an upper bound of $O(Ln)$ on the size of $\floor{\log n}$-spanners. This result is obtained as a byproduct of the following more general result. 

\begin{theorem}
Let $\mathcal{A}$ be an algorithm that takes a (non temporal) graph $G'$ of $n'$ vertices as input and  computes an $(\alpha,\beta)$-spanner of $G'$ of size $O(f(n'))$. Then, for any temporal graph $G$ of $n$ vertices and lifetime $L$, we can use algorithm $\mathcal{A}$ to build a temporal $(\alpha,L\beta)$-spanner $H$ of $G$ having size $O(L\cdot f(n))$.
\end{theorem}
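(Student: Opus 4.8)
The plan is to decompose the temporal graph $G$ by time-label. For each $\tau \in \{1,\dots,L\}$, let $G_\tau = (V, \{e \in E : \lambda(e) = \tau\})$ be the static graph consisting of exactly the edges with time-label $\tau$. Run algorithm $\mathcal{A}$ on each $G_\tau$ to obtain a static $(\alpha,\beta)$-spanner $H_\tau$ of $G_\tau$ of size $O(f(n))$, and set $H$ to be the temporal graph on vertex set $V$ whose edge set is $\bigcup_{\tau=1}^{L} E(H_\tau)$, where each edge retained from $H_\tau$ keeps time-label $\tau$. The size bound $|E(H)| = O(L \cdot f(n))$ is then immediate by summing over the $L$ values of $\tau$.

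Next I would argue the stretch guarantee. Let $u,v \in V$ and let $\pi$ be a shortest temporal path from $u$ to $v$ in $G$, so $|\pi| = d_G(u,v)$. Since the time-labels along $\pi$ are non-decreasing, $\pi$ splits into maximal blocks $B_1, B_2, \dots, B_r$, where block $B_j$ consists of consecutive edges all sharing a common time-label $\tau_j$, and $\tau_1 < \tau_2 < \cdots < \tau_r$; in particular $r \le L$. Each block $B_j$ is a path in the static graph $G_{\tau_j}$ between its endpoints, say from $a_j$ to $a_{j+1}$ (with $a_1 = u$ and $a_{r+1} = v$). Because $H_{\tau_j}$ is an $(\alpha,\beta)$-spanner of $G_{\tau_j}$, there is a path $Q_j$ in $H_{\tau_j}$ from $a_j$ to $a_{j+1}$ with $|Q_j| \le \alpha \cdot d_{G_{\tau_j}}(a_j, a_{j+1}) + \beta \le \alpha \cdot |B_j| + \beta$, and all edges of $Q_j$ carry time-label $\tau_j$ in $H$. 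Concatenating $Q_1, Q_2, \dots, Q_r$ yields a walk from $u$ to $v$ in $H$ whose edge time-labels are non-decreasing (each $Q_j$ uses only label $\tau_j$, and $\tau_1 < \cdots < \tau_r$), hence a temporal path after removing cycles; its length is at most $\sum_{j=1}^{r} (\alpha |B_j| + \beta) = \alpha \sum_j |B_j| + r\beta \le \alpha \cdot d_G(u,v) + L\beta$. Therefore $d_H(u,v) \le \alpha \cdot d_G(u,v) + L\beta$, so $H$ is a temporal $(\alpha, L\beta)$-spanner of $G$. The running time is polynomial since $\mathcal{A}$ is invoked $L \le n^2$ times.

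The only subtle point — which I expect to be the main thing to get right rather than a genuine obstacle — is the concatenation step: one must check that gluing the per-label spanner paths preserves the temporal (non-decreasing label) property. This works precisely because within a single block all labels are equal and the distinct blocks of a shortest temporal path have strictly increasing labels, so there is no interaction between the internal structure chosen by $\mathcal{A}$ in different $G_\tau$'s. A second minor point is that concatenating paths may produce a walk with repeated vertices; shortcutting to a simple path only decreases the length and cannot increase any label encountered, so the temporal property and the length bound are both preserved. Note the argument does not require the blocks of $\pi$ to be simple paths themselves, nor does it need $\mathcal{A}$ to be deterministic or $G_\tau$ to be connected — a shortest temporal path in $G$ only ever traverses pairs that are connected within the relevant $G_\tau$.
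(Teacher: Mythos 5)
Your proposal is correct and follows essentially the same approach as the paper's proof: decompose $G$ into the per-label static graphs $G_\tau$, apply $\mathcal{A}$ to each, take the union, and split a shortest temporal path into constant-label blocks whose spanner replacements concatenate into a temporal path of length at most $\alpha \cdot d_G(u,v) + L\beta$. The only cosmetic difference is that you use $r \le L$ maximal blocks and bound $d_{G_{\tau_j}}(a_j,a_{j+1}) \le |B_j|$, whereas the paper allows empty blocks and observes each block is already a shortest path in its $G_i$; both variants are equivalent.
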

\begin{proof}
Let $G_i$ be the static graph such that $V(G_i) = V$ and $E(G_i) = \{e \in E \mid \lambda(e)  = i\}$.
For any $G_i$, let $H_i$ be the $(\alpha,\beta)$-spanner of $G_i$ computed by Algorithm $\mathcal{A}$. We now show that the temporal graph $H$ with $V(H)=V$ and $E(H) = \bigcup_{i \leq L} E(H_i)$ satisfies the conditions of the theorem statement. 

Clearly the size of $H$ is $O(L \cdot f(n))$. Let $u$ and $v$ be any two distinct vertices of $G$ and let $\pi$ be a shortest temporal path $u$ and $v$ in $G$. We decompose $\pi$ in $L$ subpaths $\pi_1,\ldots,\pi_L$, where, for each $i$, $\pi_i$ is the subpath of $\pi$ where all the edges in $\pi_i$ have time-label $i$. Notice that some $\pi_i$'s might be empty paths. By construction, each $\pi_i$ is entirely contained in $G_i$ and it is indeed a shortest path from, say, vertex $x$ to, say, vertex $y$. Therefore, $H_i$ contains a path $\pi_i'$ from $x$ to $y$ such that $|\pi_i'| \leq \alpha|\pi_i|+\beta$. Consider the path $\pi'$ obtained by concatenating all $\pi'_i$ for $i = 1, \ldots, L$ (some $\pi_i'$ might be empty paths).  Notice that $\pi'$ is entirely contained in $H$, Moreover, $|\pi'| = \sum_{i = 1}^L |\pi_i'| \leq \sum_{i = 1}^L (\alpha |\pi_i| +\beta) \leq \alpha |\pi| + L \cdot \beta$. Therefore, $H$ is a temporal $(\alpha,L\cdot\beta)$-spanner of $G$.
\end{proof}

Using the fact that static graphs $G$ admit $(2k-1)$-spanners of size $O(n^{1+1/k})$ for every positive integer $k$ (see \cite{AlthoferDDJS93}), we obtain the following corollary which, for $k=\floor{\log n}$, implies an upper bound of $O(Ln)$ on the size of a temporal $\floor{\log n}$-spanner of $G$.

\begin{corollary}
For a given temporal graph $G$ of $n$ vertices and lifetime $L$, we can build a temporal $\floor{\log n}$-spanner $H$ of $G$ of size $O(Ln)$.
\end{corollary}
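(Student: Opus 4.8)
The plan is to obtain this corollary as an immediate instantiation of the preceding theorem, applied to the classical static-graph spanner construction of Alth\"ofer et al.~\cite{AlthoferDDJS93}. Recall that, for every positive integer $k$, that construction takes a (non-temporal) graph on $n'$ vertices and produces, in polynomial time, a $(2k-1)$-spanner of size $O(n'^{1+1/k})$; in the notation of the theorem this is an $(\alpha,\beta)$-spanner with $\alpha = 2k-1$, $\beta = 0$, and $f(n') = n'^{1+1/k}$.

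First I would feed this algorithm as $\mathcal{A}$ into the theorem. Since $\beta = 0$, the additive term $L\beta$ in the conclusion vanishes, so the theorem yields, for any temporal graph $G$ with $n$ vertices and lifetime $L$, a temporal $(2k-1)$-spanner $H$ of $G$ of size $O(L \cdot n^{1+1/k})$, computable in polynomial time.

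Next I would choose $k$ so that the stretch drops to at most $\floor{\log n}$ while $n^{1/k}$ stays constant. Concretely, take any positive integer $k$ with $k=\Theta(\log n)$ and $2k-1\le\floor{\log n}$, for instance $k = \floor{(\floor{\log n}+1)/2}$. A temporal $(2k-1)$-spanner is then in particular a temporal $\floor{\log n}$-spanner, since requiring a larger stretch factor is a weaker condition. For this $k$ we have $n^{1/k} = 2^{(\log n)/k} = 2^{\Theta(1)} = O(1)$, so the size bound $O(L \cdot n^{1+1/k})$ collapses to $O(Ln)$, as claimed. (The finitely many small values of $n$ for which such a $k$ would be degenerate are absorbed into the constants.)

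There is essentially no obstacle here: the only thing to verify is that the chosen $k$ simultaneously makes the stretch fit under $\floor{\log n}$ and makes $n^{1/k}$ a constant, which is exactly the trade-off point $k=\Theta(\log n)$ already exploited elsewhere in the paper; polynomial running time is inherited from both the theorem's reduction and from $\mathcal{A}$.
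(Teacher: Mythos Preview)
Your proposal is correct and follows exactly the paper's approach: apply the preceding theorem with $\mathcal{A}$ the Alth\"ofer et al.\ $(2k-1)$-spanner construction (so $\beta=0$), then specialize $k=\Theta(\log n)$. In fact you are slightly more careful than the paper, which simply sets $k=\floor{\log n}$ (giving stretch $2\floor{\log n}-1$) whereas you pick $k=\floor{(\floor{\log n}+1)/2}$ to make the stretch literally at most $\floor{\log n}$; either way the size bound is $O(Ln)$.
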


\subsubsection{Temporal cliques}

In the following we show how to build a temporal $2$-spanner of size $O(n\log n)$ for temporal cliques $G$ of lifetime $L=2$. We observe that such result cannot be extended to larger values of $L$ due to the $\Theta(n^2)$ lower bound of  Lemma~\ref{alg:3-spanner} that holds for temporal cliques of lifetime $3$. We also show how to build  a temporal $3$-spanner of size $O(2^L n \log n)$ for temporal cliques with any lifetime $L$. For $L = o(\log n)$, this bound is better than the $O(n\sqrt{n \log n})$ bound given in Theorem~\ref{thm:upper_bound_all_to_all}.

\paragraph{Lifetime \texorpdfstring{\boldmath $2$}{2}}

We describe the algorithm that builds a temporal $2$-spanner $H$ of a temporal clique $G$ with lifetime $L=2$. The formal description of the algorithm is given in Algorithm~\ref{alg:2-spanner}.

We build $H$ incrementally, starting from a graph with no edges. The algorithm uses two sets $X$ and $Y$, both initially set to $V$, that model the set of pairs $\{(x,y) \mid x \in X, y \in Y, x \neq y\}$ that need to be {\em covered} by the algorithm, where a pair $(x,y)$ is covered once we add to $H$ a temporal path from $x$ to $y$ of length of at most 2. At each iteration, we add to $E(H)$ a set of $O(n)$ edges forming a star graph that reduces the number of uncovered pairs by a constant factor. This clearly implies that the while-loop is executed $O(\log n)$ times and, therefore, that the overall number of edges we add to $H$ is $O(n \log n)$. 

The star we select at each step is either centered at a vertex $x^* \in X$ or at a vertex $y^* \in Y$. In the former case we cover all pairs in $\{(x,y) \mid x \in X, y \in Y^*, x \neq y\}$, for a suitable choice of $Y^* \subseteq Y$ of size of at least $(|Y|-2)/2$ (this allows us to update $Y$ by removing $Y^*$ from it). Similarly, in the latter case we cover all pairs in $\{(x,y) \mid x \in X^*, y \in Y, x \neq y\}$, for a suitable choice of $X^* \subseteq X$ of size of at least $(|X|-2)/2$ (this allows us to update $X$ by removing $X^*$ from it).

At the end of the while loop there are at most $O(n)$ uncovered pairs $(x,y)$. These pairs are covered  via the addition of the corresponding edges to $H$. We now prove the correctness of the algorithm.

\begin{algorithm}[t] \small
   	\caption{\small Temporal $2$-spanner of a temporal clique with lifetime $2$.}
\label{alg:2-spanner}
  	
    \SetKwInOut{Input}{Input}
    \SetKwInOut{Output}{Output}

    \Input{A  temporal clique $G$ with lifetime $2$.}
    \Output{A temporal $2$-spanner $H$ of $G$ having size $O(n \log n)$.}
    
  	\BlankLine
    $X\gets V$; $Y\gets V$; $H \gets (V, \emptyset)$\;
    \While{$|X| > 2$ or $|Y| > 2$}
    {
        \If{there is $x^* \in X$ s.t. $Y^*=\{y \in Y\setminus \{x^*\} \mid \lambda(x^*,y)=2\}$ satisfies $|Y^*| \geq \frac{|Y|-2}{2}$}
        {
            $E(H)\gets E(H) \cup \{(x,y) \mid x \in X, y \in Y^*, x\neq y\}$\;
            $Y \gets Y \setminus Y^*$\;
        }\Else{
            Let $y^* \in Y$ s.t.             $X^*= \{x \in X\setminus\{y^*\} : \lambda(y^*,x) = 1\}$ satisfies $|X^*|\geq \frac{|X|-2}{2}$\;
            $E(H)\gets E(H) \cup \{(x,y) \mid x \in X^*, y \in Y, x\neq y\}$\;
            $X \gets X \setminus X^*$\;
        }
    }
    $E(H) \gets E(H) \cup \{(x,y) \mid x \in X, y \in Y, x \neq y\}$\;
    \Return $H$\;
\end{algorithm}

\begin{lemma}\label{lemma:lifetime_2}
At each iteration of the while-loop of Algorithm~\ref{alg:2-spanner}, at least one of the following two conditions holds:
\begin{enumerate}[(a)]
    \item there exists $x^* \in X$ such that the set $\{y \in Y \setminus \{x^*\} \mid \lambda(x^*,y)=2\}$ has size at least $(|Y|-2)/2 $;
    \item there exists $y^* \in Y$ such that the set $\{x \in X \setminus \{y^*\} \mid \lambda(x,y^*)=1\}$ has size at least $(|X|-2)/2$.
\end{enumerate}
\end{lemma}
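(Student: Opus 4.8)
The plan is to argue by contradiction. Suppose that at some iteration of the while-loop neither (a) nor (b) holds, and derive a contradiction via a double-counting argument on the bipartite-like structure of pairs in $X \times Y$.

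First I would set up the counting. Since $L = 2$, every edge $(x,y)$ of the temporal clique has $\lambda(x,y) \in \{1,2\}$. Consider the set $P = \{(x,y) \mid x \in X,\ y \in Y,\ x \neq y\}$ of pairs we still need to cover, and split it according to the label: let $P_1 = \{(x,y) \in P \mid \lambda(x,y) = 1\}$ and $P_2 = \{(x,y) \in P \mid \lambda(x,y) = 2\}$, so $|P_1| + |P_2| = |P| \geq (|X|-2)(|Y|-2)$ roughly (being careful about the at most two coincident vertices $x=y$, but this only costs lower-order terms). If (a) fails, then for \emph{every} $x \in X$ the number of $y \in Y\setminus\{x\}$ with $\lambda(x,y)=2$ is less than $(|Y|-2)/2$; summing over all $x \in X$ gives $|P_2| < |X|\cdot(|Y|-2)/2$. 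Symmetrically, if (b) fails, then for every $y \in Y$ the number of $x \in X\setminus\{y\}$ with $\lambda(x,y)=1$ is less than $(|X|-2)/2$; summing over $y \in Y$ gives $|P_1| < |Y|\cdot(|X|-2)/2$. Adding these two bounds yields $|P| = |P_1| + |P_2| < |X|(|Y|-2)/2 + |Y|(|X|-2)/2 \le |X||Y| - |X| - |Y| < |X||Y|$, which needs to be compared with the lower bound on $|P|$.

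The key step is to make the two bounds genuinely contradictory. Here I need $|P|$ to be \emph{at least} roughly $|X||Y|/1$ while the failure of (a) and (b) forces it to be strictly less than roughly $|X||Y|/1$ as well — so I must track constants carefully. The cleaner route: since at least one of $|X| > 2$, $|Y| > 2$ holds in the loop, I may assume w.l.o.g.\ (by the symmetry of the roles of $X$ with label $1$ and $Y$ with label $2$) that, say, $|X| \ge 3$; then $|P| \ge (|X|-1)\cdot\max\{|Y|-1, \text{something}\}$ and one checks $|P| \ge |X|(|Y|-2)/2 + |Y|(|X|-2)/2$ fails. Concretely, from $|P_2| < |X|(|Y|-2)/2$ and $|P_1| < |Y|(|X|-2)/2$ one gets, for each fixed pair with $x \ne y$, the pair lies in exactly one of $P_1, P_2$, so counting the number of pairs $(x,y)$ with $x \in X, y \in Y, x \ne y$ directly: it is at least $|X|(|Y|-1) - |X|$ if $X \subseteq Y$ (worst case overlap), and one verifies $|X|(|Y|-1)-|X| \ge |X|(|Y|-2)/2 + |Y|(|X|-2)/2$ whenever $|X|,|Y| \ge 3$, and handles the small cases $|X| \le 2$ or $|Y| \le 2$ separately (if $|X| \le 2$ then only $|Y| > 2$ can force another iteration, and then the pigeonhole on $|Y|$ with the two vertices of $X$ giving label-$1$ or label-$2$ edges immediately yields (a) or (b)).

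The main obstacle I expect is bookkeeping the $\pm 2$ corrections — the subtracted $2$'s in $(|X|-2)/2$ and $(|Y|-2)/2$, plus the need to exclude the at most $\min\{|X|,|Y|\}$ diagonal pairs $x = y$ — and verifying the resulting integer inequality holds in \emph{all} regimes allowed by the loop guard, including the boundary cases where one of $|X|, |Y|$ is exactly $3$ or is $\le 2$. Once the clean inequality $|X|(|Y|-1) \ge |X|(|Y|-2)/2 + |Y|(|X|-2)/2 + |X|$ (or a similarly massaged version) is established for the relevant ranges, the contradiction is immediate and the lemma follows. I would present the contrapositive form to keep the algebra as transparent as possible.
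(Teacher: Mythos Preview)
Your approach is essentially the same as the paper's: both are averaging/double-counting arguments on the label-$1$ versus label-$2$ pairs between $X$ and $Y$. The paper phrases it as a direct implication rather than a contradiction: assuming (a) fails, for every $x\in X$ the complementary set $\{y\in Y\setminus\{x\}:\lambda(x,y)=1\}$ has size at least $|Y|/2$, so the total number $K$ of label-$1$ pairs is at least $|X|\,|Y|/2$; averaging over $y\in Y$ then yields some $y^*$ with at least $|X|/2>(|X|-2)/2$ label-$1$ neighbours in $X$, which is (b).

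Your contradiction version is fine in spirit, but you are making the bookkeeping much harder than necessary and one of your intermediate counts is off. The exact identity is $|P|=|X|\cdot|Y|-|X\cap Y|$ (not $(|X|-2)(|Y|-2)$, and in the case $X\subseteq Y$ it is $|X|(|Y|-1)$, not $|X|(|Y|-1)-|X|$). Plugging this into your own upper bound $|P|<|X|\,|Y|-|X|-|Y|$ immediately gives $|X|+|Y|<|X\cap Y|\le\min(|X|,|Y|)$, a contradiction with no restriction on $|X|,|Y|$ whatsoever. So the case split on the loop guard, the ``small $|X|$ or $|Y|$'' analysis, and the worry about $\pm 2$ corrections can all be dropped.
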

\begin{proof}
We assume that (a) does not hold and prove that (b) must hold. Let $K$ be the number of edges with time-label $1$ of the form $(x,y)$, with $x \in X$ and $y \in Y$. If for every $x^* \in X$ the set $\{y \in Y \setminus \{x^*\} \mid \lambda(x^*,y)=2\}$ has size strictly smaller than $(|Y|-2)/2$, then the complementary set $\{y \in Y \setminus \{x^*\} \mid \lambda(x^*,y)=1\}$ has size at least $|Y|-(|Y|-2)/2-1\geq |Y|/2$ (the additional $-1$ comes from the fact that $x^*$ might be a vertex of $Y$). Therefore, $K\geq \frac{|X| \cdot |Y|}{2}$. Thus, on average, each vertex $y \in Y$ is connected to at least $K/|Y| > (|X|-2)/2$ vertices of $X$ with an edge of time-label 1. Hence, condition (b) must hold.
\end{proof}

\begin{theorem}
For a given a temporal clique $G$ of $n$ vertices and lifetime $L=2$, Algorithm~\ref{alg:2-spanner} computes a 2-spanner $H$ of $G$ of size $O(n\log n)$.
\end{theorem}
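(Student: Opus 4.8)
The plan is to lean on Lemma~\ref{lemma:lifetime_2} together with a covering invariant for the sets $X$ and $Y$ maintained by Algorithm~\ref{alg:2-spanner}. Since $G$ is a temporal clique, every single edge $(u,v)$ is itself a temporal path, so $d_G(u,v)=1$ for all distinct $u,v$; hence it suffices to prove that the output $H$ satisfies $d_H(u,v)\le 2$ for every pair of distinct vertices, and that $|E(H)|=O(n\log n)$.

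For correctness I would prove, by induction on the iterations of the while loop, the invariant: $H$ currently contains a temporal path of length at most $2$ between every ordered pair $(u,v)$ with $u\ne v$ for which $u\notin X$ or $v\notin Y$. This holds vacuously at the start, when $X=Y=V$. In the inductive step, consider the branch that uses a vertex $x^*\in X$ and its set $Y^*=\{y\in Y\setminus\{x^*\}: \lambda(x^*,y)=2\}$. The edges added in this iteration form a star centered at $x^*$ with $O(n)$ leaves, and the crucial point is that, because the lifetime is $2$, we have $\lambda(x,x^*)\le 2=\lambda(x^*,y)$ for all $x\in X$ and $y\in Y^*$; thus the two-edge walk $x\to x^*\to y$ (or the single edge $(x^*,y)$ when $x=x^*$) is a valid temporal path in $H$, so every pair $(x,y)$ with $x\in X$, $y\in Y^*$ becomes covered — which is exactly what licences the update $Y\gets Y\setminus Y^*$. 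The other branch is symmetric, using $\lambda(x,y^*)=1\le\lambda(y^*,y)$ for $x\in X^*$, $y\in Y$. Since edges are only added and $X,Y$ only shrink, the invariant persists. When the loop ends, the last line of the algorithm inserts every edge of $\{(x,y): x\in X, y\in Y, x\ne y\}$, covering the remaining ordered pairs directly; together with the invariant this yields $d_H(u,v)\le 2$ for all distinct $u,v$, and $H$ is a subgraph of $G$ on vertex set $V$ because $G$ is complete.

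For the size bound I would show the while loop runs $O(\log n)$ times. Lemma~\ref{lemma:lifetime_2} guarantees that whenever more work remains, one of the two branches is applicable with $|Y^*|\ge(|Y|-2)/2$ or $|X^*|\ge(|X|-2)/2$, and executing it shrinks the corresponding quantity from its current value $m$ to at most $(m+2)/2$, that is, by a factor of at least $4/3$ when $m\ge 4$ and down to at most $2$ when $m=3$. Hence each iteration either halts the process (by driving $|X|$ or $|Y|$ down to at most $2$) or reduces $|X|$ or $|Y|$ by a fixed constant factor, so after $O(\log n)$ iterations $\min\{|X|,|Y|\}\le 2$ and the loop ends. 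Each iteration contributes the $O(n)$ edges of one star, and the concluding line adds $|X|\cdot|Y|=O(n)$ edges since $\min\{|X|,|Y|\}\le 2$ at that moment; summing up gives $|E(H)|=O(n\log n)$.

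The step I expect to be the main obstacle is the progress/termination argument rather than the covering part: one has to make sure that at each iteration the algorithm is steered toward reducing whichever of $X,Y$ is still large, instead of repeatedly invoking a branch that is satisfied only trivially and makes no progress, and one has to dispatch the degenerate cases $|X|,|Y|\in\{3\}$ and the situation where one set has already become small. Lemma~\ref{lemma:lifetime_2} is exactly the tool that rules out a stalled loop, so the real content is in applying it carefully; the remaining ingredients — each star having $O(n)$ edges, and the two-hop paths being temporally valid precisely because $L=2$ — are routine checks.
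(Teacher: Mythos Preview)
Your proposal is correct and mirrors the paper's own proof: the covering invariant for $(X,Y)$, the star centered at $x^*$ (resp.\ $y^*$) whose two-hop paths are temporal precisely because $L=2$, and the $O(\log n)$ iteration count obtained from the constant-factor shrinkage supplied by Lemma~\ref{lemma:lifetime_2} are exactly the ingredients the paper uses. You have also correctly singled out termination/progress as the delicate point, and your reading that the loop exits once $\min\{|X|,|Y|\}\le 2$ is the one the paper's size analysis relies on as well.
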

\begin{proof}
To prove the correctness of the algorithm, we need to show that every pair $(x,y)$, with $x,y \in V$ and $x \neq y$ is covered by the algorithm. Let $(x,y)$ be any such pair. We split the proof into two cases.

The first case occurs when there is an iteration of the while-loop such that either $x$ is removed from $X$ or $y$ is removed from $Y$. W.l.o.g., consider the first iteration in which this happens. Consider the case in which $y$ has been removed from $Y$. In this case the algorithm has added to $H$ a star centered at a vertex $x^*$ containing the edge $(x^*,y)$ of time-label 2 and the edge $(x,x^*)$ of time-label in $\{1,2\}$. The path of length 2 going from $x$ to $y$ via $x^*$ is indeed temporal. Consider now the case in which $x$ has been removed from $X$. In this case the algorithm has added to $H$ a star centered at a vertex $y^*$ containing the edge $(x,y^*)$ of time-label 1 and the edge $(y^*,y)$. Again, the path of length 2 going from $x$ to $y$ via $y^*$ is indeed temporal. In either case, $(x,y)$ is covered.

The second case occurs when neither $x$ is removed from $X$ nor $y$ is removed from $Y$. From Lemma~\ref{lemma:lifetime_2}, at each iteration of the while-loop, the algorithm either removes at least $(|X|-2)/2 \geq 1$ vertices from $X$ or it removes at least $(|Y|-2)/2 \geq 1$ vertices from $Y$. Therefore, the algorithm exits from the while-loop in a finite number of iteration and then adds to $H$ an edge between all pairs $(x,y)$ with $x\in X$ and $y \in Y$, among which there is the edge $(x,y)$ that covers the pair $(x,y)$.
 
We now bound the size of $H$. We observe that the algorithm adds $O(n)$ edges to $H$ during each iteration of the while-loop. Moreover, the number of iterations of the while-loop is $O(\log n)$ because, at each iteration, we have that either the size of $X$ decreases from $|X| \ge 3$ to at most $|X|-\frac{|X|-2}{2} \le \frac{|X|}{6}$, or the size of $Y$ decreases from $|Y| \ge 3$ to at most $\frac{|Y|}{6}$ (see Lemma~\ref{lemma:lifetime_2}).
Finally, the number of edges added to $H$ in the remaining instructions is $O(n)$, since at least one of $X$ and $Y$ has constant size. Therefore, the overall number of edges added to $H$ is $O(n \log n)$.
\end{proof}

\paragraph{Lifetime \texorpdfstring{\boldmath $L$}{L}}
\begin{algorithm}[t] \small
   	\caption{\small Temporal $3$-spanner for temporal cliques with lifetime $L$.}
\label{alg:3-spanner}
  	
    \SetKwInOut{Input}{Input}
    \SetKwInOut{Output}{Output}

    \Input{A  temporal clique $G$ with lifetime $L$.}
    \Output{A temporal $3$-spanner of $G$ having $O(2^L n \log n)$ size.}
    
  	\BlankLine
    $Y\gets V$; $H \gets (V, \emptyset)$\;
    \While{$|Y| > 2^L$}
    {
        Let $t^*$ be the largest integer for which there exists a vertex $x^* \in V$ such that the set $Y^* = \{y \in Y\setminus\{x^*\} \mid \lambda(x,y) = t^*\}$ satisfies $|Y^*| \geq (|Y|-1)/2^{t^*}$\;
        $E(H) \gets E(H) \cup \{ (x^*,y) \mid y \in Y^*\}$\label{ln:add_edges_xstar_y}\;
        \ForEach{$x \in V \setminus \{x^*\}$}{
            Pick $y \in Y^*$ such that $\lambda(x,y) \leq t^*$\tcp*{Vertex $y$ always exists (see Lemma~\ref{lemma:lifetime_L})}
            $E(H) \gets E(H) \cup \{ (x,y) \}$\;
        }
        $Y = Y \setminus Y^*$\;
    }
    $E(H) \gets E(H) \cup \{(x,y) \mid x \in V, y \in Y, x \neq y\}$\label{ln:add_all_missing_edges}\;
    \BlankLine
    \Return $H$\;
\end{algorithm}

We describe the algorithm that computes an $O(2^L n\log n)$-size $3$-spanner $H$ of a temporal clique $G$ with lifetime $L$. The pseudocode is given in Algorithm~\ref{alg:3-spanner}.

The algorithm is similar to the algorithm for temporal graphs $G$ with lifetime $2$. We build $H$ incrementally, starting from a graph with no edges. The algorithm uses a set $Y$, initially set to $V$, that models the set of pairs $\{(x,y) \mid x \in V, y \in Y, x \neq y\}$ that need to be {\em covered} by the algorithm via temporal paths of length of at most 3. At each iteration, we add to $E(H)$ a set of $O(n)$ edges and decrease the number of uncovered pairs from $|Y|$ to roughly $(1-1/2^L) |Y|$ in the worst case.
This implies that the overall number of iterations of the algorithm is $O(2^L \log n)$ and gives the desired bound on the size of the resulting $3$-spanner.

At each iteration the algorithm selects $O(n)$ temporal paths of length $3$ from all vertices in $V$ to a subset $Y^*  \subseteq Y$ such that $|Y^*|\geq (|Y|-1)/2^{L}$. The set $Y^*$ is chosen in this way: let $t^*$ be the largest integer such that there exists a vertex $x^* \in V$ for which the set $Y^*=\{y \in Y\setminus\{x\} \mid \lambda(x,y)=t^*\}$ has size at least $(|Y|-1)/2^t$. The algorithm adds to $H$ the edges $\{(x^*,y) \mid y \in Y^*\}$ and, for each $x \in V\setminus \{x^*\}$, it adds to $H$ an edge $(x,y)$ such that $y \in Y^*$ and $\lambda(x,y) \leq t^*$ (such a vertex $y$ always exists, as we show in the following).

The pairs that are not covered during the while-loop are  at most $O(2^Ln)$, and they are covered via the addition of the corresponding edges to $H$.

\begin{lemma}\label{lemma:lifetime_L}
At each iteration of the while-loop of Algorithm~\ref{alg:3-spanner}, for every  $x \in V$, there exists $t \leq L$  such that the set $\{y \in Y\setminus\{x\} \mid \lambda(x,y)=t\}$ has size at least $(|Y|-1)/2^{t}$. Moreover, if $t^*$ is largest integer for which there exists $x^* \in V$ such that the set $Y^*=\{y \in Y\setminus\{x^*\} \mid \lambda(x^*,y)=t^*\}$ has size at least $(|Y|-1)/2^{t^*}$, then, for each $x \in V\setminus \{x^*\}$, there exists a vertex $y \in Y^*$ such that $\lambda(x,y) \leq t^*$.
\end{lemma}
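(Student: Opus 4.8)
The plan is to prove both parts by the same elementary counting argument on the time-labels of the edges joining a fixed vertex to the current candidate set $Y$, exploiting that $\sum_{t=1}^{L} 2^{-t} = 1-2^{-L} < 1$.

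\emph{First part.} I would fix $x \in V$ and, for each $t \in \{1,\dots,L\}$, set $N_t = \{y \in Y \setminus \{x\} \mid \lambda(x,y) = t\}$. Since $G$ is a temporal clique of lifetime $L$, every edge incident to $x$ has a label in $\{1,\dots,L\}$, so the sets $N_1,\dots,N_L$ form a partition of $Y \setminus \{x\}$ and hence $\sum_{t=1}^{L} |N_t| = |Y \setminus \{x\}| \ge |Y|-1$. If $|N_t| < (|Y|-1)/2^t$ held for every $t$, summing would yield
\[
|Y|-1 \;\le\; \sum_{t=1}^{L} |N_t| \;<\; (|Y|-1)\sum_{t=1}^{L} 2^{-t} \;=\; (|Y|-1)(1-2^{-L}) \;<\; |Y|-1,
\]
using the loop invariant $|Y| > 2^L \ge 2$ for the last strict inequality; this is a contradiction. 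Hence some $t \le L$ satisfies $|N_t| \ge (|Y|-1)/2^t$, which is exactly the first claim, and in particular shows that the integer $t^*$ is well defined with $1 \le t^* \le L$.

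\emph{Second part.} Here I would argue by contradiction: fix $x \in V \setminus \{x^*\}$ and assume that every $y \in Y^*$ with $y \ne x$ satisfies $\lambda(x,y) > t^*$, i.e.\ $Y^* \setminus \{x\} \subseteq B$ where $B = \{y \in Y \setminus \{x\} \mid \lambda(x,y) \ge t^*+1\}$. The key leverage is the maximality of $t^*$: for every label $t' \in \{t^*+1,\dots,L\}$, the set $\{y \in Y \setminus \{x\} \mid \lambda(x,y) = t'\}$ must have size strictly less than $(|Y|-1)/2^{t'}$, since otherwise the pair $(x, t')$ would witness a value larger than $t^*$. Summing this geometric tail gives $|B| < (|Y|-1)\sum_{t'=t^*+1}^{L} 2^{-t'} = (|Y|-1)(2^{-t^*}-2^{-L})$. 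On the other hand, $|Y^*| \ge (|Y|-1)/2^{t^*}$ by the definition of $t^*$, so $|Y^* \setminus \{x\}| \ge (|Y|-1)/2^{t^*} - 1$. Chaining $|Y^*\setminus\{x\}| \le |B|$ with these two estimates yields $(|Y|-1)/2^{t^*} - 1 < (|Y|-1)(2^{-t^*}-2^{-L})$, i.e.\ $(|Y|-1)/2^L < 1$, which contradicts the loop invariant $|Y| > 2^L$. (If $t^* = L$ the set $B$ is empty and the claim is immediate, since every label is at most $L = t^*$ and $Y^*$ is nonempty by $|Y^*| \ge (|Y|-1)/2^L \ge 1$.) This establishes the existence of the desired $y \in Y^*$.

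The only subtle point --- and the one I expect to be the main obstacle --- is the bookkeeping of the additive $\pm 1$ term that appears because it may happen that $x \in Y^*$ (equivalently $\lambda(x^*,x) = t^*$), which costs one unit in the lower bound on $|Y^* \setminus \{x\}|$. The geometric-tail bound on $|B|$ beats $|Y^*|$ only by the slack $(|Y|-1)/2^L$, so it is precisely the while-loop guard $|Y| > 2^L$ that makes this slack large enough to absorb the loss; everything else reduces to a one-line pigeonhole estimate.
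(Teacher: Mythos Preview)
Your proof is correct and follows essentially the same route as the paper: both parts are handled by the geometric-series pigeonhole on the label classes $\{y\in Y\setminus\{x\}:\lambda(x,y)=t\}$, and the second part uses the maximality of $t^*$ to bound the tail $\sum_{t>t^*}$. Your treatment is in fact slightly more careful than the paper's, which simply compares $|Y^*|\ge (|Y|-1)/2^{t^*}$ with the weaker tail bound $|B|<(|Y|-1)/2^{t^*}$ and does not explicitly address the possibility $x\in Y^*$; your use of the sharper tail $(|Y|-1)(2^{-t^*}-2^{-L})$ together with the loop guard $|Y|>2^L$ cleanly absorbs that missing unit.
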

\begin{proof}
We start by proving the first part of the statement. Fix a vertex $x \in V$ and, for any $i$, let $Y_i=\{y \in Y\setminus\{x\} \mid \lambda(x,y)=i\}$. Clearly, $\sum_{i=1}^L |Y_i| \geq |Y|-1$. However, if $|Y_i|<(|Y|-1)/2^i$  for every $i$, then $\sum_{i=1}^L|Y_i| < (1-1/2^L)(|Y|-1)<|Y|-1$. Therefore, there must be at least one value of $t$ for which $|Y_t|\geq (|Y|-1)/2^t$.

We now prove the second part of the statement. 
If $t^*=L$ then any vertex $y \in Y^*$ satisfies $\lambda(x,y)\leq t^*$. Therefore, we assume that $t^*<L$. Let $x \in V\setminus\{x^*\}$ be fixed and let $Y_i=\{y \in Y\setminus\{x\} \mid \lambda(x,y)=i\}$. By the choice of $t^*$ we have that $|Y_i|<(|Y|-1)/{2^i}$ for every $i>t^*$. Therefore, $\sum_{i>t^*}|Y_i| < (|Y|-1)/2^{t^*}$, i.e., the edges incident to $x$ that have a time-label greater than $t^*$ are less than $(|Y|-1)/2^{t^*}$.
Since $|Y^*|\geq (|Y|-1)/2^{t^*}$, there must exist some edge $(x,y)$ with $y \in Y^*$ and $\lambda(x,y)\leq t^*$.
\end{proof}

\begin{theorem}
For a given a temporal clique $G$ of $n$ vertices and lifetime $L$, Algorithm~\ref{alg:3-spanner} computes a 3-spanner $H$ of $G$ of size $O(2^Ln\log n)$.
\end{theorem}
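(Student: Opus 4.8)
The plan is to establish two things: that $H$ covers every vertex pair by a temporal path of length at most $3$, and that $|E(H)| = O(2^L n \log n)$. Since $G$ is a temporal clique we have $d_G(x,y) = 1$ for every $x \neq y$ (the single edge $(x,y)$ is trivially a temporal path), so proving that $H$ is a $3$-spanner amounts to proving $d_H(x,y) \le 3$ for all $x \neq y$.

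For the covering part, I would fix a pair $x \neq y$ and case on whether $y$ is ever removed from $Y$. If $y$ is still in $Y$ when the while-loop of Algorithm~\ref{alg:3-spanner} halts, then line~\ref{ln:add_all_missing_edges} inserts the edge $(x,y)$ directly. Otherwise let $x^*$, $t^*$, $Y^*$ be the center, threshold, and set chosen in the iteration that removes $y$; then $y \in Y^*$, so $\lambda(x^*,y) = t^*$ and the edge $(x^*,y)$ enters $H$ at line~\ref{ln:add_edges_xstar_y}. If $x = x^*$ this already covers the pair. If $x \neq x^*$, the for-each loop processes $x$ and, by the second part of Lemma~\ref{lemma:lifetime_L}, selects some $y_x \in Y^*$ with $\lambda(x,y_x) \le t^*$ (and $y_x \ne x$, since $\lambda(x,y_x)$ is defined), adding $(x,y_x)$ to $H$; the edge $(x^*,y_x)$, with label $t^*$, is also in $H$ because $y_x \in Y^*$. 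I then claim that $x \to y_x \to x^* \to y$ is a temporal path of length at most $3$ in $H$: its three edges carry labels $\lambda(x,y_x) \le t^*$, then $t^*$, then $t^*$, which is non-decreasing, and its vertices are pairwise distinct ($y_x \ne x$; $y_x, y \ne x^*$ since $Y^* \subseteq Y \setminus \{x^*\}$; and $x \ne x^*$). Hence $d_H(x,y) \le 3 = 3\,d_G(x,y)$.

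For the size bound, the key step is controlling the number of while-loop iterations. By the first part of Lemma~\ref{lemma:lifetime_L}, for any vertex there is some $t \le L$ with $|\{y \in Y \setminus \{x\} : \lambda(x,y) = t\}| \ge (|Y|-1)/2^t$, so the integer $t^*$ picked by the algorithm satisfies $t^* \le L$ and therefore $|Y^*| \ge (|Y|-1)/2^{t^*} \ge (|Y|-1)/2^L$. While the loop runs we have $|Y| > 2^L$, so $|Y| \ge 2$ and $|Y|-1 \ge |Y|/2$, giving $|Y^*| \ge |Y|/2^{L+1}$; thus each iteration multiplies $|Y|$ by a factor at most $1 - 2^{-(L+1)}$, and after $j$ iterations $|Y| \le n\,e^{-j/2^{L+1}}$, which forces termination within $O(2^L \log n)$ iterations. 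Each iteration adds at most $|Y^*| \le n$ edges of the form $(x^*, y)$ plus one edge per vertex of $V \setminus \{x^*\}$, i.e.\ $O(n)$ edges, so the while-loop contributes $O(2^L n \log n)$ edges; and at termination $|Y| \le 2^L$, so the final line~\ref{ln:add_all_missing_edges} adds at most $n|Y| = O(2^L n)$ edges. Summing gives $|E(H)| = O(2^L n \log n)$.

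The part I expect to be the most delicate is the verification that the length-$3$ detour is a legitimate temporal path — checking simultaneously that all three of its edges have actually been inserted (via line~\ref{ln:add_edges_xstar_y} and the ensuing for-each), that its vertices are distinct, and that the labels are non-decreasing, this last point resting entirely on the inequality $\lambda(x,y_x) \le t^*$ that the second statement of Lemma~\ref{lemma:lifetime_L} provides. Everything else — the geometric shrinkage of $|Y|$ and the per-iteration edge count — should be a routine computation, the only subtlety being that the loop guard $|Y| > 2^L$ together with $|Y^*| \ge (|Y|-1)/2^L$ guarantees $|Y^*| \ge 1$, so progress is always made and the loop cannot stall.
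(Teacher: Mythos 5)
Your proof is correct and follows essentially the same route as the paper: the same case split on whether $y$ is ever placed in some $Y^*$, the same use of both parts of Lemma~\ref{lemma:lifetime_L} to build the length-$3$ temporal path $x \to y_x \to x^* \to y$, and the same geometric shrinkage of $|Y|$ by a factor $1 - 2^{-(L+1)}$ to bound the number of iterations by $O(2^L\log n)$. Your version is a little more careful than the paper's in spelling out vertex distinctness and in explicitly accounting for the $O(2^L n)$ edges added after the loop exits (though note you never argue $y_x \ne y$ — if $y_x = y$ the three-edge walk is not a simple path, but then the single edge $(x,y_x)=(x,y)$ already covers the pair, so the conclusion $d_H(x,y)\le 3$ stands).
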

\begin{proof}
To prove the correctness of the algorithm, we need to show that every pair $(x,y)$, with $x,y \in V$ and $x \neq y$ is covered by the algorithm. Let $(x,y)$ be any such pair.
If $y$ there is no iteration of the while loop for which $y \in Y^*$, then
edge $(x,y)$ is explicitly added to $E(H)$ at line~\ref{ln:add_all_missing_edges} of Algorithm~\ref{alg:3-spanner}.
Otherwise, we can focus the (unique) iteration of the while-loop in which $y \in Y^*$. If $x=x^*$, the pair $(x,y)$ is covered since the algorithm adds  edge $(x,y)$ to $E(H)$ in line~\ref{ln:add_edges_xstar_y}. If $x\neq x^*$, the pair $(x,y)$ is covered because, by Lemma~\ref{lemma:lifetime_L}, there is a vertex $y' \in Y^*$ such that $\lambda(x,y')\leq t^*$. Then, the algorithm adds to $H$ the edges $(x^*,y')$ and $(x^*,y)$, both of time-label $t^*$, and the edge $(x,y')$ having time-label at most $t^*$. Therefore, the path of length 3 that goes from $x$ to $y$ via $y'$ and $x^*$ is a temporal path. 

The bound of $O(2^Ln\log n)$ on the size of $H$ follows from the fact that the algorithm adds $O(n)$ edges to $H$ during each iteration, together with the fact that the size of $Y$ is reduced from $|Y| > 2^L \ge 2$ to at most $|Y| - (|Y|-1)/2^L < |Y| - |Y|/2^{L+1} = |Y|(1-\frac{1}{2^{L+1}})$ as ensured by Lemma~\ref{lemma:lifetime_L}.
Then, the size of $Y$ at the end of the $i$-th iteration is at most $n (1-\frac{1}{2^{L+1}})^i \le n e^{-i/2^{L+1}}$ and hence there can be only $O(2^L \log n)$ iterations before the size of $Y$ becomes at most $2^L$ and the while loop ends.

\end{proof}

\section{Other distance measures}

This section is devoted to showing strong lower bounds on the size of temporal $\alpha$-spanners of temporal cliques, for definitions of distance that differ from the one we focused on in the rest of the paper. These distances are natural and have already been considered in other papers (see, e.g., \cite{CalamaiCM21,WuCHKLX14}).

\label{section:distance_measures}
\paragraph{Earliest Arrival Time.} The \emph{arrival time} of a (non-empty) temporal path traversing edges $e_1, e_2, \dots, e_k$ is the time label $\lambda(e_k)$ of the last edge of the path. The \emph{earliest arrival time} (EAT) distance from a vertex $u$ to a vertex $v \neq u$ is the minimum arrival time among all temporal paths from $u$ to $v$ (if $u$ and $v$ are not temporally connected their distance is infinite).
We now show that there are temporal cliques $G$ with $n$ vertices such that, regardless of $\alpha$, any temporal $\alpha$-spanner of $G$ (w.r.t.\ the EAT distance) must contain $\Theta(n^2)$ edges.
In \cite[Theorem~1]{AxiotisF16} it is shown that, for any integer $\eta \ge 2$, there exists a temporal graph $G_\eta$ with $3\eta$ vertices, at least $\eta^2 / 2$ edges, and lifetime smaller than $4\eta$,  such that all pairs of vertices are temporally connected and any connectivity preserving temporal subgraph must contain all but at most $5\eta$ edges.\footnote{For the sake of simplicity, we restated the results in Theorem~1 of \cite{AxiotisF16} using slightly worse constants.}

We can transform $G_\eta$ into a temporal clique $G$ with $n=3\eta$ vertices by adding all missing edges
$e$ and setting $\lambda(e)=M$, for some large value $M$. Consider then any temporal $\frac{M}{4\eta}$-spanner $H$ of $G$. For any pair of vertices $u,v$, $H$ must contain a temporal path from $u$ to $v$ with arrival time smaller than $\frac{M}{4\eta} \cdot 4\eta = M$, thus implying that $\pi$ cannot use any edge in $E(G) \setminus E(G_\eta)$. As a consequence, the existence of $H$ implies the existence of a connectivity-preserving temporal subgraph $H_\eta$ of $G_\eta$ such that $|E(H_\eta)| \le |E(H)|$. Since $|E(H_\eta)|$ must be at least $\eta^2 / 2 - 5\eta$, we have $|E(H)| \ge \eta^2 / 2 - 5\eta = \Omega(n^2)$.

We remark that a slight modification of the above construction also provides a lower bound of $\Omega(n^2)$ on the size of any temporal $n$-spanner also when we require the set of all time-labels to form a continuous interval from $1$ to the lifetime of $G$ (a qualitative example of this modification is shown in Figure~\ref{fig:eat_ldt_lower_bounds}~(a)).
Indeed, we can choose $M=72\eta^2$ and augment $G$ by adding $12\eta+1$ additional vertices $z_1, \dots, z_{12\eta+1}$ and all the missing edges. Since there are
more than $72\eta^2$ edges between pairs of vertices $z_i$, we can assign each time label between $1$ and $M$ to one such edge. All remaining edges have time-label $M$. The number of vertices of this modified construction is $n=15\eta+1$ and we have $\frac{M}{4\eta} = 18 \eta > n$, showing that any temporal $n$-spanner of (the augmented version of) $G$ must contain at least $\eta^2 / 2 - 5\eta = \Omega(n^2)$ edges.

\paragraph{Latest Departure Time.} The \emph{departure time} of a (non-empty) temporal path traversing edges $e_1, e_2, \dots, e_k$ is $\lambda(e_1)$ and the \emph{latest departure time} (LDT) distance from a vertex $u$ to a vertex $v \neq u$ is the maximum departure time  among all temporal paths from $u$ to $v$ (if $u$ and $v$ are not temporally connected their distance is $-\infty$). 

\begin{figure}
    \centering
    \includegraphics{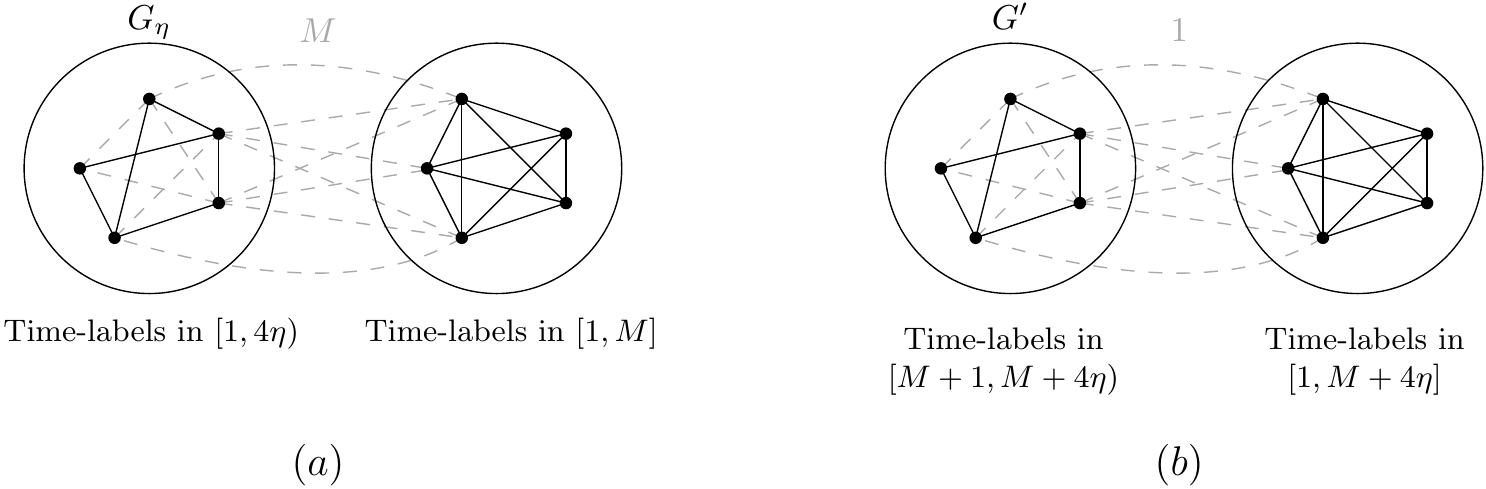}
    \caption{(a) A qualitative representation of our lower bound of $\Omega(n^2)$ on the size of any temporal $n$-spanner of a temporal clique w.r.t.\ the EAT distance. The solid black edges on the left side are exactly the graph $G_\eta$ constructed as in \cite[Theorem~1]{AxiotisF16}, and have time-labels in $[1, 4\eta)$. The clique on the right uses all time-labels in $[1, M]$ where $M=72 \eta^2$. All remaining edges have time-label $M$ (to avoid clutter, only some of these edges are shown using dashed gray lines). (b) A qualitative representation of our lower bound of $\Omega(n^2)$ on the size of any temporal $n$-spanner of a temporal clique w.r.t.\ the LDT distance. The construction is similar to the one used in (a), except that the time-labels in $G'$ are shifted by $M=\eta^2$, the clique on the right uses all time-labels in $[2, M+4\eta]$, and the remaining edges have time-label $1$.}
    \label{fig:eat_ldt_lower_bounds}
\end{figure}

It is known \cite{CalamaiCM21} that it is possible to reassign the time-labels of a temporal graph $G$ to obtain a new temporal graph $G'$ such that a temporal path from $u$ to $v$ with the latest departure time in $G$ corresponds to a temporal path from $v$ to $u$ with the earliest arrival time in $G$, and vice-versa. 
We can then use this transformation on the previous construction on the EAT distance to provide a lower bound of $\Omega(n^2)$ on the number of edges needed by temporal preservers of temporal cliques.

Considering approximate distances, we observe that (contrarily to the other distance measures in this section), a suboptimal path from $u$ to $v \neq u$ has a departure time that is \emph{smaller} than the LDT distance from $u$ to $v$, that is a path becomes more desirable as its departure time increases. We can then update our definition of temporal $\alpha$-spanner as follows: A subgraph $H$ of a temporal graph $G$ is a temporal $\alpha$-spanner of $G$ if, for every pair of distinct vertices $u,v$, it holds $d^{(LDT)}_H(u,v) \ge d^{(LDT)}_G(u,v) / \alpha$, where $d_H^{(LDT)}(u,v)$ (resp. $d_G^{(LDT)}(u,v)$) denotes the LDT distance from $u$ to $v$ in $H$ (resp. $G$).

We can once again use the lower bound construction of \cite{AxiotisF16} to show that there are temporal cliques $G$ on $n$ vertices for which all $\alpha$-spanner (regardless of $\alpha$) must have size $\Omega(n^2)$. Let $M$ be a large enough integer and consider the graph $G_\eta$ from \cite[Theorem~1]{AxiotisF16}, as in the discussion for the EAT distance. We consider the graph $G'$ obtained from $G_\eta$ by shifting all time-labels by $M$ so that they are between $M+1$ and $M+4\eta$ (clearly, this does not alter the lower bound of $\Omega(\eta^2)$ on the size of any connectivity-preserving temporal subgraph of $G$).
Our temporal clique $G$ is obtained from $G'$ by adding all missing edges $e$ and setting $\lambda(e)=1$. Then, for any pair of distinct vertices $u,v$, we have $d_G^{(LDT)}(u,v) > M$ showing that any temporal $M$-spanner $H$ of $G$ must contain a temporal path from $u$ to $v$ that does not use any edge with time-label $1$. As a consequence, the edges in $E(H) \cap E(G')$ induce a connectivity-preserving temporal subgraph of $G'$ and hence we must have $|E(H)| \ge  \eta^2/2 - 5 \eta = \Theta(n^2)$.

If we insist on using all time-labels between $1$ and the lifetime of $G$, we can employ a modification similar to the one used for the EAT distance. We are then able to show that there are temporal cliques $G$ on $n$ vertices for which $\Omega(n^2)$ edges are needed by any temporal $o(n^2)$-spanner of $G$. To do so we can set $M=\eta^2$ and augment the temporal graph $G$ constructed above with $\lceil \sqrt{2(\eta^2 + 4\eta)} \rceil +1 = \Theta(\eta)$ additional vertices. We add all edges between these new vertices and assign all time-labels between $1$ and $M+4\eta$ to them. We complete the graph using edges with time-label $1$. See Figure~\ref{fig:eat_ldt_lower_bounds}~(b) for a qualitative example.

\paragraph{Fastest Time.} The \emph{duration} of a (non-empty) temporal path traversing edges $e_1, e_2, \dots, e_k$ is the difference $\lambda(e_k) - \lambda(e_1)$ between the time-labels of the last and first temporal edges of the path. The \emph{fastest time} (FT) distance from a vertex $u$ to a vertex $v \neq u$ is the minimum duration among all temporal paths from $u$ to $v$ (if $u$ and $v$ are not temporally connected their distance is infinite). 
According to this definition, the duration of a path consisting of a single temporal edge is $0$. Hence, regardless of the value of $\alpha$, all temporal $\alpha$-spanner (w.r.t.\ the FT distance) must contain all $\Theta(n^2)$ edges of any temporal clique on $n$ vertices in which each temporal edge has a distinct time label.

Other models of temporal graph assign both a time-label $\lambda(e)$ and a non-negative travel time $\tau(e)$ to each temporal edge $e$. Here a path traversing edges $e_1, e_2, \dot, e_k$ is temporal if $\lambda(e_i) + \tau(e_i) \le \lambda(e_{i+1})$ for all $i=1, \dots, k-1$, and its duration is defined as $\lambda(e_k) + \tau(e_k) - \lambda(e_1)$. The above lower bound corresponds to the case in which $\tau(e)=0$ for all edges $e$. We observe that even if we restrict ourselves to temporal cliques in the case in which all travel times are positive, we still have a lower bound of $\Omega(n^2)$ on the size of any connectivity preserving subgraph, as it can be seen by considering a temporal clique in which all edges have time-label $1$.

\paragraph{Shortest Time.} 

If each temporal edge $e$ has an associated non-negative travel time $\tau(e)$, as discussed above, then it is possible to extend the concept of travel time to temporal paths. Specifically, the \emph{travel time} of a temporal path traversing edges $e_1, e_2, \dots, e_k$ is defined as $\sum_{i=1}^k \tau(e_i)$. The \emph{shortest time} (ST) distance from a vertex $u$ to a vertex $v \neq u$ is the minimum travel time among all temporal paths from $u$ to $v$ (if $u$ and $v$ are not temporally connected their distance is infinite). 
Also in this case, any connectivity preserving subgraph of a temporal clique in which all temporal edges have positive travel times and time-label $1$ must contain all $\Omega(n^2)$ edges.

\section{Conclusions}

In this paper we addressed the size-stretch trade-offs for temporal spanners. We showed that a temporal clique admits a temporal $(2k-1)$-spanner of size $\softO(n^{1+\frac{1}{k}})$, which implies a spanner having size $\softO(n)$ and stretch $O(\log n)$.
The previous best-known result was the temporal-spanner of \cite{CasteigtsPS21} which only preserves temporal connectivity between vertices. Our construction guarantees $O(\log n)$-approximate distances at the cost of only an additional $O(\log n)$ multiplicative factor on the size.
We also considered the single-source case for \emph{general} temporal graphs, where we provided almost-tight size-stretch trade-offs, along with the special case of temporal graphs with bounded lifetime.

The main problem that remains open is understanding whether better trade-offs are achievable for temporal cliques. In particular, no superlinear lower bounds are known even for the case of $3$-spanners.

Finally, as we already mentioned, temporal graphs admit other natural notions of distances between vertices  (which have have been used, e.g., in~\cite{MertziosMS19,WuCHKLX14,HuangFL15}). The most commonly used distances are the \emph{earliest arrival time}, the \emph{latest departure time}, the \emph{fastest time} (i.e., the smallest difference between the arrival and departure time of a temporal path from $u$ to $v$), and ---if each edge has an associated travel time--- the \emph{shortest time} distance (i.e., the minimum sum of the travel times of the edges of a temporal path from $u$ to $v$). 
One can wonder whether sparse temporal spanners with low stretch are attainable also in the case of the above distances. Section~\ref{section:distance_measures} provides a negative answer by showing strong lower bounds on the size of temporal $\alpha$-spanners for temporal cliques, even for large values of $\alpha$.

\bibliographystyle{plainurl}
\bibliography{bibliography}

\end{document}